\theoremstyle{plain}
\newtheorem{theorem}{Theorem}
\newtheorem{lemma}{Lemma}
\newtheorem{proposition}{Proposition}
\newtheorem{cor}{Corollary}
\theoremstyle{remark}
\newtheorem{remark}{Remark}
\newtheorem*{remark*}{Remark}
\theoremstyle{definition}
\newtheorem*{definition*}{Definition}
\newtheorem{definition}{Definition}
\newtheorem{hypothesis}{Hypothesis}
\newtheorem{example}{Example}
\newtheorem*{problem*}{Problem}
\def\C{\mathbb{C}}
\def\N{\mathbb{N}}
\def\R{\mathbb{R}}
\def\Z{\mathbb{Z}}
\def\P{\mathbb{P}}
\def\k{\kappa}
\def\U{U^{-*}}
\def\H{\mathcal{H}}
\def\W{\mathcal{W}}
\def\E{\mathbb{E}}
\def\T{\mathbb{T}}
\def\supp{{\rm supp}}
\def\u{\upsilon}
\def\s{b}
\def\Div{{\rm div}}
\def\sgn{{\rm sgn}}
\def\xk{\xi_l}
\def\O{\mathcal{O}}
\def\L{Q}
\renewcommand{\epsilon}{\varepsilon}
\def\Ree{\mathrm{Re}}
\newcommand{\norm}[1]{{\left\Vert {#1}\right\Vert}}
\begin{document}
\title[Infinite dimensional anisotropic compressed sensing]{Infinite dimensional compressed sensing from anisotropic measurements and applications to inverse problems in PDE}

\author{Giovanni S. Alberti}
\address{Department of Mathematics, University of Genoa, Via Dodecaneso 35, 16146 Genova, Italy.}
\email{giovanni.alberti@unige.it}

\author{Matteo Santacesaria}
\address{Department of Mathematics, University of Genoa, Via Dodecaneso 35, 16146 Genova, Italy.}
\email{matteo.santacesaria@unige.it}

\subjclass[2010]{94A20, 94A08, 42C40, 35R30}

\date{May 25, 2019}

\begin{abstract}
We consider a compressed sensing problem in which both the measurement and the sparsifying systems are assumed to be frames (not necessarily tight) of the underlying Hilbert space of signals, which may be finite or infinite dimensional. The main result gives explicit bounds on the number of  measurements in order to achieve stable recovery, which depends on the mutual coherence of the two systems. As a simple corollary, we prove the efficiency of nonuniform sampling strategies in cases when the two systems are not incoherent, but only asymptotically incoherent, as with the recovery  of wavelet coefficients from Fourier samples. This general framework finds applications to inverse problems in partial differential equations, where the standard assumptions of compressed sensing are often not satisfied. Several examples are discussed, with a special focus on electrical impedance tomography.
\end{abstract}

\keywords{Compressed sensing, inverse problems, sparse recovery, coherence, wavelets, nonuniform Fourier sampling, electrical impedance tomography,  photoacoustic tomography, thermoacoustic tomography, observability of the wave equation}

\maketitle

\tableofcontents

\section{Introduction}

The recovery of a sparse signal from a small number of samples is the fundamental question of compressed sensing {(CS)}. A signal is called sparse if it can be expressed as a linear combination of a small number of known vectors. The seminal papers \cite{CRT,2006-donoho} have triggered an impressive amount of research in the last decade, from real world applications (MRI, X-ray tomography, etc.) to theoretical generalizations in broader mathematical frameworks \cite{FR}.

In the finite dimensional case, the general {CS} problem can be stated as follows. Given an unknown {sparse} vector $x_0 \in \C^N$ and a measurement operator represented by a matrix $U\in\C^{N\times N}$, we want to reconstruct $x_0$ from samples of the form $(U x_0)_l$, for $l \in \Omega \subseteq \{1,\ldots,N\}$. This is done by solving the convex optimization problem
\begin{equation}\label{def:l1in}
\min_{ x \in \C^N} \|  x \|_{\ell^1} \quad \text{subject to } P_\Omega U x = P_\Omega U x_0,
\end{equation}
where $P_\Omega$ is the projection matrix on the entries indexed by $\Omega$. It is natural to ask under what conditions the solution of the minimization problem \eqref{def:l1in} coincides with $x_0$. These can be formulated as a lower bound on the number of measurements $m = |\Omega|$, which depends on the sparsity of the signal $s = |\supp( x_0)|$, the dimension $N$ of the ambient space, and the matrix $U$. An interesting feature is that the lower bound on $m$ does not guarantee {exact recovery} for all set of indices $\Omega \subseteq \{1,\ldots,N\}$ with $|\Omega|= m$, but only for \textit{most} of them. 

One of the first contributions \cite{CRT} considered the case where $U$ is the discrete Fourier transform: exact recovery is guaranteed with high probability provided that $\Omega \subseteq\{1,\dots,N\}$ is selected uniformly at random with $m\gtrsim s\log N$.
If $U$ is a general unitary transformation, the problem has been addressed for the first time in \cite{2007-candes}, introducing the \textit{coherence} $\mu = \max_{l,j}|U_{lj}|$. In this case, {the bound becomes $m\gtrsim s \mu^2 N \log N$. }

Similar results have been recently obtained in the infinite dimensional setting, where one considers signals belonging to a separable Hilbert space $\H$ and the measurement operator is represented as a bounded linear map $U\colon\H\to\ell^2(\N)$. (Note that $U$ may  be expressed by scalar products with a family of vectors $\{\psi_l\}_l\subseteq\H$, namely $(Uf)_l=\langle f,\psi_l\rangle_\H$.) The sparsity of a signal $f\in\H$ is characterized by the sparsity of $Df$, where  $D\colon\H\to\ell^2(\N)$, $f\mapsto (\langle f,\varphi_j\rangle_\H)_j$, is the analysis operator associated with a family of vectors $\{\varphi_j\}_j\subseteq\H$. The first results in this framework were presented in \cite{AH}, in the case where both $U$ and $D$ are unitary operators, i.e.\ correspond to orthonormal bases; the orthonormality of $\{\psi_l\}_l$ is a standard assumption taken in virtually all works on CS with deterministic measurements. These results were further extended in \cite{2017-adcock-hansen-poon-roman}, introducing the more advanced concepts of asymptotic incoherence, local coherence, and local sparsity. An additional improvement was given in \cite{Poon2015}, which deals with the case where {$\{\varphi_j\}_j$} is a Parseval frame {(see also \cite{2012-liu,2015-krahmer-needell-ward,2016-giryes})}. Allowing the system $\{\varphi_j\}_j$ to be a frame, i.e.\ $D$ is not necessarily invertible, is useful whenever we wish to use a redundant representation to sparsify the signals in $\H$ (e.g.\ redundant wavelets \cite{2005-fowler}, curvelets
\cite{2004-curvelets}, ridgelets \cite{1999-ridgelets} and
shearlets \cite{2005-labate-lim-kut-weiss,2011-kutyniok-lim,2012-kittipoom-kutyniok-lim}). 

In a large number of inverse problems, where one does not have complete freedom in the measurement process, the assumption on $U$ being unitary is not verified, thereby preventing the application of CS to many domains. As a result, two large research areas as inverse problems in partial differential equations (PDE) and sparse signal recovery  have  been almost completely separated so far. The purpose of this paper is to provide a solid foundation that is expected to allow a fruitful interaction between these two domains.

In order to do so, in this work we present a very general CS result that deals with any bounded and injective linear operators $U$ and $D$, defined on any separable complex Hilbert space (finite or infinite dimensional). Equivalently, the families $\{\psi_l\}_l$ and $\{\varphi_j\}_j$ are simply required to be frames of $\H$ (not necessarily tight). Since we do not need the measurement operator $U$ to be unitary, our results cover the case of \textit{anisotropic} measurements. These have  already been studied in the finite dimensional case using random and not deterministic measurements in \cite{KG}. As far as we know, our result is new also in the finite dimensional case. We consider the analysis formulation of the $\ell^1$ optimization problem, in contrast to the synthesis formulation given for simplicity in \eqref{def:l1in}.

Another generalization is related to the sampling strategy. Recently, it has been observed in several works \cite{2015-krahmer-needell-ward,2017-adcock-hansen-poon-roman} that, when precise bounds for the mutual coherence are available, uniform sampling strategies do not give sharp estimates for the minimum number of measurements. Our techniques are also able to cover this case, also known as \textit{structured} sampling, just as a simple corollary of the main result for the uniform sampling. To our knowledge, this is the first infinite dimensional result under asymptotic incoherence assumptions, where there is no need to use multi-level sampling strategies and local coherence. For instance, we analyze the recovery of wavelet coefficients from Fourier samples, and justify the use of a nonuniform sampling scheme corresponding to the so-called log sampling. This represents only a first step, and we believe that many other interesting estimates may be derived as corollaries of the main general result. 

As mentioned above, our main motivation in dealing with the infinite dimensional anisotropic framework comes from inverse problems arising from partial differential equations \cite{2017-isakov}. These inverse problems are intrinsically infinite dimensional, {and often} the measurement operator cannot be chosen as a unitary transformation. Moreover, in order to obtain a solution to these problems, an infinite number of measurements is often needed, even when the signal to be recovered belongs to a known finite dimensional subspace. CS can thus provide a rigorous, explicit and numerically viable way to find solutions to such problems when only a finite number of measurements is available. We explore applications of our main result to the problems of nonuniform Fourier sampling, (linearized) electrical impedance tomography (EIT)  and photoacoustic tomography. In particular, we show for the first time that an electrical conductivity may be recovered from a number of linearized EIT measurements proportional to the sparsity of the signal with respect to a wavelet basis, up to log factors. Many other inverse problems can be tackled with a similar approach and will be the subject of future work. The interaction between CS and partial differential equations is not limited to inverse problems: CS has been recently used for designing numerical methods for solving PDE \cite{bouchot2017multi,rauhut-schwab-2017,brugiapaglia-etal-2018}.

The plan of the paper is the following. In Section~\ref{sec:mainass} we introduce the mathematical framework of  infinite dimensional CS using the language of frames. We define the mutual coherence for general frames as well as the balancing property. The main result is presented in Section~\ref{sec:thm}, which contains also the main corollary about structured sampling and asymptotic incoherence. Section~\ref{sec:apps} is devoted to  applications  to three inverse problems, while Section~\ref{sec:proof} contains the proof of the main result.

\section{Main assumptions}\label{sec:mainass}

Let $\N$ denote the set of all positive natural numbers. Let $\H$ be a separable complex Hilbert space, representing our signal space, {which may be either finite or infinite dimensional}. The problem we study in this paper is the recovery of  an unknown signal $g_0\in\H$ from partial measurements of the form $(\langle g_0,\psi_l\rangle_\H)_l$, under a sparsity assumption on $g_0$ with respect to a suitable family of vectors $\{\varphi_j\}_j$. The main assumption of this paper is the following: these families of vectors are required to be frames  of $\H$ \cite{2000-casazza,2001-Christensen,2016-christensen}.
\begin{hypothesis}\label{hp1}
{Let $L$ and $J$ be two index sets\footnote{{We say that $I\subseteq\N$ is an index set if $I=\N$ or $I=\{1,2,\dots,n\}$ for some $n\in\N$.}}.} Suppose that $\{\psi_l\}_{l\in L}$ and $\{\varphi_j\}_{j\in J}$ are two frames of $\H$ with frame constants $A_U,B_U>0$ and $A_D,B_D>0$, respectively, namely 
\begin{equation*}
A_U\norm{g}_\H^2\le\sum_{l\in L} |\langle g,\psi_l\rangle_\H|^2\le B_U\norm{g}_\H^2,\qquad 
A_D\norm{g}_\H^2\le\sum_{j\in J} |\langle g,\varphi_j\rangle_\H|^2\le B_D\norm{g}_\H^2,
\end{equation*}
for every $g\in\H$.
\end{hypothesis}
The measurements and the sparsity  are expressed by the analysis operators {$U\colon\H\to\ell^2(L)$ and $D\colon\H\to\ell^2(J)$}, defined by
\begin{equation*}
(Ug)_l = \langle g,\psi_l\rangle_\H, \qquad       (Dg)_j = \langle g,\varphi_j\rangle_\H.
\end{equation*}
By construction, the dual operators are given by $U^*e_l =\psi_l$ and $D^*e_j=\varphi_j$, where $\{e_i\}_{i\in I}$ is the canonical basis of $\ell^2(I)$. By Hypothesis~\ref{hp1}, since $\sum_{l} |\langle g,\psi_l\rangle_\H|^2 = \norm{Ug}_2^2$ and $\sum_{j} |\langle g,\varphi_j\rangle_\H|^2 = \norm{Dg}_2^2$, we have that $U$ and $D$ are bounded and the operator norms satisfy
\begin{equation}\label{eq:normUD}
\norm{U}=\norm{U^*}\le \sqrt{B_U},\qquad 
\norm{D}=\norm{D^*}\le \sqrt{B_D}.
\end{equation}

The recovery problem can then be stated as follows: given noisy partial measurements of $Ug_0$, namely $\zeta = P_\Omega Ug_0+\eta$ for some (finite) set $\Omega\subseteq {L}$, recover the signal $g_0\in\H$, under the  assumption that $Dg_0$ is sparse. 
Here we have used the notation $P_\Omega$ for the orthogonal projection onto  $\overline{\mathrm{span}\{e_j : j\in \Omega\}}$ (if $\Omega=\{1, \ldots, N\}$ we simply write $P_N$). 
The classical way to solve this problem is via $\ell^1$ minimization, namely
\begin{equation}\label{eq:l1}
\inf_{ \substack{
 g \in \H \\  D  g \in \ell^1({J})
}} \Vert D  g \Vert_{1} \quad \text{subject to } \norm{P_\Omega U  g - \zeta}_{2}\le\epsilon,
\end{equation}
where $ \norm{\eta}_2\le\epsilon $ is the noise level.

Equivalently, one may adapt a more abstract point of view, starting from a bounded operator $U\colon\H\to \ell^2(L)$ with bounded inverse. It is immediate to verify that $\psi_l=U^*e_l$ gives rise to a frame, as in Hypothesis~\ref{hp1}. The formulation with $U$ allows to consider any linear inverse problem of the form 
\[
U\colon \H\to\ell^2({L}),\qquad Ug=\zeta.
\]
The only requirement is that, with full data, the inverse problem should be uniquely and stably solvable. In particular, any linear invertible operator $U$ may be considered, and not necessarily isometries as in the standard compressed sensing setting (see Section~\ref{sec:apps}).

\begin{remark*}
The formulation given in \eqref{eq:l1} of the $\ell^1$ optimization problem is the \emph{analysis} approach, because of the minimization of $\Vert D  g \Vert_1$, where $D$ is the analysis operator. This is in contrast with the  \emph{synthesis} formulation
\begin{equation}\label{eq:l1-synthesis}
\inf_{ x\in\ell^1({J})} \Vert x \Vert_{1} \quad \text{subject to } \norm{P_\Omega U  D^* x - \zeta}_2 \le\epsilon.
\end{equation}
In general, the two approaches are not equivalent \cite{2007-analysis_vs_synthesis} (see also \cite{genzel2017}). We have decided to work with the analysis approach since, if $D$ yields a redundant representation, there may be multiple minimizers of \eqref{eq:l1-synthesis} that give the same $D^* x$. Thus uniqueness of minimizers is expected to hold for $g = D^* x$ in \eqref{eq:l1} but not for $x$ in \eqref{eq:l1-synthesis}, which complicates the derivation of the estimates.
\end{remark*}

\begin{remark*}
When $J$ is infinite, the above minimization problem cannot be implemented numerically. When $D$ and $U$ are unitary operators, it was shown in \cite{AH} that this issue may be solved by looking at a corresponding finite-dimensional optimization problem. We expect that the same is true also in our context, and leave this investigation to future work.
\end{remark*}

Given the generality of our setting, we need to consider the dual frames of $\{\psi_l\}_l$ and $\{\varphi_j\}_j$. By classical frame theory (see \cite[Lemma 5.1.5]{2016-christensen}), the \emph{frame operators} $U^*U$ and $D^*D$ are invertible, and we can consider the \emph{dual frames}
\[
\tilde\psi_l := (U^*U)^{-1}\psi_l\qquad\text{and}\qquad 
\tilde\varphi_j := (D^*D)^{-1}\varphi_j,
\]
which have frame constants $B_U^{-1},A_U^{-1}$ and $B_D^{-1},A_D^{-1}$, respectively. Equivalently, we may write $\tilde\psi_l = U^{-1}e_l$ and $\tilde\varphi_j = D^{-1}e_j$, where $U^{-1}$ and $D^{-1}$ are the Moore--Penrose pseudoinverses of $U$ and $D$, respectively, defined as follows:
\begin{equation*}
U^{-1}:=(U^*U)^{-1}U^* \quad \text{and} \quad D^{-1}:=(D^*D)^{-1}D^*.
\end{equation*}
Note that they are left inverses of $U$ and $D$, respectively. 
Therefore, $(U^{-1})^*$ and $(D^{-1})^*$ are the analysis operators of the dual frames, and so arguing as in  \eqref{eq:normUD} we obtain
\begin{equation}\label{eq:normU-D-}
\norm{U^{-1}}=\norm{U^{-*}}\le A_U^{-1/2},\qquad 
\norm{D^{-1}}=\norm{D^{-*}}\le A_D^{-1/2}.
\end{equation}
With an abuse of notation, we have denoted $(U^{-1})^*$ and $(D^{-1})^*$ by $U^{-*}$ and $D^{-*}$, respectively. It can be immediately  checked that they are right inverses of $U^*$ and $D^*$, i.e. $(U^{*})^{-1}=U^{-*}$ and $(D^*)^{-1}=D^{-*}$. For later use, set $\k_1 := \max(B_U,A_U^{-1})$ and $\k_2 := \max(B_D,A_D^{-1})$, so that by \eqref{eq:normUD} and \eqref{eq:normU-D-} we obtain
\begin{subequations}\label{eq:boundk}
\begin{align}\label{eq:boundkU}
&\norm{U}=\norm{U^*}\le \sqrt{\k_1}, 
&&\norm{U^{-1}}=\norm{U^{-*}}\le \sqrt{\k_1}, \\
\label{eq:boundkD}
&\norm{D}=\norm{D^*}\le \sqrt{\k_2},&&\norm{D^{-1}}=\norm{D^{-*}}\le \sqrt{\k_2}.
\end{align}
\end{subequations}

The frames $\{\tilde\psi_l\}_l$ and $\{\tilde\varphi_j\}_j$ are the \emph{canonical} dual frames, but in general many other choices are possible. These are in correspondence with all possible bounded left inverses of $U$ and $D$, and it is possible to  characterize all dual frames  \cite[Section 6.3]{2016-christensen}.

We need to measure the coherence between the sensing system $\{\psi_l\}_l$ and the representation system $\{\varphi_j\}_j$ or, equivalently, between the measurement operator $U$ and the representation operator $D$.

\begin{definition}\label{def:mu}
The \emph{mutual coherence} of $U$ and $D$ is defined by
\[
\begin{split}
\mu&:=\sup_{{j\in J,\,l\in L}}\max\{|\langle \varphi_j,\psi_l\rangle_\H|,
|\langle \tilde\varphi_j,\psi_l\rangle_\H|,
|\langle \varphi_j,\tilde\psi_l\rangle_\H|,
|\langle \tilde\varphi_j,\tilde\psi_l\rangle_\H|  \}\\
&{\small
\text{$
\;=\sup_{{j\in J,\,l\in L}}\max\{|\langle D^*e_j,U^{*}e_l\rangle_\H|,
|\langle D^{-1}e_j,U^{*}e_l\rangle_\H|,
|\langle D^*e_j,U^{-1}e_l\rangle_\H|,
|\langle D^{-1}e_j,U^{-1}e_l\rangle_\H|  \}.
$}}
\end{split}
\]
\end{definition}

Let us now discuss a particular case.

\begin{example}\label{ex:isometry}
The above construction simplifies considerably if $\{\psi_l\}_l$ and $\{\varphi_j\}_j$ are \emph{Parseval} frames, namely if $A_U=B_U=A_D=B_D=1$, as studied in \cite{Poon2015}. In this case the associated analysis operators $U$ and $D$ are isometries, their left inverses simplify to $U^{-1}=U^*$, $D^{-1}=D^*$ and all the operator norms in \eqref{eq:boundk} are simply bounded by $1$. The dual frames and the corresponding frames coincide, and the coherence reduces to
\begin{equation}\label{eq:mu-bases}
\mu= \sup_{{j\in J,\,l\in L}}|\langle \varphi_j,\psi_l\rangle_\H|=\sup_{{j\in J,\,l\in L}} |\langle D^*e_j,U^{*}e_l\rangle_\H| ,
\end{equation}
which simply involves scalar products between the elements of the two frames.

As an even more particular case, one may consider orthonormal bases $\{\psi_l\}_l$ and $\{\varphi_j\}_j$ (namely, $U$ and $D$ are unitary operators), which represents the usual assumption in the classical compressed sensing framework, and in its extension to infinite dimension \cite{AH}.
\end{example}

The partial measurements  $(Ug_0)_l=\langle g_0,\psi_l\rangle_\H$ are indexed by $l\in\Omega$, where $\Omega$ will be chosen uniformly at random in $\{1,\dots,N\}$. In the infinite dimensional case, the upper bound $N$ has to be chosen big enough, depending on the sparsity $s$ of $g_0$. This is quantified by the \emph{balancing property}, introduced in \cite{AH} and extended here to general frames.

\begin{definition}[Balancing property]\label{bp}
Let $s,M \in J$ be such that $2\le  s\le M$. We say that $N \in L$ satisfies the balancing property with respect to $U$, $D$,  $M$ and $s$ if for all $\Delta \subseteq \{1,\ldots,M\}$ with $|\Delta|=s$ we have
\begin{align}
\label{bp1}
&\Vert P_{\W} U^{*} P_N^\perp \U  P_{\W} \Vert_{\H \to \H}  \leq \frac{1}{8 \sqrt{\sqrt{\k_2}\log (s\k_1^2\k_2)} },\\
&\|P_\Delta^\perp D^{-*} P_{\W}^\perp U^* P_N \U P_{\W}\|_{\H \to \ell^{\infty}{(J)}} \leq \frac{1}{14\sqrt{s\k_2}}, \label{bp2} 
\end{align}
where $\W:=R(D^{\ast} P_{\Delta})+R(D^{-1} P_{\Delta})=\{D^*P_\Delta x+D^{-1}P_\Delta y:x,y\in\ell^2{(J)}\}$.
\end{definition}

\begin{remark}\label{rem:balancing}
If $L=\{1,2,\dots,|L|\}$ is finite, it is enough to choose $N=|L|$, since all the norms on the left hand side vanish. If $L=\N$, the existence of a suitable $N$ satisfying the above conditions simply follows by the fact that $P_N\to I$ and $P_N^\perp\to0$ strongly (see \cite[Proposition 5.2]{AH} for the details of the argument). 
\end{remark}

\begin{remark}\label{rem:balancing2}
In many particular cases of practical interest, it is possible to find explicit bounds on $N$. For instance, when $D$ is a discrete wavelet transform (satisfying certain assumptions) and $U$ is the discrete Fourier transform on $\H=L^2([0,1])$ one can prove that $N=O(M\log M)$ \cite{poon2014,2017-adcock-hansen-poon-roman}.
\end{remark}

For $s,M\in {J}$, $s\le M$, we use the notation 
\[
\sigma_{s,M}(x_0):=\inf\{\norm{x-x_0}_{\ell^1 {(J)}}: \supp(x)\subseteq\{1,\dots,M\},\, |\supp(x)|\le s\},
\]
which measures the compressibility of the signal $x_0\in\ell^1 {(J)}$ by means of $s$-sparse signals $x$. As in \cite{2015-krahmer-needell-ward},  we introduce the localization factor
 \[
\eta_{s,M} := \max\{\eta_\Delta:\Delta\subseteq\{1,\dots,M\},\;3\le|\Delta|\le s\},
\]
where
\begin{equation}\label{eq:eta}
\eta_\Delta := \max_{i=0,1} \,\sup\,(\{\frac{\|D_i D_i^* x\|_1}{\sqrt{|\Delta|}}: \supp\, x \subseteq\Delta,\;\|D^\ast_i x\|_\H = 1\}\cup\{1\}),
\end{equation}
$D_0:=D$ and $D_1:=D^{-*}$.
Note that $\eta_{s,M} =1$  when $\{\varphi_j\}_j$ is an orthonormal basis or, equivalently, when $D$ is a unitary operator. 

Following \cite{Poon2015}, for $\Delta\subseteq\{1,\dots,M\}$ we denote
\begin{equation}\label{eq:defB}
\begin{aligned}
&B_\Delta := \max \left\{\|D^{-*} P_{\W}^\perp D^{*}\|_{\ell^\infty \to \ell^{\infty}},1\right\},\\
&B_{s,M} := \max\{B_\Delta:\Delta\subseteq\{1,\dots,M\},\;3\le  |\Delta|\le s\},
\end{aligned}
\end{equation}
where we have used the notation
\[
\| T  \|_{\ell^\infty \to \ell^{\infty}}:=\sup_{x\in\ell^2(J)\setminus\{0\}} \frac{\|Tx\|_{\ell^\infty {(J)}}}{\|x\|_{\ell^\infty {(J)}}},
\]
for an operator $T\colon \ell^2({J})\to\ell^2({J})$.
\begin{remark}\label{rem:B}
It is worth observing that when $D$ is a unitary operator we simply have
\[
B_{s,M}=1.
\]
Indeed, in view of the identity
\[
DP_\W^\perp D^*x=DP_\W^\perp D^*(P_\Delta x+P_\Delta^\perp x)=DP_\W^\perp D^*P_\Delta^\perp x=DD^*P_\Delta^\perp x=P_\Delta^\perp x,
\] 
we obtain $\|D P_{\W}^\perp D^{*}x\|_\infty=\norm{P_\Delta^\perp x}_\infty\le \norm{x}_\infty$ for every $x\in\ell^2({J})$, so that $B_\Delta= 1$ for every $\Delta$.
\end{remark}
In the other extreme case, it may happen that $B_{s,M}=+\infty$, even in finite dimension with a Parseval frame, as the following example shows. 
\begin{example}
Consider $\H=\R$ with the  Parseval frame
\[
\varphi_1=\varphi_2=\varphi_3 = 0, \qquad\varphi_{j+3}= f_j,\;j\ge 1,
\]
where $f\colon\N\to (0,+\infty)$ is a sequence such that $\sum_j f_j^2=1$ and $\sum_j f_j=+\infty$ {(here $J=\N$)}. For $\Delta=\{1,2,3\}$ we have $\W=\{0\}$, so that 
$
B_\Delta \ge \|D P_{\W}^\perp D^{*}\|_{\ell^\infty \to \ell^{\infty}}=
\|D  D^{*}\|_{\ell^\infty \to \ell^{\infty}}
$. Thus, setting $x_n=e_1+\dots+e_{n+3}\in\ell^2(\N)$, since $D^*x_n = \sum_{j=1}^n f_j$ we have
\[
B_\Delta \ge |(DD^*x_n)_4|=|\langle D^*x_n,\varphi_4\rangle_\R|=f_1\sum_{j=1}^n f_j\underset{n\to \infty}{\longrightarrow} +\infty,
\]
whence $B_{s,M}\ge B_\Delta\ge +\infty$ for any $s$ and $M$.
\end{example}
These observations show that the quantities $\eta_{s,M}$ and  $B_{s,M}$  can be seen as measures of the redundancy of the frame $\{\varphi_j\}_j$ (see \cite{2015-krahmer-needell-ward,Poon2015} for more details).

For $\alpha\in (0,1]$, let $\tilde M(\alpha)$ be the smallest integer such that $\tilde M(\alpha)\ge M$ and
\begin{equation}\label{eq:tildeM}
\sqrt{\k_1}\| P_N U D^{-1} e_j\|_2 + \k_1\| P_{\widetilde\W} D^{-1} e_j\|_{\H}< \alpha,\qquad j\in J,\;j> \tilde M(\alpha),
\end{equation}
where $\widetilde\W := R(D^{\ast} P_{M})+R(D^{-1} P_{M})$.
\begin{remark}\label{rem:Mtilde}
If $J=\{1,2,\dots,M\}$  is finite, we simply have $\tilde M(\alpha)=M$ for every $\alpha$. If $J=\N$, $\tilde M(\alpha)$ is well-defined since $D^{-1} e_j$ tends to zero weakly and $P_N$ and $P_{\widetilde\W}$ are compact operators. 
\end{remark}
\begin{remark}\label{rem:M}
In the case when $D$ is associated with an orthonormal basis, the condition $\tilde M(\alpha)\ge M$ is implicit, since  $\k_1\| P_{{\widetilde\W}} D^\ast e_j\|_\H = \k_1\| D^\ast e_j\|_\H=\k_1\ge 1\ge\alpha $ for $j=1,\dots,M$ by definition of $\widetilde \W$. Furthermore, condition \eqref{eq:tildeM} reduces to
\begin{equation*}
\sqrt{\k_1}\| P_N U D^\ast e_j\|_2< \alpha,\qquad j\in J,\;j> \tilde M(\alpha).
\end{equation*}
{As a consequence, note that if $ \sup_{l\le N} |\langle\psi_l,\varphi_j\rangle_\H| \le C/\sqrt j$ for every $j\in J {, j>M}$ (which is the case in several concrete applications, see {$\S$\ref{sub:wavelets}}) one has 
\[
\sqrt{\k_1}\| P_N U D^\ast e_j\|_2 \le \sqrt{N\k_1}\| P_N U D^\ast e_j\|_\infty
\le  \sqrt{N\k_1} \sup_{l\le N} |\langle\psi_l,\varphi_j\rangle_\H|
\le  C \sqrt{N\k_1/j},
\]
and so
\[
\tilde M(\alpha) \le C^2\frac{\k_1 N}{\alpha^2},
\]
provided that $C^2\frac{\k_1 N}{\alpha^2}\ge M$.
In the case where $U$ is the Fourier transform and $D$ a Wavelet transform in dimension one, a more precise estimate has been derived in  {\cite{2017-adcock-hansen-poon-roman}}, namely $\tilde M(\alpha) = O(M/\alpha)$.
}
\end{remark}

\section{Main results}\label{sec:thm}

\subsection{Finite and infinite dimensional recovery}\label{sec:main}

We now state the main result of this work, a recovery guarantee of nonuniform type (valid for a fixed unknown signal). Recall that $\H$ is any separable Hilbert space: we deal with the finite and infinite dimensional case simultaneously.
\begin{theorem}\label{thm:main}
Assume that Hypothesis~\ref{hp1} holds true, and let $U$ and $D$ denote the corresponding analysis operators, satisfying the bounds given in \eqref{eq:boundk}. Let $M,s\in J$ and $\omega \geq 1$ be such that $3\le s\le M$. Let $N \in L$ satisfy the balancing property with respect to $U$, $D$, $M$ and $s$, and let $\Omega \subseteq \{1, \ldots, N\}$ be chosen uniformly at random with $|\Omega|=m$. Assume that
\[
m \geq C \k_1\k_2 B_{s,M}^2 \eta_{s,M}^2 \omega^2 \mu^2   N s \log\left(\k_1\k_2 \tilde M\bigl(\tfrac{C'm}{N\omega\sqrt{s\k_2}}\bigr)\right),
\]
where $C,C'>0$ are  universal constants.

Let $g_0 \in \H$ and $\eta \in\ell^2({L})$ be such that $\norm{\eta}_2\le\epsilon$ for some $\epsilon\ge 0$. Let $\zeta = P_{\Omega} U g_0 + \eta$ be the known noisy measurement.
Let $g \in \H$ be a minimizer of the minimization problem \eqref{eq:l1}.
Then, with probability exceeding $1-e^{-\omega}$, we have
\[
\Vert g - g_0 \Vert_\H \leq 20\k_1\sqrt{\k_2} \,\sigma_{s,M}(Dg_0) +  C'' \k_2\sqrt{\k_1^3\omega s\frac{N}{m}}   \epsilon,
\]
where $C''$ is a universal constant. 
\end{theorem}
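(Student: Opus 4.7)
The plan is to follow the inexact dual certificate strategy of Cand\`es--Romberg--Tao, extended to the infinite-dimensional setting by Adcock--Hansen and to Parseval frames by Poon, but with two substantial modifications: the anisotropy of $U$ (so that $\U\ne U$) forces one to track carefully where $U^*$ and $\U$ appear, and the anisotropy of $D$ forces the same separation between $D^*$ and $D^{-1}$. The proof splits into a reduction step, a probabilistic construction of a dual certificate by the golfing scheme, and a deterministic duality argument.

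First I would fix a set $\Delta\subseteq\{1,\dots,M\}$ with $|\Delta|=s$ (eventually the support of the best $s$-term approximation of $Dg_0$) and use the balancing property \eqref{bp1}--\eqref{bp2}, together with the cutoff $\tilde M(\alpha)$ from \eqref{eq:tildeM}, to reduce to the finite window $P_N U$, $P_{\tilde M}D$: the tails of these operators contribute only lower-order error to every subsequent estimate, which is precisely what Definition~\ref{bp} was engineered for.

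The heart of the argument is to show, with probability at least $1-e^{-\omega}$ over $\Omega$, the existence of a dual certificate $\rho=\tfrac{N}{m}U^{*}P_\Omega v$ (with $v$ supported in $\Omega$ and controlled in $\ell^2$) such that
\begin{equation*}
\| P_\Delta D\rho-\sgn(P_\Delta Dg_0)\|_2 \text{ is small},\qquad \| P_\Delta^\perp D\rho\|_\infty \text{ is small},
\end{equation*}
where the choice of $D^*$ versus $D^{-1}$ on each side, and likewise $U^*$ versus $\U$, is dictated by the four coherences appearing in Definition~\ref{def:mu}. I build $\rho$ by the Gross golfing scheme: partition $\Omega$ into roughly $\log(s\k_1^2\k_2)$ equal batches $\Omega_1,\dots,\Omega_k$ and iterate so that the residual on the support shrinks geometrically. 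Each step needs two concentration estimates --- an operator-norm bound on $\tfrac{N}{m_i}P_\Delta D\U P_{\Omega_i}U^{-1}D^{*}P_\Delta-P_\Delta$, and an $\ell^\infty$-bound on the off-support analog $P_\Delta^\perp D\U P_{\Omega_i}U^{-1}D^*P_\Delta$ --- both handled via matrix Bernstein for sums of independent centred rank-one operators. The coherence $\mu$ bounds the individual Orlicz norm while the balancing and frame bounds control the variance, producing the $\mu^2 N s\log(\cdot)$ scaling of $m$.

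The main obstacle, and the source of nearly every constant in the theorem, is the bookkeeping: because neither frame is Parseval, one must pass repeatedly between $D$ and its canonical dual $D^{-*}$ (and similarly between $U^*$ and $\U$), paying $\sqrt{\k_1}$ or $\sqrt{\k_2}$ each time, plus a factor of $B_{s,M}$ (from \eqref{eq:defB}) or $\eta_{s,M}$ (from \eqref{eq:eta}) whenever an $\ell^\infty$-bound on $DP_\W^\perp D^*$ or an $\ell^1$--$\ell^2$ comparison on the support is needed. Doing this uniformly across all golfing steps, and simultaneously over all candidate $\Delta$'s of size at most $s$, is the delicate part. Once the certificate is in hand the conclusion is standard: writing $h=g-g_0$, splitting $Dh=P_\Delta Dh+P_\Delta^\perp Dh$, using feasibility of $g$ and $g_0$ in \eqref{eq:l1} together with $\|h\|_\H=\|D^{-1}Dh\|_\H\le\sqrt{\k_2}\|Dh\|_2$ and the tube bound $\|P_\Omega Uh\|_2\le 2\epsilon$ rescaled by $N/m$ yields exactly the announced recovery inequality, the $\sqrt{N/m}$ noise factor being the natural cost of the $N/m$ rescaling baked into $\rho$.
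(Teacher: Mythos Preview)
Your high-level architecture is correct and matches the paper: a deterministic dual-certificate lemma plus a golfing construction driven by Bernstein-type concentration, with a final passage from a fixed $\Delta$ to $\sigma_{s,M}(Dg_0)$. Two points, however, are genuine gaps rather than bookkeeping.

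\textbf{The certificate must live in $\H$, not in $\ell^2(J)$.} You formulate the dual conditions as smallness of $\|P_\Delta D\rho-\sgn(P_\Delta Dg_0)\|_2$ and $\|P_\Delta^\perp D\rho\|_\infty$, and your golfing operator is $P_\Delta D\U P_{\Omega_i}U^{-1}D^*P_\Delta$. When $D$ is not unitary this does not close: the image of $D^*P_\Delta$ and that of $D^{-1}P_\Delta$ are different subspaces of $\H$, and the deterministic inequality for $\|h\|_\H$ mixes them. The paper (following Poon) works instead with the projection $P_\W$ onto the subspace $\W=R(D^*P_\Delta)+R(D^{-1}P_\Delta)\subseteq\H$; the certificate conditions become $\|P_\W\rho-D^*\sgn(P_\Delta Dg_0)\|_\H$ small and $\|P_\Delta^\perp D^{-*}P_\W^\perp\rho\|_\infty$ small, and the golfing iteration contracts $Z_i\in\W$ via $P_\W-q_i^{-1}P_\W U^*P_{\Omega_i}\U P_\W$. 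This is not a cosmetic change: the decomposition $\|h\|_\H\le\|P_\W h\|_\H+\sqrt{\k_2}\|P_\Delta^\perp Dh\|_1$ (using $P_\W^\perp D^{-1}P_\Delta=0$) is what makes the deterministic argument go through, and it has no analogue with your $P_\Delta$-based conditions.

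\textbf{Two concentration estimates are not enough when $U^*\ne\U$.} Your sketch controls one ``on-support'' operator and one ``off-support'' operator. Because of the anisotropy you must separately control (a) $\theta^{-1}P_\W U^{-1}P_\Omega U P_\W-P_\W$ in operator norm (to invert on $\W$), (b) $\theta^{-1}P_\W U^{-1}P_\Omega\U P_\W$ in operator norm and as applied to $Z_{i-1}$ (to bound $\|\rho'\|_2$), (c) the $\ell^\infty$ off-support golfing term $\theta^{-1}P_\Delta^\perp D^{-*}P_\W^\perp U^*P_\Omega\U P_\W$, and (d) the uniform off-support bound $\sup_{j\in\Delta^c}\|\theta^{-1}P_{\{j\}}D^{-*}P_\W^\perp U^*P_\Omega U P_\W^\perp D^{-1}P_{\{j\}}\|$. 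Items (a), (b), (d) are what let you pass from $\|P_\Omega Uh\|_2\le 2\epsilon$ to a bound on $\|P_\W h\|_\H$ and are the reason the noise term carries $\sqrt{N/m}$; they are not corollaries of the two golfing bounds. The paper proves six separate Bernstein estimates to cover all of these.

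Finally, there is no need to work ``simultaneously over all candidate $\Delta$'s'': the result is nonuniform, so you fix one $\Delta$ (the best $s$-term support of $Dg_0$) before sampling, and the constants $B_\Delta,\eta_\Delta$ are then bounded by $B_{s,M},\eta_{s,M}$.
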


\begin{remark*}
The generality of our construction allows to treat the finite dimensional and the infinite dimensional cases simultaneously. However, in finite dimension the above estimate for $m$ has a simpler form, which is worth pointing out. Suppose that $L=\{1,\dots,N\}$ and $J=\{1,\dots,M\}$. By Remarks~\ref{rem:balancing} and~\ref{rem:Mtilde}, we have that $N$ satisfies the balancing property with respect to $U$, $D$, $M$ and $s$ and that $\tilde M\bigl(\tfrac{C'm}{N\omega\sqrt{s\k_2}}\bigr)=M$. Thus, the lower bound for the number of measurements $m$ becomes
\[
m \geq C   \k_1\k_2   B_{s,M}^2 \eta_{s,M}^2 \omega^2 \mu^2 N  s \log\left(\k_1\k_2 M\right),
\]
which, when  $\{\varphi_j\}_j$ is an orthonormal basis of $\H=\C^M$, by  \eqref{eq:boundkD}, Example~\ref{ex:isometry} and Remark~\ref{rem:B} simply reduces to
\[
m \geq C   \k_1 \omega^2   \mu^2 Ns \log (\k_1 M).
\]
\end{remark*}

Theorem~\ref{thm:main} directly generalizes Theorems 6.1, 6.3 and 6.4 of \cite{AH} to the case of anisotropic measurements. It also extends the results of \cite{2017-adcock-hansen-poon-roman,Poon2015} to the case of general frames $D$ and $U$.  Our result can also be seen as an infinite dimensional generalization of the finite dimentional result in \cite{KG} for anisotropic random measurements. 

\subsection{Asymptotic incoherence and virtual frames}\label{sub:asymptotic}

The above result shows that with random sampling one needs a number of measurements proportional to the sparsity of the signal (up to logarithmic factors), provided that the coherence is small enough, namely, $\mu=O(1/\sqrt{N})$. While this happens in finite dimension ($\H=\C^N$) in some particular situations, e.g.\ with signals that are sparse with respect to the Dirac basis $\varphi_j=e_j$ and with Fourier measurements, in many cases of practical interest the above result becomes almost meaningless since the coherence $\mu$ is of order one. For instance, this happens when $U$ is the discrete Fourier transform and $D$ the discrete wavelet transform. As it was shown in \cite{2017-adcock-hansen-poon-roman}, this is always the case in infinite dimension.

Since the early stages of compressed sensing, it was realized that this issue may be solved by using variable density random sampling \cite{2006-tsaig-donoho,2011-puy-v-wiaux,2014-krahmer-ward,2015-krahmer-needell-ward,2016-bigot-boyer-weiss,2017-adcock-hansen-poon-roman,Poon2015}. For instance, in the Fourier-Wavelet case, one needs to sample lower frequencies with higher probability than the higher frequencies. We now give a result that deals with this situation; in particular, it takes into account a priori estimates on the coherence and nonuniform sampling. As it is clear from the proof, it follows as a simple corollary of Theorem~\ref{thm:main}, thanks to the flexibility of its assumptions. More precisely, Theorem~\ref{thm:main} is applied to a \textit{virtual frame} $\{\hat \psi_{\hat l}\}_{\hat l}$ obtained from $\{\psi_l\}_l$ by artificially repeating its elements. In the Appendix we justify how the uniform sampling on this virtual frame is equivalent to a nonuniform sampling on the original frame. More complicated transformations, also involving $\{\varphi_j\}_j$, may be considered (taking into account, for instance, \emph{asymptotic sparsity} \cite{2017-adcock-hansen-poon-roman}): we leave these investigations to future work, and we limit ourselves to an example to show the potential of this framework.

\begin{cor}\label{cor:main}
Assume that Hypothesis~\ref{hp1} holds true, and let $U$ and $D$ denote the corresponding analysis operators, satisfying the bounds given in \eqref{eq:boundk}. Let $M,s\in {J}$ and $\omega \geq 1$ be such that  $3\le s\le M$. Let $N \in L$ satisfy the balancing property with respect to $U$, $D$, $M$ and $s$. Let $w \in\R_+^N$ be such that $\norm{w}_{\C^N}\ge 1$ and
\begin{equation}\label{def:async}
\sup_{j \in J} \max\{|\langle \varphi_j,\psi_l\rangle_\H|,
|\langle \tilde\varphi_j,\psi_l\rangle_\H|,|\langle \varphi_j,\tilde\psi_l\rangle_\H|,
|\langle \tilde\varphi_j,\tilde\psi_l\rangle_\H| \} \leq w_l, \qquad l=1,\dots,N.
\end{equation}
Assume that
\begin{equation}\label{eq:mcor}
m \geq C \k_1\k_2   B_{s,M}^2 \eta_{s,M}^2 \omega^2 \|w\|_{\C^N}^2  s\log\Bigl(\k_1\k_2 \tilde M\bigl(\tfrac{C'}{Nm \omega\|w\|_{\C^N}^2\sqrt{s\k_2}}\bigr)\Bigr),
\end{equation}
where $C,C'>0$ are universal constants.
Sample $m$ indices $l_1, \ldots, l_m$ indipendently from $\{1,\ldots,N\}$  according to the probability distribution
\[
 \nu_l = C_N \lceil N w_l^2\rceil, \qquad l=1,\dots,N,
 \]
where $C_N = \left(\sum_{l=1}^N \lceil N w_l^2\rceil \right)^{-1}$, and set $\Omega = \{l_1,\ldots,l_m\}$ (with possible repetitions to be kept).

Take $g_0 \in \H$ and $\eta \in \C^m$ such that {$\| \eta \|_w \leq \varepsilon$}, where  $\|\eta\|^2_w : = \sum_{i=1}^m \frac{|\eta_i|^2}{\lceil N w_{l_i}^2 \rceil}$. Set $\zeta = P_{\Omega}U g_0+\eta$, i.e.\ $\zeta_i = (Ug_0)_{l_i}+\eta_i$. Let $g \in \H$ be a minimizer of
\begin{equation*}\label{eq:l1w}
\inf_{ \substack{
 g \in \H \\  D  g \in \ell^1({J})
}} \Vert D  g \Vert_{1} \quad \text{subject to } \|P_\Omega U  g - \zeta\|_w \le\epsilon.
\end{equation*}
Then, with probability exceeding $1-e^{-\omega}$, we have
\[
\Vert g - g_0 \Vert_\H \leq 20\k_1\sqrt{\k_2} \,\sigma_{s,M}(Dg_0) +  C'' \k_2 \|w\|_{\C^N}\sqrt{\k_1^3\omega s\frac{N}{m}}   \epsilon,
\]
where $C''$ is a universal constant.
\end{cor}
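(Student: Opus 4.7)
\smallskip

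\noindent\textbf{Proof proposal.} The plan is to deduce the corollary from Theorem~\ref{thm:main} by constructing a virtual measurement frame on which uniform sampling corresponds to the prescribed nonuniform sampling on $\{\psi_l\}_l$. Set $n_l := \lceil N w_l^2\rceil$ for $l=1,\dots,N$ and $n_l:=1$ for $l>N$, and let $\hat L$ be an index set enumerating pairs $(l,k)$ with $l\in L$ and $1\le k\le n_l$. Define
\[
\hat\psi_{(l,k)} := \frac{\psi_l}{\sqrt{n_l}},\qquad 1\le k\le n_l.
\]
A direct computation gives $\sum_{(l,k)} |\langle g,\hat\psi_{(l,k)}\rangle_\H|^2 = \sum_l |\langle g,\psi_l\rangle_\H|^2$, so $\{\hat\psi_{(l,k)}\}$ is a frame with the same constants $A_U,B_U$, the canonical dual is $\tilde{\hat\psi}_{(l,k)} = \tilde\psi_l/\sqrt{n_l}$, and $\k_1$ is unchanged. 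The frame $\{\varphi_j\}_j$ is left untouched, hence $\k_2$, $B_{s,M}$, $\eta_{s,M}$, and $\tilde M(\cdot)$ (in its $D$-dependent part) are unaffected.

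Next I would verify the key quantitative properties of the virtual frame. For any $(l,k)$ with $l\le N$, condition \eqref{def:async} yields
\[
\max\{|\langle \varphi_j,\hat\psi_{(l,k)}\rangle_\H|,\ldots,|\langle \tilde\varphi_j,\tilde{\hat\psi}_{(l,k)}\rangle_\H|\} \le \frac{w_l}{\sqrt{n_l}} \le \frac{w_l}{\sqrt{N w_l^2}} = \frac{1}{\sqrt N},
\]
so the coherence of the virtual pair satisfies $\hat\mu\le 1/\sqrt N$, with the natural cutoff $\hat N:=\sum_{l=1}^N n_l$ satisfying $N\|w\|_{\C^N}^2 \le \hat N \le N\|w\|_{\C^N}^2+N \le 2N\|w\|_{\C^N}^2$ (using $\|w\|_{\C^N}\ge 1$). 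For the balancing property, observe that $\hat U^* \hat P_{\hat N}\hat U = U^*P_N U$ as operators on $\H$, since the contributions of the $n_l$ copies $(l,1),\dots,(l,n_l)$ add up to reproduce $\psi_l\langle\cdot,\psi_l\rangle_\H$. The same identity holds for the dual objects. Consequently, the balancing inequalities \eqref{bp1}--\eqref{bp2} written for $(\hat U,D,M,s,\hat N)$ reduce to those written for $(U,D,M,s,N)$, which hold by hypothesis.

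Third, I would make precise the sampling correspondence (deferred to the Appendix alluded to in the text). Uniform independent sampling of $m$ indices from $\{(l,k):l\le N,\,1\le k\le n_l\}$ assigns probability $n_l/\hat N = C_N n_l = \nu_l$ to each virtual index projecting to original index $l$, so the joint law of the $m$ projected indices $l_1,\dots,l_m$ matches the distribution prescribed in the corollary (with possible repetitions preserved). The virtual measurement at $(l_i,k_i)$ is $(Ug_0)_{l_i}/\sqrt{n_{l_i}}$, so the choice $\hat\eta_i := \eta_i/\sqrt{n_{l_i}}$ gives $\|\hat\eta\|_2^2 = \sum_i|\eta_i|^2/n_{l_i} = \|\eta\|_w^2\le\varepsilon^2$, and the constraint $\|\hat P_\Omega \hat U g-\hat\zeta\|_2\le\varepsilon$ is identical to $\|P_\Omega Ug-\zeta\|_w\le\varepsilon$. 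Applying Theorem~\ref{thm:main} to $(\hat U,D)$ with parameter $\hat N$ and using $\hat\mu^2\hat N\le 2\|w\|_{\C^N}^2$ yields exactly the measurement bound \eqref{eq:mcor} (absorbing the factor $2$ into $C$ and using $\hat N\asymp N\|w\|_{\C^N}^2$ inside $\tilde M$), while the error bound transforms as $\sqrt{\hat N/m}\le \sqrt{2}\,\|w\|_{\C^N}\sqrt{N/m}$, producing the estimate claimed in the corollary.

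The main obstacle I anticipate is the rigorous equivalence between independent sampling with replacement from the nonuniform distribution $\nu$ and uniform sampling on $\hat L$ in a form that feeds into the proof of Theorem~\ref{thm:main}; Theorem~\ref{thm:main} is stated in terms of a size-$m$ random subset, whereas duplicates play an essential role here. This is why the corollary insists on keeping repetitions and why the weighted noise norm has the specific form $\|\cdot\|_w$: both are dictated by the virtual-frame viewpoint. The other bookkeeping issues (preservation of $\k_1$, balancing, reduction of $\hat\mu^2\hat N$ to $\|w\|_{\C^N}^2$, invariance of $B_{s,M}$ and $\eta_{s,M}$) are straightforward once the right virtual frame is in place.
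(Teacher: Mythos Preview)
Your overall architecture is right and matches the paper's: build a virtual frame by repeating and renormalizing each $\psi_l$, check that the frame constants, balancing property, and $D$-dependent quantities are preserved, control the new coherence by $\hat\mu^2\hat N\lesssim\|w\|_{\C^N}^2$, and translate the weighted constraint into the $\ell^2$ constraint on the virtual measurements.

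However, the ``main obstacle'' you flag at the end is in fact a genuine gap, not just bookkeeping. Theorem~\ref{thm:main} samples a size-$m$ \emph{subset} of $\{1,\dots,\hat N\}$ uniformly \emph{without} replacement, while the corollary samples $m$ indices i.i.d.\ from $\nu$ \emph{with} replacement. Your virtual frame makes these two models match only if you also sample the virtual indices with replacement, which is not what Theorem~\ref{thm:main} allows. The paper resolves this by inflating the multiplicities by a further factor $\upsilon=2m^2$ (so $r_l=\upsilon\lceil Nw_l^2\rceil$), and then invoking a quantitative comparison between the multivariate hypergeometric distribution (uniform subset sampling on the inflated virtual set) and the multinomial distribution (i.i.d.\ sampling from $\nu$): when $\upsilon\ge 2m^2$, the hypergeometric probabilities dominate half the corresponding multinomial probabilities, so the failure probability under the corollary's model is at most twice that guaranteed by Theorem~\ref{thm:main}. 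Without this inflation-plus-comparison step your argument does not close; in particular, with $n_l=\lceil Nw_l^2\rceil$ alone, sampling $m$ distinct virtual indices and projecting does \emph{not} yield i.i.d.\ draws from $\nu$. The extra factor $\upsilon$ is harmless elsewhere because it cancels: $\hat\mu\le 1/\sqrt{\upsilon N}$, $\hat N\le 2\upsilon N\|w\|_{\C^N}^2$, and $\hat\epsilon=\epsilon/\sqrt{\upsilon}$.

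Two smaller points. First, the mutual coherence $\hat\mu$ in Definition~\ref{def:mu} is a supremum over \emph{all} $l\in L$, so setting $n_l=1$ for $l>N$ leaves $\hat\psi_{(l,1)}=\psi_l$ unnormalized and does not yield $\hat\mu\le 1/\sqrt N$; the paper instead defines $w_l$ for $l>N$ by the actual coherence bound and repeats accordingly. Second, $\tilde M(\alpha)$ depends on $U$ through $\|P_N U D^{-1}e_j\|_2$, not only on $D$; you need the identity $\|P_{\hat N}\hat U f\|_2=\|P_N Uf\|_2$ (which follows from the repetition/renormalization) to conclude $\hat{\tilde M}=\tilde M$.
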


\begin{remark*}
This result can be seen as a generalization of \cite[Corollary 2.9]{2015-krahmer-needell-ward} to infinite dimension and to the frame case, since $\nu_l = \lceil N w_l^2\rceil / \sum_{l=1}^N \lceil N w_l^2\rceil \approx \frac{w_l^2}{\|w\|_{\C^N}^2}$.
\end{remark*}

\begin{remark*}
Let us show that the  weighted norm $\norm{\;}_w$ is the natural norm in this context. If $l\in\{1,\dots,N\}$ is sampled according to $\nu$, we have
$
\E\bigl(\frac{|v_l|^2}{r_l}\bigr) = \sum_{i=1}^N \frac{|v_i|^2}{r_i} \nu_i= C_N\norm{v}_{\C^N}^2
$
for every $v\in\C^N$. Hence
\[
\mathbb{E}(\norm{P_\Omega v}_w^2) 
= \E\left(\sum_{i=1}^m \frac{|v_{l_i}|^2}{r_{l_i}}\right) = \sum_{i=1}^m\E\left( \frac{|v_{l_i}|^2}{r_{l_i}}\right) = m C_N \norm{v}_{\C^N}^2,\qquad v\in\C^N.
\]
As a consequence, up to a constant, the expected value of the weighted norm coincides with the usual Euclidean norm of $\C^N$.
\end{remark*}

\begin{remark*}The bound \eqref{def:async} may be completed by the corresponding decays with respect to the frame $\{\varphi_j\}_j$, namely in the variable $j$. In this case, $\{ \psi_l\}_{l \in L}$ and $\{ \varphi_j\}_{j \in J}$ are \emph{asymptotically incoherent} \cite{2017-adcock-hansen-poon-roman}. Under this more restrictive assumption, when $D$ is a unitary operator an explicit bound on the factor $\tilde M$ may be derived using Remark~\ref{rem:M} (see $\S$\ref{sub:wavelets} for an example).
\end{remark*}

\begin{proof}
For $l\in L$, $l>N$, set
\begin{equation}\label{eq:wl}
w_l=\sup_{j\in J} \max\{|\langle \varphi_j,\psi_l\rangle_\H|,
|\langle \tilde\varphi_j,\psi_l\rangle_\H|,|\langle \varphi_j,\tilde\psi_l\rangle_\H|,
|\langle \tilde\varphi_j,\tilde\psi_l\rangle_\H|  \}  .
\end{equation}
Let $r_l = \upsilon\lceil  N w_l^2 \rceil$ for $l \in L$, where $\upsilon=2m^2$. We want to apply Theorem~\ref{thm:main} to $\{\hat \psi_{\hat l}\}_{\hat l}$ and $\{\varphi_j\}_j$, where the virtual frame $\{\hat \psi_{\hat l}\}_{\hat l}$, with associated operator $\hat U$, is given as follows. For $l \in L$ normalize $\psi_l$  by $ \sqrt{r_l}$ and  repeat it $r_l$  times, namely
\begin{equation}\label{eq:virtual}
\left\{\hat \psi_{\hat l} \right\}_{{\hat l\in\hat L}}=\biggl\{ \underbrace{\frac{\psi_1}{\sqrt{r_1}}, \ldots, \frac{\psi_1}{\sqrt{r_1}}}_{r_1 \text{ times}}, \ldots, \underbrace{\frac{\psi_l}{\sqrt{r_l}}, \ldots, \frac{\psi_l}{\sqrt{r_l}}}_{r_l \text{ times}},\ldots\biggr\}.
\end{equation}
The new index set $\hat L$ coincides with $\N$ if $L=\N$ and is finite if $L$ is finite (more precisely,  we have $|\hat L| = \sum_{l\in L}r_l$).

Note that $\{\hat \psi_{\hat l}\}_{\hat l}$ has the same frame bounds of $\{ \psi_l\}_l$, i.e. {$\hat \k_1 = \k_1$}, by construction, since
\begin{equation}\label{eq:pars}
\sum_{i=1}^{r_l}\left|\left\langle f,\frac{\psi_l}{\sqrt{ r_l}}\right\rangle\right|^2 = \sum_{i=1}^{r_l}\frac{1}{r_l}|\langle f,\psi_l\rangle|^2 = |\langle f,\psi_l\rangle|^2.
\end{equation}

We now want to prove that $\hat N = \sum_{l=1}^N r_l$ satisfies the balancing property with respect to $\hat U$, $D$, $M$ and $s$. We first notice that $\hat U^\ast \hat U = U^\ast U$, since 
\[
\hat U^\ast \hat U f = \sum_{{\hat l\in\hat L}}\langle f,\hat \psi_{\hat l}\rangle\hat \psi_{\hat l} = \sum_{{l\in L}} \sum_{i=1}^{r_l}\left\langle f, \frac{\psi_l}{\sqrt{r_l}}\right\rangle \frac{\psi_l}{\sqrt{r_l}} 
= \sum_{{l\in L}} \langle f,\psi_l\rangle\psi_l = U^\ast Uf.
\]
In passing, we remark that this identity tells us that the dual frame $\{\tilde{\hat\psi}_{\hat l} \}_{\hat l}$ of the virtual frame $\{ \hat \psi_{\hat l}\}_{\hat l}$ coincides with the virtual frame of the dual frame $\{\hat{\tilde{\psi}}_{\hat l} \}_{\hat l}$, which is constructed as in \eqref{eq:virtual}. Arguing in the same way, and terminating the above sums to $\hat N$ and $N$, respectively,  we readily derive
\[
\hat U^\ast P_{\hat N} \hat U f 
= \sum_{{\hat l}=1}^{\hat N}\langle f,\hat \psi_{\hat l}\rangle\hat \psi_{\hat l}
= \sum_{l=1}^N \sum_{i=1}^{r_l}\left\langle f, \frac{\psi_l}{\sqrt{r_l}}\right\rangle \frac{\psi_l}{\sqrt{r_l}} 
= \sum_{l=1}^N \langle f,\psi_l\rangle\psi_l = U^\ast P_N Uf.
\]
This immediately  yields property \eqref{bp2}, since {$\hat U^{-*} = \hat U (\hat U^\ast \hat U)^{-1}$ and $\hat U^{-1} =(\hat U^\ast \hat U)^{-1} \hat U^\ast$}. Similarly, \eqref{bp1}  follows from the identities
\[
\hat U^\ast P^\perp_{\hat N} \hat U = \hat U^\ast (I - P_{\hat N}) \hat U = \hat U^\ast \hat U - \hat U^\ast P_{\hat N} \hat U = U^\ast U - U^\ast P_N U  = U^\ast P_N^\perp U.
\]

We have the following straightforward upper bound for $\hat N$:
\[
\hat N  = \sum_{l=1}^N\u\lceil  N w_l^2 \rceil \leq \u\sum_{l=1}^N(N w_l^2+1) = \u N(\|w\|_{\C^N}^2+1)\le 2\u N\|w\|_{\C^N}^2.
\]

The factor $\tilde M$ associated with $\hat U$ and $D$, which we denote by  $\hat{\tilde M}$, verifies $\hat{\tilde M}({\alpha})= {\tilde M}({\alpha})$. Indeed, from the definition of $\tilde M$, we only need to check that $\| P_{\hat N} \hat U f\|_2^2=\|P_N Uf\|_2^2$, which follows by \eqref{eq:pars}:
\begin{align*}
\| P_{\hat N} \hat U f\|_2^2 &= \sum_{{\hat l}=1}^{\hat N} |\langle\hat U f,e_{\hat l}\rangle|^2 = \sum_{{\hat l}=1}^{\hat N} |\langle f,\hat \psi_{\hat l}\rangle|^2=\sum_{l=1}^N|\langle f,\psi_l\rangle|^2 = \|P_N Uf\|_2^2.
\end{align*}
The factors $B_{s,M}$ and $\eta_{s,M}$ do not change since they do not depend on $U$ but only on $D$, which is left unchanged. 

Let us calculate the new coherence 
\[
\hat \mu = \sup_{{\hat l},j} \max\{|\langle \varphi_j,\hat \psi_{\hat l}\rangle_\H|,
|\langle \tilde\varphi_j,\hat\psi_{\hat l}\rangle_\H|,|\langle \varphi_j,\tilde{\hat{\psi_l}}\rangle_\H|,
|\langle \tilde\varphi_j,\tilde{\hat{\psi}}_{\hat l}\rangle_\H| \}.
\] 
For $\hat l \in\hat L$ there exists $l\in L$ with $\hat \psi_{\hat l} = \psi_{l}/\sqrt{r_{l}}$ and $\tilde{\hat{\psi}}_{\hat l} = \tilde \psi_{l}/\sqrt{r_{l}}$, so that by \eqref{def:async} and \eqref{eq:wl} we obtain
\begin{multline*}
\max\{|\langle \varphi_j,\hat \psi_{\hat l}\rangle_\H|,
|\langle \tilde\varphi_j,\hat\psi_{\hat l}\rangle_\H|,|\langle \varphi_j,\tilde{\hat{\psi_l}}\rangle_\H|,
|\langle \tilde\varphi_j,\tilde{\hat{\psi}}_{\hat l}\rangle_\H| \} \\
=\max\{|\langle \varphi_j, \frac{\psi_{l}}{\sqrt{r_{l}}}\rangle_\H|,
|\langle \tilde\varphi_j,\frac{\psi_{l}}{\sqrt{r_{l}}}\rangle_\H|,
|\langle \varphi_j,\frac{\tilde \psi_{l}}{\sqrt{r_{l}}}\rangle_\H|,
|\langle \tilde\varphi_j,\frac{\tilde \psi_{l}}{\sqrt{r_{l}}}\rangle_\H| \}  \leq \frac{w_{l}}{\sqrt{r_{l}}} \leq \frac{1}{\sqrt{\u N}},
\end{multline*}
since $r_l= \u\lceil N w_l^2 \rceil$. Therefore $\hat \mu \leq \frac{1}{\sqrt{\u N}}$.

Now, the factor $\hat \mu^2 \hat N$ in the estimate for $m$ given in Theorem~\ref{thm:main} applied to $\hat U$ and $D$ becomes $\hat \mu^2 \hat N \leq 2\|w\|_{\C^N}^2$, so that the estimate on $m$ in Theorem~\ref{thm:main} transforms into \eqref{eq:mcor}. 

Further, setting $\hat \zeta = P_{\hat\Omega}\hat U g_0 + \left({\eta_i}/{\sqrt{r_{l_i}}}\right)_i$, we have
\[
\norm{P_{\hat\Omega}\hat U g-\hat\zeta}_2
=\norm{\left(\left( (g,\psi_{l_i})-{\zeta_i}\right)/{\sqrt{r_{l_i}}}\right)_i}_2
=\frac{\norm{P_{\Omega} U g-\zeta}_w}{\sqrt{\u}}\le\frac{\epsilon}{\sqrt{\u}}=:\hat\epsilon.
\]
As a result, the factor $\sqrt{\hat N}\,\hat \epsilon$ in the estimate for $\Vert g - g_0 \Vert_\H$ given in Theorem~\ref{thm:main} is bounded by  $\|w\|_{\C^N}\sqrt{ 2N}\, \epsilon$.

Finally, it is well known that the sampling without replacement of $m$ elements $\Omega'$ uniformly at random from \[
\biggl\{ \underbrace{1, \ldots, 1}_{r_1 \text{ times}}, \ldots, \underbrace{N, \ldots, N}_{r_N \text{ times}}\biggr\}
\]
 given in Theorem~\ref{thm:main} may be substituted by the variable density sampling scheme with replacement given in the statement, provided that $\u $ is large enough. More precisely, letting  $n_l$ (respectively, $n'_l$) denote the number of $l$'s in $\Omega$ (respectively, $\Omega'$) for $l=1,\dots,N$, by arguing as in \cite[Section II.C]{CRT}, by Lemma~\ref{lem:ernesto} (in the Appendix below) we readily derive
\[
\begin{split}
\P(F(\Omega'))&=\sum_{ \substack{
 k_1,\dots,k_N=0 \\ k_1+\dots+k_N=m }}^m \P(F(\Omega')|n'_1=k_1,\dots,n'_N=k_N)\,\P(n'_1=k_1,\dots,n'_N=k_N) \\
 &=\sum_{  k_1+\dots+k_N=m } \P(F(\Omega)|n_1=k_1,\dots,n_N=k_N)\,\P(n'_1=k_1,\dots,n'_N=k_N) \\
 &\ge \frac12 \sum_{ k_1+\dots+k_N=m } \P(F(\Omega)|n_1=k_1,\dots,n_N=k_N)\,\P(n_1=k_1,\dots,n_N=k_N)\\
 &= \frac12 \P(F(\Omega)),
\end{split}
\]
where $F(\Omega)=\operatorname{Failure}(\Omega)$ is the event where the conclusion of Theorem~\ref{thm:main} fails for the sampling $\Omega$.

\end{proof}

\subsection{Recovery of wavelet coefficients from Fourier samples}\label{sub:wavelets}

 For $d\in\N$, let $\H=L^2([0,1]^d)$  be the signal space. Let $\{e_k\}_{k\in\Z^d}$ be the Fourier basis of $\H$, namely
\[
e_k(x) =e^{2\pi ik\cdot x},\qquad x\in [0,1]^d.
\]
Consider a nondecreasing ordering of $\Z^d$, namely a bijective map $\rho\colon\N\to \Z^d$, $l\mapsto k_l$, such that
\[
l_1,l_2\in\N,\; l_1\le l_2  \quad\implies\quad \norm{\rho(l_1)}\le\norm{\rho(l_2)},
\]
where $\norm{\;}$ is any norm of $\R^d$. Set $\psi_l = e_{k_l}$ for $l\in\N$. Let $\{\varphi_j\}_{j\in\N}$ be a separable wavelet basis of $\H$ (ordered according to the wavelet scales). Note that both systems $\{\psi_l\}_{l\in\N}$ and $\{\varphi_j\}_{j\in\N}$ are orthonormal bases, so that $\tilde\psi_l=\psi_l$ and $\tilde\varphi_j=\varphi_j$. Under certain decay conditions on the scaling function (which may be relaxed {to a condition satisfied by all Daubechies wavelets} if one considers a different ordering of the frequencies $k\in\Z^d$), it was shown in \cite{2016arXiv161007497J} that
\[
\sup_{j\in\N} |\langle\psi_l,\varphi_j\rangle_\H| \le \frac{C_1}{\sqrt{l}},\quad 
\sup_{l\in\N} |\langle\psi_l,\varphi_j\rangle_\H| \le \frac{C_1}{\sqrt{j}},\qquad l,j \in\N
\]
for some $C_1>0$. In other words, the wavelet basis and the Fourier basis are asymptotically incoherent. Thanks to the first of these inequalities, we have that assumption \eqref{def:async} of the corollary is satisfied with $w_l = \frac{C_1}{\sqrt{l}}$, so that $\norm{w}_{\C^N}^2\le C_1^2(\log N+1)$. Further, by the second of these inequalities and Remark~\ref{rem:M} we have
$
\tilde M(\alpha) \le C_1^2\frac{ N}{\alpha^2}.
$ 
As a consequence, estimate \eqref{eq:mcor} {of Corollary~\ref{cor:main}} for the number of measurements $m$ becomes
\begin{equation}\label{eq:m}
m \geq C   \, \omega^2 s \log^2 N   
\end{equation}
for some constant $C>0$ depending only on $C_1$. The number of measurements required for the success of the recovery using $\ell^1$ minimization is directly proportional to the sparsity of the signals, up to log factors, and so, up to log factors,  estimate \eqref{eq:m} is optimal. It is worth observing that one log factor may be removed by using multilevel sampling, asymptotic sparsity and finer properties of wavelets \cite[Theorem~6.2]{2017-adcock-hansen-poon-roman}.

\begin{figure}
\begin{picture}(300,200)
\put(-37,0){
\includegraphics[width=\linewidth]{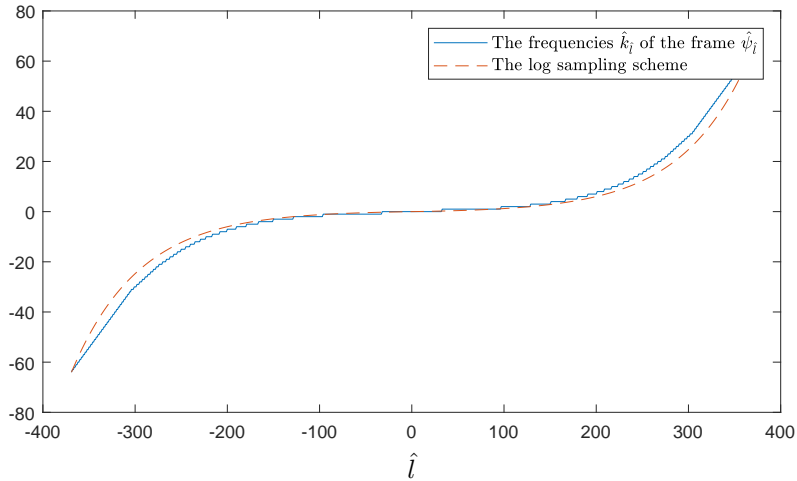}}
\put(151,-4){$\hat l$}
\end{picture}
\caption{A comparison between the log sampling scheme and the scheme associated with the virtual frame $\{\hat\psi_{\hat l}\}_{\hat l}$. The index $\hat l$ is on  the $x$-axis (with an abuse of notation, we added the negative frequencies with negative indexes, as it is usual). The $y$-axis represents frequencies.}
\label{fig:log}
\end{figure}

According to our result, the measurements must be  chosen at random from $\{1,\dots,N\}$ with probabilities 
\[
\nu_l \propto  \lceil C_1^2 N /l\rceil,\qquad l=1,\dots,N.
\]
This nonuniform sampling scheme corresponds to a uniform sampling scheme for the virtual frame $\{\hat\psi_{\hat l}\}_{\hat l}$, in which each $\psi_l$ is repeated $\lceil C_1^2 N /l\rceil$ times and suitably normalized (see the proof of Corollary~\ref{cor:main}: note that the use of this uniform sampling scheme allows the choice $\u=1$). It is then natural to wonder how the frequencies $\{\hat k_{\hat l}\}_{\hat l}$ associated with $\{\hat\psi_{\hat l}\}_{\hat l}$ are arranged. Consider for simplicity the one-dimensional case with only the positive frequencies $k\in\N$, so that the ordering $\rho$ is simply the identity map. By construction, the frequency $\hat k_{\hat l}$  satisfies $\sum_{i=1}^{\hat k_{\hat l}} \lceil C_1^2 N /i\rceil \approx  \hat l$, and so $C_1^2 N\log \hat k_{\hat l}\approx \hat l$, which gives
\[
\hat k_{\hat l} \approx e^{\hat l/(C_1^2 N)}.
\]
This is the so-called \emph{log sampling scheme} \cite{2014-nonuniform,2016-gataric-poon} (up to a suitable scaling of the parameters), which is a 1D model for higher dimensional spiral trajectories; these are common in Magnetic Resonance Imaging. A comparison of the two sampling schemes is shown in Figure~\ref{fig:log}; the plot contains the negative frequencies as well (obtained simply by adding the positive frequencies with the signs changed). As expected, the log sampling scheme yields a smooth approximation of the sampling scheme $\{\hat k_{\hat l}\}_{\hat l}$. As far as the authors are aware, this is the first time that  theoretical support is given to the use of the log sampling scheme in CS. It is worth observing that the normalization of the elements $\hat\psi_{\hat l}$ by the square root of the number of repetitions corresponds to the standard normalization of weighted Fourier frames by the measures of Voronoi regions \cite{2017-adcock-gataric-hansen}.

\section{Applications}\label{sec:apps}

\subsection{Nonuniform Fourier sampling}
\label{sub:fourier}

The most classical compressed sensing problem formulated in the continuous setting is the recovery of a function $g\in L^2([0,1]^d)$ from Fourier samples
\[
(Ug)(k):= \hat g(k)=\int_{[0,1]^d} g(x) e^{-2\pi i k \cdot x}\,dx = \langle g,e^{2\pi i k \cdot}\rangle_{L^2([0,1]^d)},\qquad k\in\Omega,
\]
where $\Omega\subseteq\Z^d$ is a finite set of frequencies where the measurements are taken. Here $U$ is the discrete Fourier transform given by scalar products with the sinusoids $x\mapsto e^{2\pi i k \cdot x}$, which form an orthonormal basis of $ L^2([0,1]^d)$. If  $g$ is sparse with respect to a suitable orthonormal basis  with analysis operator $D$, this reconstruction problem fits in the framework discussed in the previous section, and $g$ may be recovered by $\ell^1$ minimization, provided that enough random measurements $\Omega$ are taken. The standard theory of compressed sensing may be applied in this case, since both $U$ and $D$ are unitary operators (see Example \ref{ex:isometry}).

In several applications (such as Magnetic Resonance Imaging, Computed Tomography, geophysical imaging, seismology and
electron microscopy), nonuniform Fourier sampling arises naturally, i.e.\ the frequencies are not taken uniformly in ${\Z^d}$. In this case, the operator $U$ fails to be an isometry, since the corresponding family of sinusoids may be only a frame, and not an orthonormal basis. The results discussed in the previous section may be directly applied to this case too.

Let us now give a quick overview of nonuniform Fourier frames; we follow \cite{2017-adcock-gataric-hansen}. For additional details, the reader is referred to \cite{2001-Christensen,2014-nonuniform} for the one-dimensional case, and to \cite{1966-beurling,2000-benedetto-wu,2012-olevskii-ulanovskii,2017-adcock-gataric-hansen} for the multi-dimensional case.

Let $\H$ be the space of square-integrable functions with support contained in a compact, convex and symmetric set $E\subseteq\R^d$, i.e.\ $\H=L^2(E)$. For $g\in\H$, we consider measurements of the form
\[
\hat g(k) = \int_E g(x) e^{-2\pi i k\cdot x}\,dx = \langle g,e_k\rangle_\H,\qquad k\in Z\subseteq\hat\R^d,
\]
namely scalar products with the sinusoids
\[
e_k(x) = e^{2\pi i k\cdot x},\qquad x\in E.
\]
Instead of considering the case when $Z$ is a cartesian grid of $\hat\R^d$ (here $\hat \R$ denotes the real line in the frequency domain), which gives rise to uniform Fourier sampling, we wish to give more general conditions on the set $Z$ so that $\{e_k\}_{k\in Z}$ is a frame of $\H$.

The first of these conditions requires that the samples are fine enough to capture all the frequency information in a given direction.
\begin{definition}[\cite{1966-beurling}]
We say that the sampling scheme $Z\subseteq\hat\R^d$ is \emph{$\delta$-dense} if
\[
\delta = \sup_{\hat y\in\hat\R^d}\inf_{k\in Z} |k-\hat y|_{E^\circ},
\]
where the norm $|\;|_{E^\circ}$ is given by
\[
|\hat y|_{E^\circ} = \inf\{a>0:x\cdot\hat y\le a \;\text{for every}\;x\in E\}.
\]
\end{definition}

The second condition limits the concentration of samples, in order to avoid large energies in small frequency regions.
\begin{definition}
We say that the sampling scheme $Z\subseteq\hat\R^d$ is \emph{separated} if there exists a constant $\eta>0$ such that
\[
\inf_{k_1,k_2\in Z, k_1\neq k_2} |k_1-k_2| \ge \eta >0.
\]
We say that $Z$ is \emph{relatively separated} if it is a finite union of separated sets.
\end{definition}

Under these conditions, the family of sinusoids $e_k$ with frequencies $k$ in $Z$ forms a frame for $L^2(E)$. 

\begin{proposition}[\cite{1966-beurling,2000-benedetto-wu,2012-olevskii-ulanovskii}]
Let $E\subseteq\R^d$ be a  compact, convex and symmetric set and take $\delta\in(0,1/4)$. If $Z\subseteq\hat\R^d$ is relatively separated and $\delta$-dense, then $\{e_k\}_{k\in Z}$ is a Fourier frame for $L^2(E)$.
\end{proposition}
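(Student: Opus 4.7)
The plan is to reduce the frame inequalities on $L^2(E)$ to a sampling inequality in the Paley--Wiener space via the Fourier transform, and then to handle the upper and lower bounds separately. More precisely, since $E$ is compact and $g\in L^2(E)\subseteq L^2(\R^d)$, Plancherel gives $\|\hat g\|_{L^2(\hat\R^d)}=\|g\|_{L^2(E)}$, and the Paley--Wiener theorem identifies $\hat g$ with an entire function of exponential type whose indicator diagram is controlled by the support function of $E$. Since $\langle g,e_k\rangle_\H=\hat g(k)$, the frame property of $\{e_k\}_{k\in Z}$ in $L^2(E)$ is equivalent to the sampling inequality
\[
A\,\|F\|_{L^2(\hat\R^d)}^2 \le \sum_{k\in Z} |F(k)|^2 \le B\,\|F\|_{L^2(\hat\R^d)}^2,
\qquad F\in PW_E,
\]
where $PW_E:=\{\hat g:g\in L^2(E)\}$.

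For the upper (Bessel) bound I would use that $Z$ is relatively separated: write $Z=\bigcup_{i=1}^N Z_i$ with each $Z_i$ separated. On a separated set, a standard Plancherel--P\'olya estimate controls $\sum_{k\in Z_i}|F(k)|^2$ by $\|F\|_{L^2}^2$. This is proved by covering each $k\in Z_i$ by a small ball $B_k$ of fixed positive radius (possible by separation), bounding $|F(k)|^2$ by the mean value of $|F|^2$ over $B_k$ up to a Bernstein-type oscillation correction coming from the fact that $F$ is of bounded exponential type, and using that the $B_k$ are essentially disjoint. Summing over the finitely many subsets $Z_i$ gives $B\lesssim N$.

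For the lower bound, which I expect to be the main obstacle, I would appeal to Beurling's covering/sampling principle adapted to the anisotropic norm $|\cdot|_{E^\circ}$. The heuristic is: given $F\in PW_E$ and $\hat y\in\hat\R^d$, by $\delta$-density there exists $k\in Z$ with $|k-\hat y|_{E^\circ}\le\delta$; a sharp Bernstein inequality for functions of exponential type $E$ then shows
\[
|F(\hat y)-F(k)| \le c\,\delta \,\sup_{\hat z}|F(\hat z)|
\]
on the appropriate scale, where $c$ is an explicit constant tied to the geometry of $E$ through its polar body $E^\circ$. Integrating in $L^2$ over a Voronoi-type partition $\{V_k\}_{k\in Z}$ of $\hat\R^d$ adapted to $|\cdot|_{E^\circ}$, one obtains
\[
\Big|\,\|F\|_{L^2}^2 - \sum_{k\in Z} |V_k|\,|F(k)|^2\,\Big| \le C\delta\,\|F\|_{L^2}^2,
\]
where the error is bounded by combining Bernstein with the upper bound already proved. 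The threshold $\delta<1/4$ enters precisely because it makes $C\delta<1$ with the sharp Bernstein constant, so one can absorb the error term on the left and close the estimate via a Neumann-type argument, yielding a strictly positive lower frame bound $A$.

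The hard part is the sharp Bernstein estimate in several variables relative to the polar norm $|\cdot|_{E^\circ}$, and the construction of the Voronoi partition so that the cell volumes $|V_k|$ enter uniformly. This multi-dimensional version, valid for general compact convex symmetric $E$, is exactly the content of the extensions of Beurling's theorem by Benedetto--Wu and Olevskii--Ulanovskii cited in the statement; I would quote those technical inputs rather than re-derive them.
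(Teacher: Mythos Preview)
The paper does not prove this proposition at all: it is stated as a known result and attributed to the cited references \cite{1966-beurling,2000-benedetto-wu,2012-olevskii-ulanovskii}, with no accompanying argument. Your sketch --- reducing to a sampling inequality in the Paley--Wiener space, obtaining the Bessel bound from relative separation via a Plancherel--P\'olya estimate, and obtaining the lower bound from $\delta$-density through a Bernstein inequality adapted to $|\cdot|_{E^\circ}$ together with a Voronoi-partition comparison --- is a reasonable outline of the strategy underlying those references, and you yourself note that the sharp multi-dimensional Bernstein step would be quoted rather than reproved. Since the paper treats the proposition as a black box, there is no proof in the paper to compare against; your proposal simply supplies more detail than the paper does.
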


Now, assuming that $\{ \varphi_j\}_{j\in \N}$ is a frame for $L^2(E)$, we can apply Theorem~\ref{thm:main} and Corollary~\ref{cor:main} to this setting. This would provide, to our knowledge, the first result about recovery of a sparse signal from nonuniform Fourier measurements via $\ell^1$ minimization. Even if the measurement frame is generally not tight, we can provide explicit bounds for the recovery of the wavelet coefficients from nonuniform Fourier samples.

Some numerical simulations related to this framework are presented in \cite[Example~5]{2016-gataric-poon}.

\subsection{Electrical impedance tomography}
\label{sub:eit}

EIT is an imaging technique in which one wants to determine the electrical conductivity $\sigma(x)$ inside a body $\O$ from boundary voltage and current measurements. It is a non-linear inverse boundary value problem whose mathematical formulation was presented for the first time by Calder\'on \cite{calderon1980}.

Let $\O \subset \R^d$, $d \geq 3$, be an open bounded domain with Lipschitz boundary and $\sigma \in L^{\infty}(\O)$, $\sigma (x) \geq \sigma_0 > 0$ for almost every $x \in \O$, be the electrical conductivity. Given a voltage $f \in H^{1/2}(\partial \O)$ on the boundary of the domain, the associated potential $u$ is the unique $H^1(\O)$ solution of the following Dirichlet problem for the conductivity equation:
\begin{equation}\label{eq:dircond}
\Div (\sigma \nabla u) = 0 \quad \text{in } \O, \qquad u = f \quad \text{on } \partial \O,
\end{equation}
where $H^\s$, $\s >0$, are the classical Sobolev spaces. The boundary current associated with the voltage $f$ is represented by the trace of the normal derivative of the potential $u$ on $\partial \O$. More precisely, we define the Dirichlet-to-Neumann (DN) map $\Lambda_\sigma \colon H^{1/2}(\partial \O) \to H^{-1/2}(\partial \O)$ as
\begin{equation}\label{def:DNs}
\Lambda_\sigma (f) = \left. \sigma \frac{\partial u}{\partial \nu}\right|_{\partial \O},
\end{equation}
where $\nu$ is the unit outer normal to $\partial \O$ and $u$ is the unique solution of \eqref{eq:dircond}. 

Calder\'on's inverse conductivity problem asks if it is possible to determine a conductivity $\sigma$ from the knowledge of its associated DN map $\Lambda_\sigma$. Positive answers to this question have been given since 1987 \cite{Sylvester1987,Novikov1988,Nachman1996}. 

If $\sigma$ is sufficiently smooth, the problem can be reduced to the so-called Gel'fand-Calder\'on problem for the Schr\"odinger equation,
\begin{equation}\label{eq:schr}
(-\Delta +q ) \tilde u = 0, \qquad q = \frac{\Delta \sqrt{\sigma}}{\sqrt{\sigma}},
\end{equation}
via the change of variables $u = \tilde u / \sqrt{\sigma}$ in \eqref{eq:dircond}. This inverse problem consists in the reconstruction of the potential $q$ from the knowledge of the DN map
\begin{equation}\label{def:DNq}
\Lambda_q \colon \tilde u|_{\partial \O} \mapsto \left.\frac{\partial \tilde u}{\partial \nu}\right|_{\partial \O}.
\end{equation}

One of the biggest open questions concerning inverse boundary value problems such as Calder\'on's or Gel'fand-Calder\'on's is the determination of a conductivity/potential from a finite number of boundary measurements. A priori assumptions on the unknown are needed in this case, and to the best of our knowledge the only result concerns piecewise constant coefficients with discontinuities on a single convex polygon \cite{friedman1989}. For conductivities/potentials belonging to some finite dimensional subspaces, an infinite number of measurements have always been a fundamental requirement to guarantee uniqueness and reconstruction. For instance, several works have studied the general piecewise constant case with infinitely many measurements \cite{2005-alessandrini-vessella,2011-beretta-francini}. In what follows, we will consider  finitely many measurements, and present a first result in this direction for the linearized  Gel'fand-Calder\'on problem, using the theory developed in this paper. A first result for the full nonlinear problem has been obtained very recently by the authors in \cite{alberti2018calder}.

In order to linearize the problem, we assume that $q = q_0 + \delta q$ where $q_0$ is known and $\delta q$ is small. Given two boundary voltages $f,g \in H^{1/2}(\partial \O)$ we have Alessandrini's identity \cite{alessandrini1988}:
\begin{equation*}
\langle g, (\Lambda_q - \Lambda_{q_0})f\rangle_{{ H^{\frac12}(\partial \O) \times  H^{-\frac12}(\partial \O)}} = \int_{\O}\delta q\, u_g u^0_f\,dx,
\end{equation*}
where $u_g$ (resp.\ $u^0_f$) solves the Schr\"odinger equation \eqref{eq:schr} with potential $q$ (resp.\ $q_0$) and Dirichlet data $g$ (resp.\ $f$). {The quantity on the left of this identity is known since $q_0$ is known and $\Lambda_q f$ is the boundary measurement corresponding to the chosen potential $f$ ($g$ should be seen as a test function).} Since for $\delta q \approx 0$ we have $u_g \approx u^0_g$, {the linearization consists in assuming} that we can measure the quantity $\int_{\O}\delta q \,u^0_g u^0_f\,dx$ for given $f,g$. {Focusing on the solutions themselves instead of on their boundary values, this inverse problem may be rephrased as follows.}

\begin{problem*}[Linearized Gel'fand-Calder\'on problem]
Given a finite number of scalar products of the form $\int_\O \delta q\, u_1 u_2\,dx$, where $u_1$ and $u_2$ are solutions of 
\begin{equation}\label{eq:u0}
(-\Delta +q_0 ) u_i = 0
\end{equation}
in $\O$, find $\delta q \in L^2(\O)$.
\end{problem*}

Without loss of generality, we can assume that $\O\subseteq\T^d$, where $\T=[0,1]$. Extend $\delta q$ by zero to $\H:=L^2(\T^d)$ and assume that $q_0 \in H^\s(\T^d)$ for some $\s > d/2$. In the rest of this subsection, with an abuse of notation, several different positive constants depending only on $d$, $\s$ and $\|q_0\|_{H^\s(\T^d)}$ will be denoted by the same  letter $c$. By a classical uniqueness result for the Calder\'on problem \cite{Sylvester1987} we have that for every $k \in \Z^d$
and $t\ge c$ we can construct solutions $u_i^{k,t}$ of \eqref{eq:u0} in $\T^d$ of the form
\[
u_i^{k,t}(x) = e^{\zeta_i^{k,t}\cdot x}(1+r(x,\zeta_i^{k,t})),\qquad x\in\T^d,
\]
where $\zeta_i^{k,t}\in\C^d$ are such that  $\zeta_1^{k,t}+ \zeta_2^{k,t} =- 2\pi i k$ and

\begin{equation}\label{eq:est_r}
\|r(x,\zeta_i^{k,t})\|_{H^\s(\T^d)} \leq \frac{c}{t}, \qquad i=1,2.
\end{equation}
These solutions $u_i^{k,t}$ are known as exponentially growing solutions, Faddeev-type solutions \cite{faddeev1965} or complex geometrical optics (CGO) solutions.

We need to consider an ordering of $\Z^d$, namely a bijective map $\rho\colon\N\to \Z^d$, $l\mapsto k_l$.
For each $k \in \Z^d$ fix {$t_k\ge c$} and define the measurement operator $U_{GC}\colon \H \to \ell^2(\N)$ by
\begin{equation*}\label{def:U_EIT}
U_{GC} (\delta q)= (\langle \delta q, \psi_l \rangle)_l,\qquad \overline{\psi_l} = u_1^{k_l,t_{k_l}} u_2^{k_l,t_{k_l}}.
\end{equation*}
We call the family $\{ \psi_l\}_{l \in \N}$ a CGO frame (see Lemma~\ref{lem:eit} below). Using the same ordering of $\Z^d$, we define the discrete Fourier transform $F\colon \H \to \ell^2(\N)$ by
\begin{equation*}\label{def:DFT}
F(\delta q)=(\langle \delta q, e_{k_l} \rangle)_l,
\end{equation*}
where $e_k(x)=e^{2\pi ik\cdot x}$.

We can now state the following consequence of Corollary~\ref{cor:main} for the linearized EIT problem.
\begin{cor}\label{cor:eit}
Let $\{\varphi_j\}_j$ be an orthonormal basis of $L^2(\T^d)$ and $D\colon  L^2(\T^d)\to\ell^2(\N)$ be its analysis operator. Let $M,s\in \N$ and $\omega \geq 1$ be such that  $3\le s\le M$. Assume that $N\ge s$ satisfies the balancing property with respect to  $F$, $D$, $M$ and $s$ with the right-hand side of the inequalities \eqref{bp1} and \eqref{bp2} divided by 2. Assume that there exists $C_1>0$ such that
\begin{equation}\label{hyp:asyn}
\sup_{j\in \N} |\langle \varphi_j,e_{k_l}\rangle| \le \frac{C_1}{\sqrt{l}}\qquad
\text{and}\qquad
 \sup_{l \in\N} |\langle \varphi_j,e_{k_l}\rangle| \le \frac{C_1}{\sqrt{j}}
\end{equation}
for every $l\le N$ and $j>M$.
 Let $\{\psi_l\}_l$ be the CGO frame constructed with {$t_{k} \geq c\sqrt{N} (|k|^\s+1)$ for every $k\in\Z^d$}. Assume
\begin{equation*}\label{eq:EIT}
m \geq C \, \omega^2 s \log^2 N,
\end{equation*}
where $C>0$ is a constant depending only on $C_1$.

Sample $m$ indices $l_1, \ldots, l_m$ indipendently from $\{1,\ldots,N\}$  according to the probability distribution
$ \nu_l = C_N \lceil (C_1+1)^2 N/l\rceil$, where $C_N = \left(\sum_{l=1}^N \lceil (C_1+1)^2 N /l \rceil \right)^{-1}$, and set $\Omega = \{l_1,\ldots,l_m\}$.

Let $\delta q \in L^2(\T^d)$ and $\eta \in \C^m$ be such that $\norm{\eta}_w\le\epsilon$ for some $\epsilon\ge 0$, where $\| \eta\|_w^2 = \sum_{i=1}^m \frac{|\eta_i|^2}{\lceil (C_1+1)^2 N /l_i \rceil}$. Let $\zeta_i = \int_{\T^d} \delta q\, \overline{\psi_{l_i}}\,dx+\eta_i$ be the known noisy measurements.
 Let $g \in L^2(\T^d)$ be a minimizer of
\begin{align*}\label{min:eit}
\inf_{\substack{
 g \in L^2(\T^d) \\  D  g \in \ell^1(\N)
}} \Vert D g \Vert_1 \quad \text{subject to } \norm{\left(\begingroup\textstyle\int\endgroup_{\T^d} g\, \overline{\psi_{l_i}}\,dx - \zeta_i\right)_i}_w \le\epsilon.
\end{align*}
Then, with probability exceeding $1-e^{-\omega}$, we have
\[
\norm{g - \delta q}_{L^2(\T^d)} \leq 80\sigma_{s,M}(D(\delta q)) + C' \sqrt{ \frac{\omega s N\log N}{m}} \epsilon,
\]
where $C'$ is a universal constant.
\end{cor}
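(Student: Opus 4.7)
The plan is to view the CGO frame $\{\psi_l\}_l$ as a small perturbation of the Fourier basis $\{e_{k_l}\}_l$ and then apply Corollary~\ref{cor:main} with the CGO analysis operator $U$ and the wavelet analysis operator $D$, transferring to the CGO setting the estimates already established for the Fourier--wavelet pair in $\S$\ref{sub:wavelets}. The cancellation $\zeta_1^{k,t}+\zeta_2^{k,t}=-2\pi ik$ in the CGO ansatz gives the expansion
\[
\psi_l = e_{k_l}+\epsilon_l,\qquad \epsilon_l := e_{k_l}\,\overline{(r_1+r_2+r_1 r_2)},
\]
with $r_i := r(\cdot,\zeta_i^{k_l,t_{k_l}})$. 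Combining \eqref{eq:est_r}, the embedding $H^\s(\T^d)\hookrightarrow L^\infty(\T^d)$ (valid since $\s>d/2$) and the choice $t_{k_l}\ge c\sqrt N(|k_l|^\s+1)$ with $c$ chosen sufficiently large, one obtains
\[
\|\epsilon_l\|_{L^2(\T^d)}\le \frac{1}{\sqrt{N}(|k_l|^\s+1)},\qquad \sum_{l\in\N}\|\epsilon_l\|^2\le \frac{C}{N}.
\]
Writing $U=F+E$ with $(Eg)_l=\langle g,\epsilon_l\rangle_{L^2(\T^d)}$, the second estimate gives $\|E\|\le C/\sqrt N$. Hence (enlarging $c$ if necessary) $U^{*}U$ is a small perturbation of the identity, so $\{\psi_l\}_l$ is a frame of $L^2(\T^d)$ with frame constants close to $1$ (giving $\k_1\le 4$), and a Neumann-series argument yields $\|\tilde\psi_l-\psi_l\|\le C/\sqrt N$.

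Next, I would verify the coherence hypothesis \eqref{def:async} of Corollary~\ref{cor:main} with the weight $w_l=(C_1+1)/\sqrt{l}$. The first half of \eqref{hyp:asyn}, together with $\|\epsilon_l\|\le 1/\sqrt N\le 1/\sqrt l$ for $l\le N$, gives
\[
|\langle\varphi_j,\psi_l\rangle|\le|\langle\varphi_j,e_{k_l}\rangle|+\|\epsilon_l\|\le\frac{C_1}{\sqrt l}+\frac{1}{\sqrt N}\le\frac{C_1+1}{\sqrt l},
\]
and the three remaining inner products in Definition~\ref{def:mu} obey the same bound via the dual-frame perturbation. Consequently $\|w\|_{\C^N}^2\le (C_1+1)^2(\log N+1)$, while $\k_2=B_{s,M}=\eta_{s,M}=1$ because $\{\varphi_j\}_j$ is an orthonormal basis.

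The remaining step is to transfer the balancing property and the factor $\tilde M$ from $F$ to $U$. All operators appearing in \eqref{bp1}--\eqref{bp2} for $(U,D)$ differ from their $(F,D)$-counterparts by at most $C/\sqrt N$ in operator norm (via $U=F+E$ and a Neumann expansion of $(U^{*}U)^{-1}$); the factor-$2$ slack in the hypothesis on $N$, combined with the freedom to enlarge $c$ in $t_{k_l}\ge c\sqrt N(|k_l|^\s+1)$, makes the perturbation negligible and yields the balancing property for $U$. An analogous perturbation argument for $\tilde M$, combined with the second half of \eqref{hyp:asyn}, Remark~\ref{rem:M} and the frequency-localization $\epsilon_l=e_{k_l}\bar h_l$ with $\|h_l\|_{H^\s}\lesssim 1/(\sqrt N(|k_l|^\s+1))$, provides a polynomial bound on $\tilde M(\alpha)$ in the range of $\alpha$ relevant to \eqref{eq:mcor}, so that $\log(\k_1\k_2\tilde M(\cdot))=O(\log N)$. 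Plugging $\k_1\le 4$, $\k_2=B_{s,M}=\eta_{s,M}=1$, $\|w\|_{\C^N}^2=O(\log N)$ and $\log\tilde M=O(\log N)$ into \eqref{eq:mcor} and into the error bound of Corollary~\ref{cor:main} collapses the former to $m\ge C\omega^2 s\log^2 N$ and the latter to the one stated, concluding the proof.

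The main obstacle is the last step. A naive Bessel estimate $\sum_{l\le N}|\langle\epsilon_l,\varphi_j\rangle|^2\le \|E\|^2=O(1/N)$ does not decay in $j$ and therefore fails to produce a finite $\tilde M(\alpha)$ at the very small $\alpha$ appearing in \eqref{eq:mcor}; one must instead exploit the concentration of $\widehat{\epsilon_l}$ around $k_l$ (coming from the smoothness of $h_l$) together with the wavelet decay encoded in \eqref{hyp:asyn} to recover the required $j$-decay.
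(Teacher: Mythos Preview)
Your proposal is correct and follows essentially the same route as the paper: write $U_{GC}=F+E$ with $\|E\|=O(1/\sqrt N)$ via Lemma~\ref{lem:eit}, transfer the balancing property and the coherence bound \eqref{def:async} from the Fourier basis to the CGO frame by perturbation, and then invoke Corollary~\ref{cor:main}. The obstacle you flag for $\tilde M$ is resolved in the paper exactly as you anticipate: writing $\psi_l=e_{k_l}r_l$ and using the $H^\s$ regularity of the correction terms to bound $\|F r_l\|_{\ell^1}\le 2$ (see \eqref{est:rl1} and the surrounding argument), one gets $|\langle\varphi_j,\psi_l\rangle|\le \|F r_l\|_{\ell^1}\sup_h|\langle\varphi_j,e_{k_l+k_h}\rangle|\le 2C_1/\sqrt j$ for $j>M$, which is precisely the quantitative ``frequency-localization'' you describe and feeds directly into Remark~\ref{rem:M}.
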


Corollary \ref{cor:eit} provides a first general recipe to recover or approximate a sparse or compressible conductivity from a small number of linearized EIT measurements. The assumption that the sparsifying basis $\{\varphi_j\}_{j \in \N}$ and the Fourier basis $\{e_{k_l}\}_{l \in \N}$ must be asymptotically incoherent is not restrictive: as already mentioned in $\S$\ref{sub:wavelets}, a large class of wavelet bases satisfy  \eqref{hyp:asyn} \cite{2016arXiv161007497J}. Note that in the 1D Fourier-Wavelet case we have $N = O(M \log M)$ (Remark~\ref{rem:balancing2}). It would be very interesting to test this algorithm numerically: this is left for future work.

 Note that for the incoherence and balancing property we  made assumptions only on the Fourier basis and not directly on the CGO frame $\{ \psi_l\}_l$. This is possible thanks to the following lemma, which also shows that $U_{GC}$ is an invertible operator with bounded inverse, provided that the $t_k$s are chosen big enough.

\begin{lemma}\label{lem:eit}
{There exists $c'>0$ depending only on $d$, $\s$ and $\|q_0\|_{H^\s(\T^d)}$ such that if $t_k\ge c'\lambda(|k|^\s+1)$ for every $k\in\Z^d$ and for some $\lambda\ge 2$ then the operator $U_{GC}$ is bounded and invertible and
\begin{align}\label{est:ugc}
& \|U_{GC}\|_{\H \to \ell^2(\N)} \leq  3 /2,\qquad&&\|U_{GC}^{-1}\|_{\ell^2(\N)\to \H} \leq 2,\\ \label{est:ugcF}
&\|U_{GC}-F\|_{\H \to \ell^2(\N)} \leq {1}/{\lambda}, \qquad &&\|U_{GC}^{-1}-F^\ast\|_{\ell^2(\N) \to \H} \leq 1/{\lambda},\\ \label{est:rl1}
&\sup_{l \in \N} \| F (r(\,\cdot\,,\zeta_i^{k_l,t_{k_l}}))\|_{\ell^1(\N)} \le {1}/{\lambda}, && \text{for } i = 1,2.
\end{align}
}
\end{lemma}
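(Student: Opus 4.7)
The plan is to prove everything by reducing $U_{GC}$ to a small perturbation of $F$ and exploiting the fact that $F$ is unitary (Parseval on $L^2(\T^d) \cong \ell^2(\N)$). The starting point is the algebraic identity
\[
u_1^{k,t}\,u_2^{k,t}(x) = e^{(\zeta_1^{k,t}+\zeta_2^{k,t})\cdot x}(1+r_1)(1+r_2) = e^{-2\pi i k \cdot x}(1 + R_{k,t}),
\]
with $R_{k,t} := r_1 + r_2 + r_1 r_2$ (we drop the argument $\zeta_i^{k,t}$). Since $\overline{\psi_l} = u_1^{k_l,t_{k_l}} u_2^{k_l,t_{k_l}}$, this yields $(U_{GC}\delta q)_l = (F\delta q)_l + \langle \delta q, e_{k_l}\overline{R_{k_l,t_{k_l}}}\rangle$ and hence an explicit representation of $U_{GC}-F$ as an operator sending $\delta q$ to the sequence $(\langle \delta q, e_{k_l}\overline{R_{k_l,t_{k_l}}}\rangle)_l$. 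Since $s > d/2$, $H^s(\T^d)$ is a Banach algebra, so combining \eqref{eq:est_r} with the triangle inequality gives $\|R_{k,t}\|_{H^s(\T^d)} \le c/t$ (enlarging $c$ if needed to absorb the algebra constant and the product term).

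The key estimate on $\|U_{GC}-F\|$ comes from a Hilbert--Schmidt-type bound: for an operator of the form $f \mapsto (\langle f,g_l\rangle)_l$, Cauchy--Schwarz term by term yields operator norm squared at most $\sum_l \|g_l\|_{L^2}^2$. Here $\|e_{k_l}\overline{R_{k_l,t_{k_l}}}\|_{L^2} = \|R_{k_l,t_{k_l}}\|_{L^2} \le \|R_{k_l,t_{k_l}}\|_{H^s}$. Using the hypothesis $t_k \ge c'\lambda(|k|^s + 1)$, this gives
\[
\|U_{GC}-F\|^2 \le \sum_{k\in\Z^d} \frac{c^2}{(c'\lambda(|k|^s+1))^2} \le \frac{1}{c'^2 \lambda^2}\sum_{k\in\Z^d}\frac{c^2}{(|k|^s+1)^2}.
\]
The last sum converges because $2s > d$, and by choosing $c'$ large enough (depending only on $d$, $s$, $\|q_0\|_{H^s}$) I obtain $\|U_{GC}-F\| \le \tfrac{1}{2\lambda}$, which handles the first half of \eqref{est:ugcF}.

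With $F$ unitary, $\|U_{GC}\| \le \|F\|+\|U_{GC}-F\| \le 1+\tfrac{1}{2\lambda} \le \tfrac{3}{2}$. For invertibility I write $U_{GC} = F(I + B)$ with $B := F^\ast(U_{GC}-F)$; since $\|B\| \le \tfrac{1}{2\lambda} \le \tfrac14$, Neumann series gives $I+B$ invertible with $\|(I+B)^{-1}\| \le \tfrac{2\lambda}{2\lambda-1} \le 2$, so $U_{GC}^{-1}=(I+B)^{-1}F^\ast$ satisfies $\|U_{GC}^{-1}\| \le 2$. Using $(I+B)^{-1}-I = -B(I+B)^{-1}$,
\[
\|U_{GC}^{-1}-F^\ast\| \le \|B\|\,\|(I+B)^{-1}\| \le \tfrac{1}{2\lambda}\cdot \tfrac{2\lambda}{2\lambda-1} = \tfrac{1}{2\lambda-1} \le \tfrac{1}{\lambda},
\]
because $\lambda \ge 2$. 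Finally, for \eqref{est:rl1} I use the standard embedding $H^s(\T^d) \hookrightarrow \mathcal{F}\ell^1$ valid for $s>d/2$: Cauchy--Schwarz in Fourier space gives $\|Ff\|_{\ell^1} \le \bigl(\sum_{k\in\Z^d}(1+|k|^2)^{-s}\bigr)^{1/2}\|f\|_{H^s} =: C_{d,s}\|f\|_{H^s}$, and applying this to $f = r(\cdot,\zeta_i^{k_l,t_{k_l}})$ together with \eqref{eq:est_r} yields $\|F r(\cdot,\zeta_i^{k_l,t_{k_l}})\|_{\ell^1} \le C_{d,s} c / t_{k_l} \le C_{d,s} c / (c'\lambda)$, which is $\le 1/\lambda$ after enlarging $c'$ once more.

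The only subtlety is bookkeeping of the absolute constant $c'$: every term above (Sobolev algebra constant, $\ell^2$-sum of $(|k|^s+1)^{-2}$, embedding constant $C_{d,s}$, and the original $c$ in \eqref{eq:est_r}) contributes a factor depending only on $d$, $s$, $\|q_0\|_{H^s(\T^d)}$, and these are absorbed into the single constant $c'$ at the start. I do not expect any genuine obstacle; the chain of Neumann-series and Sobolev inequalities is the whole story.
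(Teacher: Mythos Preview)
Your proof is correct and follows essentially the same route as the paper's: the algebraic decomposition $\overline{\psi_l}=e^{-2\pi i k_l\cdot x}(1+R_{k_l,t_{k_l}})$, a Hilbert--Schmidt-type bound on $U_{GC}-F$ using summability of $(|k|^\s+1)^{-2}$ over $\Z^d$, Neumann series for invertibility, and the Cauchy--Schwarz embedding $H^\s\hookrightarrow\mathcal F\ell^1$ for \eqref{est:rl1}. If anything, your bookkeeping is slightly tidier: by securing $\|U_{GC}-F\|\le 1/(2\lambda)$ you make the bound $\|U_{GC}^{-1}-F^*\|\le 1/(2\lambda-1)\le 1/\lambda$ transparent, whereas the paper writes $\|U_{GC}-F\|\le 1/\lambda$ and then asserts the inverse bound ``readily'' (which requires the same enlargement of $c'$ you make explicit).
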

\begin{proof}
Since $\overline{\psi_l}=e^{- 2\pi ik_l\cdot x}(1+r(x,\zeta_1^{k_l,t_{k_l}})) (1+r(x,\zeta_2^{k_l,t_{k_l}})) $, setting $r_i^l=r(\,\cdot\,,\zeta_i^{k_l,t_{k_l}})$ we have
\begin{equation*}
\norm{e_{k_l}-\psi_l}_{L^2(\T^d)} \le
\|r_1^l\|_{L^2(\T^d)}+\|r_2^l\|_{L^2(\T^d)}
+\|r_1^l\|_{L^{\infty}(\T^d)}\|r_2^l\|_{L^2(\T^d)}
\le \frac{c}{|t_{k_l}|},
\end{equation*}
where we used estimate \eqref{eq:est_r} and the Sobolev embedding $H^\s(\T^d) \hookrightarrow L^{\infty}(\T^d)$ for $\s > d/2$. {This implies $|((U_{GC}-F)\delta q)_l|\leq  \|\delta q\|_{L^2(\T^d)}\frac{c}{c'\lambda(|k_l|^\s+1)}$, so that
\begin{equation*}
\|U_{GC}-F\|_{\H \to \ell^2(\N)}
 \leq \frac{c}{c'\lambda} \Biggl(\sum_{k\in\Z^d}\frac{1}{(|k|^\s+1)^2}\Biggr)^\frac12
\le  \frac{c}{c'\lambda}
\end{equation*}
(note that the series is finite as $2\s > d$). Choosing $c'\ge c$   immediately yields $\|U_{GC}-F\|\le \frac{1}{\lambda}\le\frac12$. The first part of \eqref{est:ugcF} follows. Hence $\norm{U_{GC}}\le\norm{U_{GC}-F} + \norm{F}\le \frac32$, since $F$ is an isometry.} Moreover, we have the Neumann series expansion
\begin{equation*}
U_{GC}^{-1} = F^{-1}\sum_{k=0}^{+\infty}(-1)^k\left( (U_{GC}-F)F^{-1}\right)^k,
\end{equation*}
and so $\norm{U_{GC}^{-1}}\le 2$, as desired. In addition, the latter identity readily gives the second part of \eqref{est:ugcF}, since $F^{-1} = F^\ast$. The last estimate \eqref{est:rl1} can be proven as follows. {We readily derive
\[
\begin{split}
\| F r^l_i\|_1 &= \sum_{h \in \N}|(F r^l_i)_h| (1+|k_h|^2)^{\frac{\s}{2}}(1+|k_h|^2)^{-\frac{\s}{2}}\\ & \leq \sqrt{\sum_{h \in \N}|(F r^l_i)_h|^2 (1+|k_h|^2)^\s}\sqrt{\sum_{h \in \N}(1+|k_h|^2)^{-\s}}.
\end{split}
\]
Since $\s > d/2$, the series on the right is convergent. Hence, by definition of the Sobolev norm we obtain
\[
\| F r^l_i\|_1  \leq c \|r_i^l\|_{H^\s(\T^d)} \le \frac{c}{ c'\lambda(|k_l|^\s+1)} \le \frac{c}{ \lambda c'}\le \frac{1}{\lambda},
\]
provided that $c'\ge c$.} This finishes the proof of the lemma.
\end{proof}
We are now in a position to prove Corollary~\ref{cor:eit}.
\begin{proof}[Proof of Corollary~\ref{cor:eit}]
We want to apply Corollary~\ref{cor:main} to $U_{GC}$ and $D$. Set $c= 84c'$, where $c'$ is given by Lemma~\ref{lem:eit}, so that $\lambda = 84\sqrt{N}$.

First note that $\k_1 = 4$ by \eqref{est:ugc} and $\k_2=1$ since $\{\varphi_j\}_j$ is an orthonormal basis. We begin by showing that $N$ satisfies the balancing property (with the original right hand side in the definition) with respect to  $U_{GC}$, $D$, $M$ and $s$. Consider condition \eqref{bp2} (\eqref{bp1} can be shown by using the same argument): by the triangle inequality we have
\begin{multline*}
\|P_\Delta^\perp D^{-*} P_{\W}^\perp U_{GC}^* P_N U_{GC}^{-*} P_{\W}\|_{\H \to \ell^{\infty}}
\le
\|P_\Delta^\perp D^{-*} P_{\W}^\perp (U_{GC}^*-F^*) P_N U_{GC}^{-*} P_{\W}\|_{\H \to \ell^{\infty}}\\
+ \|P_\Delta^\perp D^{-*} P_{\W}^\perp F^* P_N (U_{GC}^{-*}-F) P_{\W}\|_{\H \to \ell^{\infty}}
+ \|P_\Delta^\perp D^{-*} P_{\W}^\perp F^* P_N F P_{\W}\|_{\H \to \ell^{\infty}},
\end{multline*}
and so by \eqref{est:ugc}-\eqref{est:ugcF}  and  since $N$ satisfies the balancing property with respect to  $F$, $D$, $M$ and $s$ with the bounds divided by 2 we obtain
\[
\|P_\Delta^\perp D^{-*} P_{\W}^\perp U_{GC}^* P_N U_{GC}^{-*} P_{\W}\|_{\H \to \ell^{\infty}} 
\le \frac{2}{\lambda} + \frac{1}{\lambda} + \frac{1}{28\sqrt{s}} \le \frac{1}{14\sqrt{s}},
\]
since $\lambda \geq 84 \sqrt{s}$.

Next, we show that  estimate \eqref{def:async} with weights $w_l=\frac{C_1+1}{\sqrt{l}}$ is valid for the CGO frame:
\begin{align*}
&\sup_{j\in \N} \max\{ |\langle \varphi_j,\psi_{l}\rangle_\H|, |\langle \varphi_j,\tilde \psi_l \rangle_\H| \} \le \frac{C_1+1}{\sqrt{l}},\qquad l=1,\dots,N,
\end{align*}
This inequality is readily obtained from the first bound in \eqref{hyp:asyn} and the estimates \eqref{est:ugcF}, since $\lambda \geq \sqrt{N}$.

In order to simplify the term $\tilde M$ in the estimate for the number of measurements $m$, we now show the following decay of the coherence:
\begin{equation}\label{eq:estj}
\sup_{l \le N} |\langle \varphi_j,\psi_{l}\rangle_\H| \leq \frac{2C_1}{\sqrt{j}} ,\qquad j>M.
\end{equation}
 Write $\psi_l = e_{k_l}r_l$, 
where  $\overline r_l = 1 + r_1^l+r_2^l+r_1^lr_2^l$ and $r_i^l = r(\,\cdot\,,\zeta_i^{k_l,t_{k_l}})$. Expand $r_l$ in Fourier series $r_l = \sum_{h \in \N} (F r_l)_h e_{k_h}$, so that $e_{k_l}r_l = \sum_{h \in \N} (F r_l)_h e_{k_h+k_l}$. Then we have
\[
 |\langle \varphi_j,\psi_{l}\rangle_\H|=|\langle \varphi_j,e_{k_l}r_l \rangle_\H|\leq  \| F r_l\|_{\ell^1(\N)} \sup_{h \in \N}  |\langle \varphi_j,e_{k_l+k_h} \rangle_\H| \leq \frac{C_1  \| F r_l\|_{\ell^1(\N)}}{\sqrt{j}},
\]
using \eqref{hyp:asyn}. By Young's inequality for discrete convolution and \eqref{est:rl1} we have
\[
\begin{split}
 \| F r_l\|_{\ell^1(\N)} &\le  \| F 1\|_{\ell^1(\N)} + \| F \overline{r^l_1}\|_{\ell^1(\N)} +  \| F \overline{r^l_2}\|_{\ell^1(\N)} + \| (F \overline{r^1_l})* (F\overline{r^2_l})\|_{\ell^1(\N)} \\
&\le  1 + \frac{1}{\lambda}+ \frac{1}{\lambda} + \frac{1}{\lambda^2} \\
&\le 2.
\end{split}
\]
Combining the last two inequalities we obtain \eqref{eq:estj}. By Remark~\ref{rem:M}, this implies $\tilde M ( \alpha) \leq 16 C_1^2 \frac{N}{\alpha^2}$.

We can now  apply Corollary \ref{cor:main}, and the result follows.
\end{proof}

Let us make some concluding remarks about this inverse problem.
We have chosen the functions $u_i^{k,t}$ from \cite{Sylvester1987} for the sake of simplicity. Other families of functions with similar decay properties might be used as well, leading to similar results as Lemma \ref{lem:eit}, with lower regularity assumptions on the coefficients to be recovered.

In two dimensions it is unclear if results such as Lemma \ref{lem:eit} could hold: for the linearized Calder\'on problem we cannot use CGO solutions to approximate the Fourier transform as in higher dimensions. For the linearized Gel'fand-Calder\'on problem one could use the Bukhgeim approach \cite{bukhgeim2008} to recover pointwise values of a potential via stationary phase type techniques.

More generally, the results of this subsection may be applied to a large class of linearized inverse boundary value problems for which we have families of CGO solutions with good decay properties: inverse problems for the Helmoltz equation, the elasticity system and Maxwell's equations, for instance.

\subsection{An inverse problem for the wave equation}
\label{sub:wave}

Our main result can also be applied to another linear infinite dimensional inverse problem, the observability problem for the wave equation \cite{1986-ho,1988-lions,2000-lasiecka-triggiani,2012-ervedoza-zuazua,alberti-capdeboscq-2016}. This is a classical inverse problem, and consists in the reconstruction of the initial source of the wave equation from boundary measurements of the solution. In addition to the direct link with control theory, this inverse problem appears in the formulation of thermoacoustic and photoacoustic tomography in a bounded domain \cite{2010-ammari-bossy-jugnon-kang,2013-kunyansky-holman-cox,2015-acosta-montalto,2015-holman-kunyansky,2016-chervova-oksanen} (for the free-space formulation, see \cite{2015-kuchment-kunyansky}).

Let $d\ge 2$ and $\O\subseteq\R^d$ be a bounded smooth domain. We consider the following initial value problem for the wave equation\footnote{For simplicity, we consider the case of constant sound speed (normalized to $1$), but this analysis may be generalized to the case of a spatially varying sound speed $c$. Similarly, considering a non-homogeneous initial condition for $\partial_t p$ would not add any substantial complications.}
\begin{equation}\label{def:wavefor}
\left\{
\begin{array}{ll}
\partial_{tt}p-\Delta p = 0 & \text{in } (0,T) \times \O,\\
p(0,\cdot) = f & \text{in } \O,\\
\partial_t p(0,\cdot) = 0 &  \text{in } \O,\\
p= 0 &  \text{on } (0,T) \times \partial \O,
\end{array}\right.
\end{equation}
where $T>0$ and $f\in H^1_0(\O):=\{u\in H^1(\O):u=0\;\text{on $\partial\O$}\}$ is the unknown initial condition. The above problem admits a unique weak solution $p\in C([0,T];H^1_0(\O)$) (see \cite[Section~7.2]{2010-evans} and \cite[Theorem~10.14]{2011-brezis}). The inverse problem of interest may be formulated as follows.

\begin{problem*}[Observability of the wave equation]
Supposing that the trace of the normal derivative $\partial_\nu p$ is measured on an open subset $\Gamma$ of $\partial \O$ for all $t \in (0,T)$, where $\nu$ is the exterior unit normal to $\partial\Omega$, find the initial condition $f$ in $\O$.
\end{problem*}

Observe that the forward problem is always well-posed:  by an inequality of Rellich's, the measurement operator
\[
V\colon H^1_0(\O) \to L^2((0,T)\times\Gamma),\qquad f\mapsto \partial_\nu p,
\]
where  $p$ is the solution of \eqref{def:wavefor}, is well-defined and bounded \cite[(1.20)]{1988-lions}.

In order to apply our techniques to the inverse problem we need more than continuity, namely injectivity and bounded invertibility of the map $V$. In this case, $f$ is uniquely and stably determined by the boundary data $Vf=\partial_\nu p$ on $(0,T) \times \Gamma$. This solves the above-mentioned inverse problem when we can perfectly measure $\partial_\nu p$ on the whole $(0,T)\times \Gamma$.

There is a wide literature concerning assumptions on $\Gamma$ and $T$ that guarantee the invertibility of $V$ (see \cite{1992-blr} and references therein). Here we only mention a sufficient condition by Ho \cite{1986-ho} and J.\ L.\ Lions \cite{1988-lions} (see also \cite[$\S$5.3.4]{2012-ervedoza-zuazua} and \cite[Theorem~2.8]{alberti-capdeboscq-2016}):
if 
$
\{x \in \partial \O : (x-x_0)\cdot \nu > 0 \} \subseteq \Gamma
$
for some $x_0 \in \R^d$ 
and
$
T >  2 \sup_{x\in \O}|x-x_0|,
$
then $V$ is invertible with bounded inverse.  In the following, we shall assume that $V$ is invertible with bounded inverse.

In order to let compressed sensing  come into play, we will make use of the following identity, which follows by a simple integration by parts \cite[Corollary~2.13]{alberti-capdeboscq-2016}. For every  $v \in L^2((0,T)\times \Gamma)$, we have
\begin{equation}\label{eq:idwave}
(Vf,\bar v)_{L^2((0,T)\times \Gamma)}=\int_{(0,T) \times \Gamma}\partial_\nu p\,  v \,dt d \sigma =\langle  \partial_t {U_v(0,\cdot)},f\rangle_{H^{-1}(\O), H^1_0(\O)},
\end{equation}
where $U_v \in C\left([0,T]; L^2(\O)\right) \cap C^1\left( [0,T] ; H^{-1}(\O)\right)$ is the solution of 
\begin{equation}\label{def:wavetrans}
\left\{
\begin{array}{ll}
\partial_{tt}U_v-\Delta U_v = 0 & \text{in } (0,T) \times \O,\\
U_v(T,\cdot) = 0 & \text{in } \O,\\
\partial_t U_v(T,\cdot) = 0 &  \text{in } \O,\\
U_v= \chi_\Gamma v &  \text{on } (0,T) \times \partial \O,
\end{array}\right.
\end{equation}
which is defined in the sense of  transposition \cite{1988-lions-a,1988-lions}, 
where $\chi_\Gamma$ is the characteristic function of $\Gamma$ and $H^{-1}(\O)$ is the dual of $H^1_0(\O)$. 
Identity \eqref{eq:idwave} shows that we can use the dual solution $U_v$ to \textit{probe} the unknown $f$: we measure different moments of $f$ by varying $v$. 

Since observability is equivalent to exact controllability \cite{1988-lions-a,1988-lions}, we have that for every $h \in H^{-1}(\O)$ there exists $v_h \in L^2\left( (0,T) \times \Gamma \right)$ such that $\partial_t U_{v_h} (0,\cdot) = h$.  The control $v_h$ can be explicitly constructed via an optimization problem. By \eqref{eq:idwave} we obtain:
\begin{equation*}
(V f, \overline{v_h})_{L^2((0,T)\times \Gamma)} = \langle h, f \rangle_{H^{-1}(\O), H^1_0(\O)}.
\end{equation*}
Let $-\Delta\colon H^1_0(\O)\to  H^{-1}(\O)$ be the Dirichlet Laplacian. By definition we have $\langle -\Delta\overline{\psi},g\rangle_{H^{-1}(\O), H^1_0(\O)} = (g,\psi)_{H^1_0(\O)}$ (the scalar product in $H^1_0(\O)$ is defined by $ (g,\psi)_{H^1_0(\O)}=\int_\O\nabla g\cdot \overline{\nabla \psi}\,dx$).
 Inserting this expression into the above identity yields
\begin{equation}\label{eq:idwave2}
(V f, {v_{-\Delta\psi}})_{L^2((0,T)\times \Gamma)}  =(V f, \overline{v_{-\Delta\overline\psi}})_{L^2((0,T)\times \Gamma)}  = (f, \psi )_{H^1_0(\O)}, \quad \psi \in H^1_0(\O).
\end{equation}

Let $\{ \psi_l\}_{l\in\N}$ be a frame of $H^1_0(\O)$ and $\{v_l\}_{l\in\N}$ be a family  of $L^2((0,T)\times \Gamma)$ such that
\begin{equation}\label{eq:v-psi}
v_l = v_{-\Delta\psi_l}\;  \iff \;\psi_{l} = (-\Delta)^{-1} \partial_t U_{{v_l}}(0,\cdot).
\end{equation}
These relations show that one may first choose the frame $\{ \psi_l\}_l$ and then construct the related family $\{v_l\}_l$, or viceversa. Define the measurement operator
\begin{align*}
U_{obs} : H^1_0(&\O) \to \ell^2(\N),\qquad f \mapsto  \bigl(( f, \psi_l )_{H^1_0(\O)}\bigr)_l,
\end{align*}
which can be measured, thanks to \eqref{eq:idwave2}. Then, representing $f$ in another frame $\{ \varphi_j \}_j$ of $H^1_0(\O)$ we can apply Theorem \ref{thm:main} (or Corollary \ref{cor:main}) to this setting, provided that $\{ \psi_l\}_l$ and  $\{ \varphi_j \}_j$ are incoherent (or asymptotically incoherent). Therefore, via $\ell^1$ minimization we can reconstruct $f$ from the partial measurements $\{( f, \psi_l )_{H^1_0(\O)}\}_{l\in\Omega}$, for some subsampling subset $\Omega\subseteq\N$, provided that $f$ is sparse with respect to   $\{ \varphi_j \}_j$.

Note that, in order to measure $( f, \psi_l )_{H^1_0(\O)}=(V f, v_l)_{L^2((0,T)\times \Gamma)}$, in principle we might need to know $Vf$ on the whole $(0,T) \times \Gamma$. The subsampling procedure would then become useless. In order to overcome this issue, one has to choose the functions $v_l$ in such a way that the computation of each $(V f, v_l)_{L^2((0,T)\times \Gamma)}$ only requires a partial knowledge of $Vf$. For instance, one could choose compactly supported functions $v_l$'s in order to sample subsets of $(0,T) \times \Gamma$: this would correspond to having sensors only on particular locations of the boundary at specific times. Similarly, scalar products with slowly varying $v_l$s would correspond to local averages of $Vf$, which may be obtained with integrating area or line detectors \cite{2005-TAT-lines,2015-CS-PAT,2016-CS-PAT}. More general $v_l$'s are considered in \cite{PAT}.

In summary, the challenge is to construct families $\{\varphi_j\}_j,\{\psi_l\}_l\subseteq H^1_0(\O)$ and $\{v_l\}_l\subseteq L^2((0,T)\times \Gamma)$ such that:
\begin{itemize}
\itemsep0em
\item $\{ \psi_l\}_l$ and  $\{\varphi_j\}_j$ are frames of $H^1_0(\O)$;
\item $\{ \psi_l\}_l$ and $\{v_l\}_l$ are related via \eqref{eq:v-psi}, which involves the solution of the PDE \eqref{def:wavetrans};
\item $\{ \psi_l\}_l$ and $\{\varphi_j\}_j$ are  incoherent (or asymptotically incoherent);
\item and each scalar product $(V f, v_l)_{L^2((0,T)\times \Gamma)}$ may be computed with partial measurements of $Vf$.
\end{itemize}
A detailed analysis of these issues goes beyond the scope of this paper, and is a very interesting direction for future work, at the interface of applied harmonic analysis and PDE theory.

\section{Proof of Theorem~\ref{thm:main}}\label{sec:proof}

The aim of this section is to prove Theorem~\ref{thm:main}.

\subsection{Concentration inequalities}
Certain large deviation bounds for sums of vector and matrix valued random variables are required to prove some of the key results. Inspired by the paper of Kueng and Gross \cite{KG} we use Bernstein inequalities instead of applying Talagrand as done by Adcock and Hansen \cite{AH}. We give a particular vector inequality not depending on the dimension taken from \cite[Proposition~7]{KG} which originally appears in \cite[Chapter 6.3, Eqn.~(6.12)]{LT}  with a direct proof in \cite{G}.

\begin{lemma}[Vector Bernstein inequality]
\label{lem:vectorbern}
Let $\{X_l\}\subset \mathbb{C}^d$ be a finite sequence of independent random vectors. Suppose that $\E[X_l]=0$, $\Vert X_l \Vert_2 \leq B$ almost surely and  $\sum_l \E \left[ \Vert X_l \Vert_2^2\right]\le \sigma^2$ for some $B,\sigma>0$. Then for all $0 \leq t \leq \frac{\sigma^2}{B}$
\begin{equation}\label{vector}
\P \left( \left\Vert \sum_l X_l \right\Vert_2 \ge t\right) \leq \exp\left( \frac{-t^2}{8\sigma^2} + \frac{1}{4}\right).
\end{equation}
\end{lemma}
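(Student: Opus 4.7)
The plan is to follow the classical exponential-moments route (MGF bound plus Markov), using the smoothness of the Hilbert-space norm to handle the fact that $Z := \norm{\sum_l X_l}_2$ is not itself a sum of scalars. The statement is taken verbatim from \cite{KG,LT,G}, so the plan essentially mirrors Gross's direct proof cited there.

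Setting $S := \sum_l X_l$, the first step is centering. By independence and $\E X_l = 0$ one has $\E\norm{S}_2^2 = \sum_l \E\norm{X_l}_2^2 \le \sigma^2$, so Jensen gives $\E\norm{S}_2 \le \sigma$. In the regime $t \le \sqrt{2}\,\sigma$ the claimed inequality is automatic, since then $\exp(-t^2/(8\sigma^2)+1/4) \ge 1 \ge \P(Z \ge t)$. I may therefore restrict to $t > \sqrt{2}\,\sigma$, in which case $t - \E\norm{S}_2 \ge (1-1/\sqrt{2})\,t$.

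The heart of the proof is a Bernstein MGF estimate for the Doob martingale $Z_k := \E[Z \mid X_1,\ldots,X_k]$:
\[
\E\exp\bigl(\lambda(Z - \E Z)\bigr) \;\le\; \exp\!\left(\frac{\lambda^2 \sigma^2}{2(1-\lambda B/3)}\right), \qquad 0\le \lambda < 3/B.
\]
Because $\norm{\cdot}_2$ is $1$-Lipschitz and $\norm{X_k}_2 \le B$, the martingale increments are a.s.\ bounded by $B$; the Hilbert-space identity $\norm{v + X_k}_2^2 = \norm{v}_2^2 + 2\,\Ree\langle v,X_k\rangle + \norm{X_k}_2^2$, together with mean-zero and independence, controls the conditional variance $\E[(Z_k-Z_{k-1})^2 \mid \mathcal{F}_{k-1}]$ by $\E[\norm{X_k}_2^2 \mid \mathcal{F}_{k-1}]$, so the cumulative predictable quadratic variation is at most $\sigma^2$. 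From here the Bernstein MGF bound follows by the standard scalar recursion on $k$ (or equivalently by Freedman's martingale inequality applied to $Z_k-\E Z$).

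Finally I would apply the exponential Chebyshev inequality to $Z-\E Z$ at deviation $t - \E\norm{S}_2$, optimise in $\lambda$, and specialise to the sub-Gaussian regime $t \le \sigma^2/B$, where the factor $1-\lambda B/3$ in the denominator is bounded below by an absolute constant. This produces a bound of the form $\exp(-c\,t^2/\sigma^2)$, and matching the constants to the specific form $\exp(-t^2/(8\sigma^2)+1/4)$ is straightforward bookkeeping, with the additive $1/4$ absorbing the centering loss $\E\norm{S}_2 \le \sigma$ and the slack from the reduction above.

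The main obstacle is the variance control in the middle step: obtaining a genuine \emph{Bernstein} variance proxy $\sigma^2$ rather than the Azuma--Hoeffding proxy $\sum_l B^2$ that a bare Lipschitz argument would give. This is the only place where one uses anything beyond boundedness of the summands, and it is exactly where the $2$-smoothness of the Hilbert-space norm enters, through the cross term $2\,\Ree\langle v,X_k\rangle$ vanishing in conditional expectation.
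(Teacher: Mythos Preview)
The paper does not prove this lemma; it is stated with attribution to \cite[Proposition~7]{KG}, which in turn points to \cite[Chapter~6.3, Eqn.~(6.12)]{LT} and the direct argument in \cite{G}. Your outline follows that same route (centering via $\E\norm{S}_2 \le \sigma$, then an exponential-moment bound for the fluctuation), so there is nothing in-paper to compare against beyond those references.

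One imprecision is worth flagging. You set up the Doob martingale $Z_k = \E[\norm{S}_2 \mid \mathcal{F}_k]$ and then invoke the Hilbert-space expansion of $\norm{v+X_k}_2^2$ to control $\E[(Z_k-Z_{k-1})^2 \mid \mathcal{F}_{k-1}]$. That identity concerns the \emph{squared} norm and does not directly bound differences of the \emph{norm} martingale; the vanishing cross term is not the operative mechanism here. The bound you want follows from the $1$-Lipschitz property alone: with $X_k'$ an independent copy, $|Z_k - Z_{k-1}| \le \E_{X_k'}\norm{X_k - X_k'}_2$, whence $\E[(Z_k-Z_{k-1})^2 \mid \mathcal{F}_{k-1}] \le 2\,\E\norm{X_k}_2^2$. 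The Hilbert structure is genuinely needed only at the centering step $\E\norm{S}_2^2 = \sum_l \E\norm{X_l}_2^2$, which you already use. With this fix the scheme closes, but the specific constants $1/8$ and $1/4$ need tighter accounting than ``straightforward bookkeeping'' suggests: the centering loss $(1-1/\sqrt{2})t$ together with the extra factor $2$ in the variance proxy yields a visibly worse exponent if one simply plugs into Freedman. The proofs in \cite{LT,G} obtain the stated constants via a sharper direct exponential-supermartingale argument tailored to the norm, rather than the generic Freedman route.
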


The matrix deviation estimate that we use is due to Tropp \cite[Theorem~1.6]{Tr}.

\begin{lemma}[Matrix Bernstein inequality]\label{lem:matrixbern}
Consider a finite sequence $\{X_l\} \subset \mathbb{C}^{d\times d}$ of independent random matrices. Assume that each random matrix satisfies $\E [X_l]=0$ and $\Vert X_l \Vert \leq B$ almost surely, where $\| \cdot \|$ stands for the spectral norm, i.e.\ the natural norm induced by $\| \cdot \|_2$. Define
\[
\sigma^2 := \max \left\{\left\Vert \sum_l \E\left(X_l X_l^* \right) \right\Vert, \left\Vert \sum_l\E\left(X_l^{*}X_l \right)\right\Vert\right\}.
\]
Then for all $t\geq 0$
\begin{equation}\label{matrix}
\P \left(\left\Vert \sum_l X_l \right\Vert \geq t \right) \leq 2d \exp\left(\frac{-t^2/2}{\sigma^2 + Bt/3} \right).
\end{equation}
\end{lemma}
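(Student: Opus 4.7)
The plan is to prove Tropp's Matrix Bernstein inequality via the matrix Laplace transform method, following the approach of Ahlswede--Winter as refined in \cite{Tr}. The first step is a reduction to the Hermitian case via the standard Hermitian dilation
\[
\mathcal{H}(X) := \begin{pmatrix} 0 & X \\ X^* & 0 \end{pmatrix},
\]
a $2d\times 2d$ Hermitian matrix satisfying $\|\mathcal{H}(X)\|=\|X\|$ and $\mathcal{H}(X)^2 = \mathrm{diag}(XX^*,X^*X)$. Setting $Y_l := \mathcal{H}(X_l)$, one has $\|\sum_l Y_l\|=\|\sum_l X_l\|$ and $\|\sum_l \E[Y_l^2]\|$ coincides with the $\sigma^2$ in the statement, so it suffices to prove the bound for Hermitian summands; the dimension doubling accounts for the factor $2d$ in \eqref{matrix}.

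Next I would invoke the matrix Laplace transform bound. For a Hermitian random matrix $S$ and $\theta>0$, Markov's inequality combined with $\lambda_{\max}(e^{\theta S})\le \mathrm{tr}(e^{\theta S})$ yields
\[
\P\bigl(\|S\|\ge t\bigr) \le e^{-\theta t}\, \E\,\mathrm{tr}\exp(\theta S).
\]
The crucial ingredient is to control $\E\,\mathrm{tr}\exp(\theta\sum_l Y_l)$ when the $Y_l$ are independent. Here I would appeal to Lieb's concavity theorem, which asserts that $A\mapsto \mathrm{tr}\exp(H+\log A)$ is concave on positive definite $A$ for any fixed Hermitian $H$. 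Applied with Jensen's inequality one summand at a time, this gives the subadditivity estimate
\[
\E\,\mathrm{tr}\exp\Bigl(\theta \sum_l Y_l\Bigr) \le \mathrm{tr}\exp\Bigl(\sum_l \log \E\, e^{\theta Y_l}\Bigr).
\]

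To bound each matrix MGF, I would exploit $\E[Y_l]=0$, $\|Y_l\|\le B$ together with the operator inequality $Y_l^k \preceq B^{k-2}Y_l^2$ for $k\ge 2$ (by functional calculus): summing the Taylor series of the exponential yields, for $0<\theta<3/B$,
\[
\E\, e^{\theta Y_l} \preceq I + g(\theta)\,\E[Y_l^2], \qquad g(\theta):=\frac{\theta^2/2}{1-\theta B/3}.
\]
Since $\log(I+A)\preceq A$ for $A\succeq 0$ and $\mathrm{tr}\exp$ is monotone with respect to $\preceq$, this gives
\[
\E\,\mathrm{tr}\exp(\theta S) \le 2d\,\exp\Bigl(g(\theta)\,\Bigl\|\sum_l \E[Y_l^2]\Bigr\|\Bigr) \le 2d\, e^{g(\theta)\sigma^2}.
\]
Substituting into the Laplace bound and optimizing by taking $\theta=t/(\sigma^2+Bt/3)$, which indeed lies in $(0,3/B)$, produces exactly \eqref{matrix}.

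The main obstacle is Lieb's concavity theorem itself, a deep result in matrix analysis; I would not reproduce its proof but simply cite it, since all the remaining steps are essentially scalar Bernstein calculations lifted to matrices through monotonicity of the trace exponential. A secondary subtlety is the careful handling of non-commutativity when passing from the scalar bound $\log(1+x)\le x$ to its operator version, and ensuring Lieb's subadditivity is applied correctly term by term in the independent sum.
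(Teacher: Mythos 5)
The paper does not prove this lemma --- it is stated as a black-box result cited directly from Tropp (Theorem~1.6 of \cite{Tr}). Your outline is a correct reproduction of Tropp's own argument via the Hermitian dilation, the matrix Laplace transform bound, Lieb's concavity theorem for the subadditivity of the cumulant-generating function, the moment estimate $Y_l^k\preceq B^{k-2}Y_l^2$ leading to $\E\,e^{\theta Y_l}\preceq I+g(\theta)\E[Y_l^2]$, and the optimization $\theta=t/(\sigma^2+Bt/3)$; since that is exactly the source the paper points to, your proposal takes essentially the same approach.
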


\subsection{Six useful estimates}\label{sub:four}

Our proofs rely on several estimates. We provide them below, following mostly \cite{AH,KG,Poon2015}, and using a structure similar to \cite{CP}. In order to avoid repetitions and enhance clarity, we summarize here the assumptions we make throughout this subsection:
\begin{itemize}
\itemsep0em
\item Assume that Hypothesis~\ref{hp1} holds true, and let $U$ and $D$ denote the corresponding analysis operators with index sets $L$ and $J$, respectively, satisfying the bounds given in \eqref{eq:boundk};
\item Let $M\in {J}$ and $\Delta\subseteq \{1,\dots,M\}$ satisfy $|\Delta|\ge 2$, and set $\W=R(D^{\ast} P_{\Delta})+R(D^{-1} P_{\Delta})$;
\item Let $N\in {L}$ satisfy the balancing property with respect to {$U$, $D$,} $M$ and $|\Delta|$;
\item For a fixed $\theta\in (0,1 ]$, let $\{N, \ldots, 1\} \supseteq \Omega \sim Ber (\theta)$, i.e.\ 
\[
\Omega=\{l\in\{1,\dots,N\}:\delta_l=1\},
\]
where $\{\delta_l\}_{l=1}^N$ are Bernoulli variables with $\P (\delta_l = 1) = \theta$;
\item Set $E_\Omega = U^{*}P_\Omega \U$.
\end{itemize}

The first estimate reads as follows.

\begin{proposition}\label{prop9.1}
For $g \in \H$ and $t \ge \frac{1}{7\sqrt{|\Delta|\k_2}}$ we have
\begin{multline*}
 \P \left( \Vert \theta^{-1} P_\Delta^{\perp}D^{-*} P^\perp_{\W} E_{\Omega} P_{\W} g \Vert_{\ell^\infty {(J)}} > t \Vert g \Vert_{\H} \right)
 \\ \le 2\tilde M(\tfrac {t\theta}{2}) \exp \left(\frac{-t^2 \theta\mu^{-2}}{8\k_1 B_\Delta\bigl(B_\Delta + \frac{\eta_\Delta\sqrt{2|\Delta|}t}{6}\bigr)} \right).
\end{multline*}
\end{proposition}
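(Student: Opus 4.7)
The plan is to decompose
\[
Y := \theta^{-1} P_\Delta^\perp D^{-*} P_\W^\perp E_\Omega P_\W g = \theta^{-1}\sum_{l=1}^N \delta_l\, a_l\, b_l,
\]
with $a_l := P_\Delta^\perp D^{-*} P_\W^\perp \psi_l$ and $b_l := \langle \tilde\psi_l, P_\W g\rangle_\H$, and to treat the mean and the fluctuation separately. Since $\E[E_\Omega]=\theta U^* P_N U^{-*}$, the expected value $\E[Y] = P_\Delta^\perp D^{-*} P_\W^\perp U^* P_N U^{-*} P_\W g$ satisfies $\|\E[Y]\|_{\ell^\infty(J)}\le \|g\|_\H/(14\sqrt{|\Delta|\k_2})\le(t/2)\|g\|_\H$ by the balancing property \eqref{bp2} and the hypothesis $t\ge 1/(7\sqrt{|\Delta|\k_2})$. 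It is therefore enough to bound $\P(\|Y-\E[Y]\|_{\ell^\infty(J)} > (t/2)\|g\|_\H)$.

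Next I would split the index set $J$ into pieces. Components with $j\in\Delta$ vanish because $P_\Delta^\perp e_j=0$. For $j>\tilde M(t\theta/2)$, the component $(Y)_j$ can be bounded \emph{deterministically}: writing $(Y)_j = \theta^{-1}\langle P_\Omega U P_\W^\perp D^{-1} e_j,\, U^{-*} P_\W g\rangle_{\ell^2(L)}$, applying Cauchy--Schwarz and using $\|U^{-*} P_\W g\|_2\le\sqrt{\k_1}\|g\|_\H$ from \eqref{eq:boundk}, together with
\[
\sqrt{\k_1}\|P_\Omega U P_\W^\perp D^{-1} e_j\|_2 \le \sqrt{\k_1}\|P_N U D^{-1} e_j\|_2 + \k_1\|P_{\widetilde\W} D^{-1} e_j\|_\H < t\theta/2
\]
(where the first step uses $\W\subseteq\widetilde\W$, hence $\|P_\W v\|\le\|P_{\widetilde\W}v\|$, and the last is exactly the defining inequality \eqref{eq:tildeM} of $\tilde M(t\theta/2)$), gives $|(Y)_j|\le(t/2)\|g\|_\H$. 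Hence the indices that can potentially trigger the event of interest are confined to $\{1,\dots,\tilde M(t\theta/2)\}$, on which I would apply a scalar complex Bernstein inequality and union-bound.

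Applying Bernstein to each $j\in\{1,\dots,\tilde M(t\theta/2)\}$ on the independent mean-zero sum $(Y-\E[Y])_j = \theta^{-1}\sum_l (\delta_l-\theta)(a_l)_j b_l$ with deviation $s=(t/2)\|g\|_\H$ requires: (i) the sum of variances $\sigma_j^2 \le \theta^{-1}\max_l|(a_l)_j|^2 \sum_l|b_l|^2 \le \theta^{-1} B_\Delta^2\mu^2\k_1\|g\|_\H^2$, where $|(a_l)_j|\le B_\Delta\mu$ is obtained by writing $\psi_l = D^*(D^{-*}\psi_l)$, invoking the definition \eqref{eq:defB} of $B_\Delta$, and using $\|D^{-*}\psi_l\|_\infty\le\mu$, while $\sum_l|b_l|^2\le\|U^{-*} P_\W g\|_2^2\le\k_1\|g\|_\H^2$ is the Bessel-type inequality from \eqref{eq:boundk}; and (ii) the almost-sure bound $\theta^{-1}|(a_l)_j b_l|\le \theta^{-1} B_\Delta\mu\cdot\k_1\mu\eta_\Delta\sqrt{2|\Delta|}\|g\|_\H$, which reduces to the uniform control $|b_l|\le \k_1\mu\eta_\Delta\sqrt{2|\Delta|}\|g\|_\H$. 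The latter is the main technical obstacle: writing $P_\W g = D^* x_1 + D^{-1} x_2$ with $x_i=P_\Delta x_i$ and using the coherence gives $|b_l|\le\mu(\|x_1\|_1+\|x_2\|_1)$, and bounding the $\ell^1$ norms $\|x_i\|_1$ by $\sqrt{|\Delta|}$ times the norm of $g$ is exactly what the maximum over $i=0,1$ in the definition \eqref{eq:eta} of $\eta_\Delta$ is engineered to provide, since the two pieces of $\W$ correspond to the two frame operators $DD^*$ and $D^{-*}D^{-1}$. Plugging these bounds into Bernstein and taking a union bound over at most $\tilde M(t\theta/2)$ indices (with the extra factor $2$ accounting for the complex-valued summands) produces the stated exponential estimate.
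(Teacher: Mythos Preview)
Your overall strategy matches the paper's exactly: split into mean plus fluctuation, control the mean by the balancing property \eqref{bp2}, handle indices $j>\tilde M(t\theta/2)$ deterministically via the definition of $\tilde M$, and apply scalar Bernstein with a union bound over the remaining indices. Your variance estimate and the bound $|(a_l)_j|\le B_\Delta\mu$ are also identical to the paper's.

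The one place where your sketch diverges, and where there is a genuine gap, is the almost-sure bound on $b_l=\langle\tilde\psi_l,P_\W g\rangle$. You propose to decompose $P_\W g = D^* x_1 + D^{-1} x_2$ with $\operatorname{supp}(x_i)\subseteq\Delta$ and then bound $\|x_i\|_1$ via $\eta_\Delta$. This does not quite work as written: by \eqref{eq:eta} the quantity $\eta_\Delta$ controls $\|D_i D_i^* x\|_1/(\sqrt{|\Delta|}\,\|D_i^* x\|_\H)$, not $\|x\|_1$ itself, and the decomposition of $P_\W g$ into pieces from $\W_0=R(D^* P_\Delta)$ and $\W_1=R(D^{-1}P_\Delta)$ is neither unique nor orthogonal, so relating $\|x_1\|_1+\|x_2\|_1$ to $\|g\|_\H$ would require an additional argument about the angle between $\W_0$ and $\W_1$. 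The paper sidesteps this by applying Cauchy--Schwarz on the \emph{other} factor: it bounds $|b_l|\le\|g\|_\H\,\|P_\W U^{-1}e_l\|_\H$ and then estimates $\|P_{\W_i}U^{-1}e_l\|_\H$ for $i=0,1$ separately. Writing the unit vector $h_i:=P_{\W_i}U^{-1}e_l/\|P_{\W_i}U^{-1}e_l\|\in\W_i$ as $h_i=D_i^* y_i$ with $\operatorname{supp}(y_i)\subseteq\Delta$ and using the duality $\langle h_i,U^{-1}e_l\rangle=\langle D_i h_i,\,D_i^{-*}U^{-1}e_l\rangle$ gives $\|P_{\W_i}U^{-1}e_l\|\le\mu\,\|D_iD_i^* y_i\|_1\le\mu\,\eta_\Delta\sqrt{|\Delta|}$, which is precisely the form in which $\eta_\Delta$ is designed to apply. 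Combining the two pieces yields $\|P_\W U^{-1}e_l\|_\H\le\mu\,\eta_\Delta\sqrt{2|\Delta|}$, i.e.\ \eqref{eq:estpwk}.
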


\begin{proof}
 Without loss of generality we may assume that $\Vert g \Vert_{\H}=1$. We shall need the following inequality:
\begin{align}
\label{eq:withB2}
&|\langle U P_{\W}^\perp D^{-1} e_j, e_l \rangle |\le 
|\langle e_j, D^{-*} P_{\W}^\perp D^{*}D^{-*} U^\ast e_l \rangle |
\le B_\Delta \|D^{-*} U^\ast e_l \|_\infty
\le B_\Delta \mu,
\end{align}
where $B_\Delta$ and $\mu$ are defined in \eqref{eq:defB} and in Definition~\ref{def:mu}, respectively.

Since $\sum\limits_{l=1}^N  e_l e_l^{*} = P_N$ and $\sum\limits_{l=1}^N \delta_l e_l e_l^{*} = P_\Omega$ we have
\begin{equation}\label{eq:Yk}
\theta^{-1} P_\Delta^{\perp}D^{-*} P^\perp_{\W} E_{\Omega} P_{\W} g = \sum_{l=1}^N Y_l + P_\Delta^\perp D^{-*} P_{\W}^\perp U^* P_N \U P_{\W} g,
\end{equation}
where $Y_l = \theta^{-1} P_\Delta^\perp D^{-*} P^\perp_{\W} U^* (\delta_l-\theta)e_l e_l^* \U P_{\W} g$. For $j\in J$  we define the random variable $X_l^j = \langle Y_l, e_j \rangle$. By the balancing property \eqref{bp2} 
we have
\[
\P \left( \Vert \theta^{-1} P_\Delta^{\perp}D^{-*} P^\perp_{\W} E_{\Omega} P_{\W} g \Vert_{\infty } > t \Vert g \Vert_{\H} \right)
 \le \P \left( \left\|  \sum_{l=1}^N Y_l \right\|_\infty > \frac{t}{2}\right).
\]

Let us estimate this quantity by studying the random variables $X^j_l$ via Lemma~\ref{lem:matrixbern} with $d=1$. In order to do that, first observe that since $\E(\delta_l)= \theta$, then $\E(X_l^j)= 0$. We next study the upper bounds on $\E\bigl(|X_l^j|^2 \bigr)$ and $|X_l^j|$ for $l=1, \ldots, N$. 

On the one hand, by \eqref{eq:withB2} we have
\begin{equation*}
\begin{split}
|\langle D^{-*} P_{\W}^\perp  U^{\ast} e_l e_l^{*} \U P_{\W} g, e_j \rangle| & = |\langle \U P_{\W} g, e_l \rangle| |\langle U P_{\W}^\perp D^{-1} e_j, e_l \rangle | \\ &\leq \mu B_\Delta |\langle \U P_{\W} g, e_l \rangle|,
\end{split}
\end{equation*}
so that  $\E \left[ (\delta_l- \theta)^2\right]= \theta(1-\theta)$ implies  for $j\in J$
\begin{align*}
 \E \bigl( |X_l^j|^2 \bigr)  & = \theta^{-2} \E \left( (\delta_l -\theta)^{2}\left|\langle P_\Delta^{\perp} D^{-*} P_{\W}^\perp  U^{\ast} e_l e_l^{*} \U P_{\W} g, e_j \rangle \right|^2 \right)\\
 & \leq \theta^{-1}(1-\theta)\mu^2 B_\Delta^2 \left|\langle \U P_{\W} g, e_l \rangle\right|^2.
\end{align*}
Therefore, since $\Vert \U \Vert \leq \sqrt{\k_1}$, we deduce that
\[
\sum_{l =1}^N \E \bigl( |X_l^j|^2 \bigr)  \leq \theta^{-1}(1-\theta)\mu^2 B_\Delta^2 \Vert \U P_{\W} g \Vert_2^2 \leq \theta^{-1} \mu^2 B_\Delta^2 \k_1 = : \sigma^2.
\]

On the other hand, setting $h_i= \frac{P_{\W_i} U^{-1}e_l}{\norm{P_{\W_i} U^{-1}e_l}_\H}\in\W_i$ for $i=0,1$, where $\W_i = R(D_i^\ast P_\Delta)$, $D_0 =D$ and $D_1 = D^{-\ast}$, we readily derive
\begin{equation*}
\norm{P_{\W_i} U^{-1}e_l}_\H =| \langle h_i,  U^{-1}e_l\rangle| 
= | \langle D_i h_i, D_i^{-*}U^{-1}e_l\rangle| 
\le \mu\norm{D_i h_i}_1
\le \mu\eta_\Delta\sqrt{|\Delta|},
\end{equation*}
where $\eta_\Delta$ is given by \eqref{eq:eta}. This yields the estimate
\begin{equation}\label{eq:estpwk}
\norm{P_\W U^{-1}e_l}_\H^2 \leq \norm{P_{\W_0} U^{-1}e_l}_\H^2 + \norm{P_{\W_1} U^{-1}e_l}_\H^2 \leq 2 \mu^2 \eta_\Delta^2|\Delta|.
\end{equation}
For later use, note that we analogously have
\begin{equation}\label{eq:estpwk2}
\norm{P_\W U^{*}e_l}_\H^2 \le 2 \mu^2 \eta_\Delta^2|\Delta|.
\end{equation}
 Thus, since $\|  g\|_\H = 1$, by \eqref{eq:withB2} we have
\[
|\langle D^{-*} P_{\W}^\perp  U^{\ast} e_l e_l^{*} \U P_{\W} g, e_j \rangle|  = |\langle g,P_{\W} U^{-1} e_l\rangle| |\langle U P_{\W}^\perp D^{-1} e_j, e_l \rangle |
\le \mu^2 B_\Delta \eta_\Delta \sqrt{2|\Delta|}.
\]
We have obtained that for $j\in J$ and $l=1, \ldots, N$
\[
\left|X_l^j \right|  \leq  \max\{\theta^{-1}(1-\theta), 1\}\mu^2 B_\Delta  \eta_\Delta \sqrt{2|\Delta|}   \leq  \theta^{-1}\mu^2 B_\Delta \eta_\Delta \sqrt{2|\Delta|}\k_1=: B,
\]
where in the last inequality we used the fact that $1\leq \k_1$. 

Now let $\Gamma \subseteq {J}$ be a set such that 
\begin{equation*}
\P \left(\sup_{j \in \Gamma} \left| \sum_{l= 1}^N X_l^j \right| \geq \frac t 2 \right) = 0\qquad\text{and}\qquad |\Gamma^c| \leq \tilde M\Bigl(\frac {t\theta}{2}\Bigr),
\end{equation*}
where $\Gamma^c:=J\setminus\Gamma$. By the Bernstein inequality \eqref{matrix} with $d=1$ we have
\begin{align*}
\P \left(\sup_{j \in J} \left| \sum_{l= 1}^N X_l^j \right|\geq \frac t 2 \right) &\leq \P \left(\sup_{j \in \Gamma^c} \left| \sum_{l= 1}^N X_l^j \right|\geq \frac t 2 \right)\\
&\leq 2\tilde M(\tfrac {t\theta}{2}) \exp \left(-\frac{t^2 \theta}{8\k_1\mu^2 B_\Delta( B_\Delta +\eta_\Delta\sqrt{2|\Delta|}t/6)} \right),
\end{align*}
which is the final estimate.

To finish the proof, we need to show that such $\Gamma$ exists. Note that, because of \eqref{eq:boundkU} and $\|g\|_\H = 1$ we have
\begin{align*}
\left| \sum_{l= 1}^N X_l^j \right| &= \left|\sum_{l= 1}^N\langle \theta^{-1} P_\Delta^\perp D^{-*} P^\perp_{\W} U^* (\delta_l-\theta)e_l e_l^* \U P_{\W} g,e_j\rangle \right|\\
&=\theta^{-1}\left| \langle g,P_{\W}U^{-1} \left(\textstyle\sum_{l}(\delta_l-\theta)e_l e_l^*\right) U P_{\W}^\perp D^{-1} P_\Delta^\perp e_j\rangle\right|\\
&\le\theta^{-1}\| P_{\W}U^{-1} \left(\textstyle\sum_{l}(\delta_l-\theta)e_l e_l^*\right) U P_{\W}^\perp D^{-1}  e_j\|_\H\\
&\le \theta^{-1}\sqrt{\k_1}\|  \left(\textstyle\sum_{l}(\delta_l-\theta)e_l e_l^*\right) U P_{\W}^\perp D^{-1}  e_j\|_2\\
&\leq \theta^{-1}\sqrt{\k_1}\| P_N U P^\perp_{\W} D^{-1} e_j\|_2\\
&\leq \theta^{-1}\sqrt{\k_1}\left(\| P_N U D^{-1} e_j\|_2 + \sqrt{\k_1}\| P_{{\widetilde{\W}}} D^{-1} e_j\|_{\H}\right).
\end{align*}
We then define
\begin{equation*}
\Gamma = \left\{ j \in {J} : \theta^{-1}\sqrt{\k_1}\left(\| P_N U D^{-1} e_j\|_2 + \sqrt{\k_1}\| P_{{\widetilde{\W}}} D^{-1} e_j\|_{\H}\right) < \frac{t}{2} \right\},
\end{equation*}
which is a finite set and satisfies $|\Gamma^c| \leq \tilde M(t\theta/2)$ by \eqref{eq:tildeM}. The proof follows.
\end{proof}

\begin{remark*}
Observe that in the above proof we  used the full generality of Definition~\ref{def:mu}: all four terms appear in the derivation.
\end{remark*}

\begin{proposition}\label{Proposition9.2}
For $g \in \W$ and $\left(4\sqrt{\sqrt{\k_2}\log(|\Delta|\k_1^2\k_2)}\right)^{-1}\le t\le 2\k_1$  we have
\begin{equation*}
 \P \left( \left\Vert \bigl(\theta^{-1} P_{\W} E_{\Omega} P_{\W}- P_ {\W} \bigr) g\right\Vert_\H > t \Vert g\Vert_\H \right)
 \le \exp \left(\frac{-t^2 \theta}{64|\Delta|\mu^2\eta_\Delta^2\k_1} + \frac{1}{4} \right).
\end{equation*}
\end{proposition}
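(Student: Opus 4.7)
Without loss of generality assume $\|g\|_\H=1$. The strategy is to split the random operator into its expectation (plus a small deterministic correction coming from the tail of $P_N$) and a mean-zero sum to which the vector Bernstein inequality of Lemma~\ref{lem:vectorbern} applies.

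Since $U^*U^{-*}=I$, for any $h\in\W$ one has $P_\W U^* P_N U^{-*}P_\W h-P_\W h=-P_\W U^*P_N^\perp U^{-*}P_\W h$, and the balancing property~\eqref{bp1} together with the lower bound on $t$ yields
\[
\| P_\W U^* P_N U^{-*} P_\W g - P_\W g\|_\H \le \frac{1}{8\sqrt{\sqrt{\k_2}\log(|\Delta|\k_1^2\k_2)}}\le \frac{t}{2}.
\]
So it suffices to control the random deviation $\|\theta^{-1} P_\W E_\Omega P_\W g - P_\W U^* P_N U^{-*} P_\W g\|_\H$ by $t/2$ with the claimed probability. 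Writing $\theta^{-1}P_\Omega-P_N=\sum_{l=1}^N(\theta^{-1}\delta_l-1)e_l e_l^*$, I set
\[
X_l := P_\W U^*(\theta^{-1}\delta_l-1)e_l e_l^* U^{-*}P_\W g\in\H,\qquad l=1,\dots,N,
\]
which are independent, mean-zero, and satisfy $\sum_l X_l=\theta^{-1}P_\W E_\Omega P_\W g - P_\W U^* P_N U^{-*}P_\W g$.

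The main computational step is to bound $B$ and $\sigma^2$ in Lemma~\ref{lem:vectorbern}. For the sup bound, $|\theta^{-1}\delta_l-1|\le \theta^{-1}$, and using $|\langle U^{-*}P_\W g,e_l\rangle|=|\langle g,P_\W U^{-1}e_l\rangle|\le \|P_\W U^{-1}e_l\|_\H$ together with the two key estimates \eqref{eq:estpwk} and \eqref{eq:estpwk2} gives
\[
\|X_l\|_\H\le\theta^{-1}\|P_\W U^{-1}e_l\|_\H\,\|P_\W U^*e_l\|_\H\le 2\theta^{-1}\mu^2\eta_\Delta^2|\Delta|=:B.
\]
For the variance, using $\E[(\theta^{-1}\delta_l-1)^2]=\theta^{-1}(1-\theta)\le\theta^{-1}$, pulling out $\sup_l\|P_\W U^*e_l\|_\H^2\le 2\mu^2\eta_\Delta^2|\Delta|$ from~\eqref{eq:estpwk2}, and then using Parseval together with $\|U^{-*}\|\le\sqrt{\k_1}$ from~\eqref{eq:boundkU}:
\[
\sum_{l=1}^N\E\|X_l\|_\H^2\le \theta^{-1}(2\mu^2\eta_\Delta^2|\Delta|)\sum_{l=1}^N|\langle U^{-*}P_\W g,e_l\rangle|^2\le 2\theta^{-1}\mu^2\eta_\Delta^2|\Delta|\k_1=:\sigma^2.
\]

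The admissibility condition for Lemma~\ref{lem:vectorbern} is $t/2\le \sigma^2/B=\k_1$, i.e.\ $t\le 2\k_1$, which is exactly the upper bound in the hypothesis. Applying the vector Bernstein inequality with threshold $t/2$,
\[
\P\Bigl(\bigl\|\textstyle\sum_l X_l\bigr\|_\H\ge\tfrac{t}{2}\Bigr)\le\exp\!\left(\frac{-(t/2)^2}{8\sigma^2}+\frac{1}{4}\right)=\exp\!\left(\frac{-t^2\theta}{64|\Delta|\mu^2\eta_\Delta^2\k_1}+\frac{1}{4}\right),
\]
and combining with the deterministic bound via the triangle inequality yields the claim. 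The only mildly delicate point is the compatibility of the thresholds ($t/2$ must dominate the deterministic balancing error, whence the prescribed lower bound on $t$); the bulk estimates of $B$ and $\sigma^2$ rely entirely on~\eqref{eq:estpwk}--\eqref{eq:estpwk2}, which were already established during the proof of Proposition~\ref{prop9.1}.
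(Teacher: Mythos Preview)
Your proof is correct and follows essentially the same approach as the paper: the same decomposition into a deterministic tail controlled by the balancing property~\eqref{bp1} and a mean-zero random sum to which Lemma~\ref{lem:vectorbern} is applied, with identical choices of $B$ and $\sigma^2$. The only minor difference is notational (the paper introduces $\xi_l=P_\W U^* e_l$, $\alpha_l=P_\W U^{-1}e_l$ and tensor products $\xi_l\otimes\bar\alpha_l$), and you make explicit the verification $t/2\le\sigma^2/B=\k_1$ that the paper leaves implicit.
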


\begin{proof}
Without loss of generality we assume that $\Vert g \Vert_\H=1$. For $l\in {L}$, let 
\[
\xi_l= P_{\W} U^{*}e_l, \qquad \alpha_l=P_{\W} U^{-1}e_l.
\]
We first make the following observations which will be useful along the proof, and follow from  \eqref{eq:estpwk2}, \eqref{eq:estpwk} and \eqref{eq:boundkU}:
\begin{align}
&\Vert \xk \Vert_\H^2 = \|P_{\W} U^{\ast} e_l\|_\H^2 \leq 2\eta_\Delta^2\mu^2|\Delta|\label{xk},\\
&\Vert \alpha_l \Vert_\H^2 = \|P_{\W} U^{-1} e_l\|_\H^2 \leq 2\eta_\Delta^2 \mu^2|\Delta|\label{ak},\\ 
&\sum_{l=1}^N \left|\langle\alpha_l, g \rangle \right|^2 = \sum_{l=1}^N \left|\langle e_l,U^{-*} P_{\W} g \rangle \right|^2 
 \leq \Vert U^{-*} P_{\W} g\Vert_2^2 \leq \Vert U^{-*} \Vert^2 \leq \k_1. \label{ek}
\end{align}

For $u,v\in\W$, let $u \otimes \overline v$ denote the continuous operator $\W\to\W$ defined by $(u \otimes \overline v)(w)=\langle w,v\rangle u$ for $w\in\W$ (note that $u\otimes \overline v$ is linear in $u$ and antilinear in $v$).
We have that
\begin{align*}
& \sum_{l=1}^N \xi_l \otimes \overline \alpha_l = P_{\W} U^* P_N U^{-*} P_{\W},  
\qquad {\sum_{l\in L\setminus\{1,\dots,N\}} \xk \otimes \overline \alpha_l = P_{\W} U^{*}P_N^\perp U^{-*} P_{\W}}, \\ 
& \theta^{-1} \sum_{l=1}^N \delta_l (\xi_l \otimes \overline\alpha_l) = \theta^{-1} P_{\W} U^* P_\Omega U^{-*} P_{\W}, 
\qquad P_{\W} = \sum\limits_{l {\in L}} \xk \otimes \overline \alpha_l.
\end{align*}
Hence, we have
\[
\begin{split}
& \left\Vert \left(\theta^{-1}P_{\W} U^{*} P_\Omega\U P_{\W} - P_{\W} \right)g \right\Vert_\H \\ &= \left\Vert \left(\sum_{l=1}^N (\theta^{-1}\delta_l -1)(\xk \otimes \overline \alpha_l) \right) g -\sum_{{l\in L\setminus\{1,\dots,N\}}} (\xk \otimes \overline \alpha_l) g \right\Vert_\H \\
& \leq \left\Vert \left(\sum_{l=1}^N (\theta^{-1}\delta_l -1)(\xk \otimes \overline \alpha_l) \right) g\right\Vert_\H  \!\!\!+ \left\Vert\left(P_{\W} U^{*}P_N^\perp\U P_{\W}\right)g \right\Vert_\H.
\end{split}
\]
Therefore, by the balancing property \eqref{bp1} it follows that
\[
\begin{split}
& \P \left(\left\Vert \left(\theta^{-1}P_{\W} U^{*}P_\Omega\U P_{\W} - P_{\W} \right)g \right\Vert_\H > t \right)\\
& \le \P \left(\left\Vert \left(\theta^{-1}P_{\W} U^{*}P_\Omega\U P_{\W} - P_{\W} \right)g \right\Vert_\H > \frac{t}{2}+  \left\Vert P_{\W} U^{*}P_N^\perp \U P_{\W}\right\Vert  \right)\\
& \leq \P \left(\left\Vert \sum_{l=1}^N (\theta^{-1}\delta_l -1)(\xk \otimes \overline \alpha_l) g \right\Vert_\H > \frac{t}{2} \right)
\end{split}
\]
for $t \geq \left(4\sqrt{\sqrt{\k_2}\log(|\Delta|\k_1^2\k_2)}\right)^{-1}$.

Let us estimate the above probability by using Lemma~\ref{lem:vectorbern}. We define
\[
X_l= (\theta^{-1}\delta_l -1)(\xk \otimes \overline \alpha_l)g\in\W\cong\mathbb{C}^d,
\]
with $d=\dim\W$.
First note that $\E(X_l) =0$. Next, observe that
\[
\Vert X_l \Vert_{\H}^2  = (\theta^{-1}\delta_l -1)^2 |\langle g,\alpha_l \rangle|^2 \norm{\xk}_\H^2  \le (\theta^{-1}\delta_l -1)^2   \Vert \alpha_l \Vert_\H^2 \norm{\xk}_\H^2.
\]
Thus, by \eqref{xk} and \eqref{ak} it follows that
\[
\Vert X_l \Vert_{\H}  \leq \max \{\theta^{-1}-1, 1\} \Vert \xk \Vert_\H \Vert \alpha_l \Vert_\H \leq 2\theta^{-1} |\Delta|\mu^2 \eta_\Delta^2 = :B.
\]
In addition, since $\E  (\theta^{-1}\delta_l -1)^2 = \theta^{-1}-1$, by \eqref{xk} and \eqref{ek} we obtain
\[
\sum_{l=1}^N \ \E\left(\Vert X_l \Vert^2_\H \right)   \leq 2(\theta^{-1}-1)|\Delta|\mu^2\eta_\Delta^2\sum_{l=1}^N |\langle \alpha_l, g \rangle|^2 \leq2 \theta^{-1}|\Delta|\mu^2\eta_\Delta^2\k_1=:\sigma^2.
\]
Therefore, applying the Vector Bernstein inequality \eqref{vector}  we get the desired estimate.
\end{proof}

The next proposition involves an operator containing $U^{-1}P_\Omega U^{-*}$.

\begin{proposition}\label{prop:new4}
For $g \in \W$, we have
\begin{equation*}
\P \left( \left\| \theta^{-1}P_\W U^{-1} P_\Omega U^{-*}P_\W g \right\|_\H > 2\k_1 \|g\|_\H\right) \leq \exp\left( \frac{- \k_1 \theta}{16 \mu^2 \eta_\Delta^2 |\Delta|}+\frac 1 4\right).
\end{equation*}
\end{proposition}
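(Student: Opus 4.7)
The plan is to mimic the proof of Proposition 9.2, but applied to the operator $U^{-1}P_\Omega U^{-*}$ instead of $U^*P_\Omega U^{-*}$. As usual, we may assume $\|g\|_\H = 1$.

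First I would split off the expectation. Writing $P_\Omega = \sum_{l=1}^N \delta_l e_l e_l^*$ and setting $\alpha_l = P_\W U^{-1}e_l$ as in the proof of Proposition 9.2, one has
\[
\theta^{-1}P_\W U^{-1}P_\Omega U^{-*}P_\W g = \sum_{l=1}^N (\theta^{-1}\delta_l - 1)\bigl(\alpha_l \otimes \overline{\alpha_l}\bigr)g + P_\W U^{-1} P_N U^{-*}P_\W g.
\]
The deterministic term is harmless: by \eqref{eq:boundk},
\[
\|P_\W U^{-1} P_N U^{-*}P_\W g\|_\H \le \|U^{-1}\|\,\|U^{-*}\|\,\|g\|_\H \le \k_1.
\]
So it suffices to prove that the stochastic part has norm at most $\k_1$ with probability at least $1 - \exp\bigl(-\k_1\theta/(16\mu^2\eta_\Delta^2|\Delta|) + 1/4\bigr)$.

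To this end I would apply the vector Bernstein inequality (Lemma~\ref{lem:vectorbern}) to the $\W$-valued random vectors
\[
X_l := (\theta^{-1}\delta_l - 1)(\alpha_l \otimes \overline{\alpha_l})g, \qquad l = 1,\ldots,N.
\]
Clearly $\E(X_l) = 0$. For the pointwise bound, using \eqref{ak} (reusing the estimates from the proof of Proposition 9.2),
\[
\|X_l\|_\H = |\theta^{-1}\delta_l - 1|\,|\langle g,\alpha_l\rangle|\,\|\alpha_l\|_\H \le \theta^{-1}\|\alpha_l\|_\H^2 \le 2\theta^{-1}\mu^2\eta_\Delta^2|\Delta| =: B.
\]
For the variance proxy, using $\E(\theta^{-1}\delta_l - 1)^2 = \theta^{-1}-1 \le \theta^{-1}$ together with \eqref{ak} and \eqref{ek},
\[
\sum_{l=1}^N \E\|X_l\|_\H^2 \le \theta^{-1}\cdot 2\mu^2\eta_\Delta^2|\Delta|\sum_{l=1}^N|\langle g,\alpha_l\rangle|^2 \le 2\theta^{-1}\mu^2\eta_\Delta^2|\Delta|\k_1 =: \sigma^2.
\]

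Finally I would take $t = \k_1$ in \eqref{vector}. The condition $t \le \sigma^2/B$ reads $\k_1 \le \k_1$, which holds, and the conclusion becomes
\[
\P\Bigl(\bigl\|\textstyle\sum_{l=1}^N X_l\bigr\|_\H \ge \k_1\Bigr) \le \exp\!\left(-\frac{\k_1^2}{8\sigma^2} + \frac{1}{4}\right) = \exp\!\left(-\frac{\k_1\theta}{16\mu^2\eta_\Delta^2|\Delta|} + \frac{1}{4}\right),
\]
which is exactly the desired bound. There is no serious obstacle here: the argument is a direct and slightly simpler analogue of Proposition~\ref{Proposition9.2} — simpler because no $\xi_l$ appear and no balancing property is invoked; the deterministic remainder is controlled by a naive operator-norm estimate rather than by \eqref{bp1}.
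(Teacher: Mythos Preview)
Your proposal is correct and is essentially identical to the paper's own proof: same decomposition into stochastic and deterministic parts, same definition of $X_l$, same bounds $B$ and $\sigma^2$, and the same application of the vector Bernstein inequality with $t=\k_1$. If anything, you are slightly more careful in explicitly verifying the hypothesis $t\le\sigma^2/B$ of Lemma~\ref{lem:vectorbern}, which the paper leaves implicit.
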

\begin{proof}
Without loss of generality we assume that $\Vert g \Vert_\H=1$. For $l\in {L}$, let $\alpha_l=P_{\W} U^{-1}e_l$. Arguing as in the proof of Proposition \ref{Proposition9.2} we have
\begin{align*}
&\|\theta^{-1}P_\W U^{-1} P_\Omega U^{-*}P_\W g\|_\H  \\&\leq \left\| \left(\sum_{l=1}^N (\theta^{-1}\delta_l -1)(\alpha_l \otimes \overline \alpha_l) \right) g \right\|_\H+\|P_\W U^{-1}P_N U^{-*}P_\W g\|_\H\\
&\leq \left\| \left(\sum_{l=1}^N (\theta^{-1}\delta_l -1)(\alpha_l \otimes \overline \alpha_l) \right) g \right\|_\H+\k_1.
\end{align*}
Therefore it follows that
\[\P \left( \left\| \theta^{-1}P_\W U^{-1} P_\Omega U^{-*}P_\W g \right\|_\H > 2\k_1\right)
%&\quad \leq \P \left( \left\| \theta^{-1}P_\W U^{-1} P_\Omega U^{-*}P_\W g \right\|_\H > t\k_1 +\|P_\W U^{-1}P_N U^{-*}P_\W \| \right)\\
\leq  \P \left( \left\|\sum_{l=1}^N (\theta^{-1}\delta_l -1)(\alpha_l \otimes \overline \alpha_l)  g \right\|_\H > \k_1\right).
\]
We will bound this probability using Lemma \ref{lem:vectorbern}. We define
\[
X_l = (\theta^{-1}\delta_l -1)(\alpha_l \otimes \overline \alpha_l)  g \in \W \cong\mathbb{C}^d,
\]
with $d=\dim\W$.
First note that $\E(X_l) =0$. Next, observe that
\[
\Vert X_l \Vert_{\H}^2  = (\theta^{-1}\delta_l -1)^2 |\langle g,\alpha_l \rangle|^2 \norm{\alpha_l}_\H^2  \le (\theta^{-1}\delta_l -1)^2   \Vert \alpha_l \Vert_\H^4.
\]
Thus, by \eqref{ak} it follows that
\[
\Vert X_l \Vert_{\H}  \leq \max \{\theta^{-1}-1, 1\} \Vert \alpha_l \Vert_\H^2  \leq 2\theta^{-1} |\Delta|\mu^2 \eta_\Delta^2 = :B.
\]
In addition, since $\E  (\theta^{-1}\delta_l -1)^2 = \theta^{-1}-1$, by \eqref{ek} we obtain
\[
\sum_{l=1}^N \ \E\left(\Vert X_l \Vert^2_\H \right)   \leq 2(\theta^{-1}-1)|\Delta|\mu^2\eta_\Delta^2\sum_{l=1}^N |\langle \alpha_l, g \rangle|^2 \leq2 \theta^{-1}|\Delta|\mu^2\eta_\Delta^2\k_1=:\sigma^2.
\]
Therefore, applying the Vector Bernstein inequality \eqref{vector} for $t=\k_1$ we get the desired estimate.
\end{proof}

In the next result we  deal with a matrix operator containing $U^{-1}P_\Omega U$.

\begin{proposition}\label{Theorem9.3}
We have
\begin{equation*}
 \P \left( \left\Vert \bigl(\theta^{-1} P_{\W} U^{-1}P_\Omega U P_{\W}- P_{\W} \bigr) \right\Vert_{\H\to \H} >  \frac 1 2 \right)  \leq {4}|\Delta| \exp \left(\frac{-3 \theta}{208|\Delta|\mu^2 \eta_\Delta^2\k_1}  \right).
\end{equation*}
\end{proposition}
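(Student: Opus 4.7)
The plan is to apply the Matrix Bernstein inequality (Lemma~\ref{lem:matrixbern}) to operators acting on the finite-dimensional space $\W$ (whose dimension satisfies $d := \dim\W \le 2|\Delta|$, since $\W = R(D^*P_\Delta)+R(D^{-1}P_\Delta)$), after isolating a deterministic tail handled by the balancing property. Using $U^{-1}U = I$ and the resolution of the identity on $\ell^2(L)$, I would first decompose
\[
P_{\W} = P_{\W} U^{-1}(P_N + P_N^\perp) U P_{\W} = \sum_{l=1}^N \alpha_l \otimes \overline{\xi_l} + P_{\W} U^{-1} P_N^\perp U P_{\W},
\]
where $\xi_l = P_\W U^* e_l$ and $\alpha_l = P_\W U^{-1} e_l$ as in the proof of Proposition~\ref{Proposition9.2}. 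This gives
\[
\theta^{-1} P_\W U^{-1} P_\Omega U P_\W - P_\W = \sum_{l=1}^N X_l - P_\W U^{-1} P_N^\perp U P_\W,\qquad X_l := (\theta^{-1}\delta_l - 1)(\alpha_l \otimes \overline{\xi_l}).
\]
The tail operator $T = P_\W U^{-1} P_N^\perp U P_\W$ has adjoint $T^* = P_\W U^* P_N^\perp U^{-*} P_\W$, so by \eqref{bp1} and $|\Delta|\ge 2$, $\k_1,\k_2\ge 1$,
\[
\|T\| = \|T^*\| \le \tfrac{1}{8\sqrt{\sqrt{\k_2}\log(|\Delta|\k_1^2\k_2)}} \le \tfrac18.
\]
Hence it suffices to show $\P(\|\sum_{l=1}^N X_l\|>\tfrac14) \le 4|\Delta|\exp(-3\theta/(208|\Delta|\mu^2\eta_\Delta^2\k_1))$.

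Next I would verify the hypotheses of Matrix Bernstein for the $X_l$'s (viewed as $d\times d$ matrices on $\W$). Clearly $\E[X_l]=0$, and by \eqref{xk}--\eqref{ak},
\[
\|X_l\| \le |\theta^{-1}\delta_l - 1|\,\|\alpha_l\|_\H\|\xi_l\|_\H \le 2\theta^{-1}\mu^2\eta_\Delta^2|\Delta| =: B.
\]
For the variance proxy, note that $X_lX_l^* = (\theta^{-1}\delta_l-1)^2\|\xi_l\|_\H^2\,(\alpha_l\otimes\overline{\alpha_l})$, so by \eqref{xk} and the positivity of the $\alpha_l\otimes\overline{\alpha_l}$,
\[
\Big\|\sum_{l=1}^N \E[X_lX_l^*]\Big\| \le (\theta^{-1}-1)\max_l\|\xi_l\|_\H^2\,\Big\|\sum_{l=1}^N \alpha_l\otimes\overline\alpha_l\Big\| \le 2\theta^{-1}\mu^2\eta_\Delta^2|\Delta|\k_1,
\]
since $\sum_{l=1}^N \alpha_l\otimes\overline\alpha_l = P_\W U^{-1}P_N U^{-*}P_\W$ has norm $\le\k_1$ by \eqref{eq:boundk}. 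The symmetric bound on $\sum\E[X_l^*X_l]$ is identical with the roles of $\xi_l$ and $\alpha_l$ swapped. Thus $\sigma^2 \le 2\theta^{-1}\mu^2\eta_\Delta^2|\Delta|\k_1$.

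Finally, apply \eqref{matrix} with $t=1/4$. The denominator $\sigma^2 + Bt/3 \le \theta^{-1}\mu^2\eta_\Delta^2|\Delta|(2\k_1 + 1/6) \le \theta^{-1}\mu^2\eta_\Delta^2|\Delta|\cdot\tfrac{13\k_1}{6}$ (using $\k_1\ge 1$), and numerator $t^2/2 = 1/32$, produce the exponent $-3\theta/(208\,|\Delta|\mu^2\eta_\Delta^2\k_1)$; the prefactor $2d \le 4|\Delta|$ yields the stated bound. There is no real obstacle here — the argument parallels Proposition~\ref{Proposition9.2}, but with Matrix Bernstein in place of Vector Bernstein since we need an operator-norm bound on the random sum; the only subtlety is noticing that the tail $P_\W U^{-1} P_N^\perp U P_\W$ is handled by applying \eqref{bp1} to the adjoint.
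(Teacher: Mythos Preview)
Your proof is correct and follows essentially the same route as the paper: reduce via the balancing property to a random sum of rank-one operators $X_l=(\theta^{-1}\delta_l-1)(\alpha_l\otimes\overline{\xi_l})$ on the finite-dimensional space $\W$, then apply Matrix Bernstein with $t=1/4$; your variance bound via the operator inequality $\sum_l\|\xi_l\|^2(\alpha_l\otimes\overline{\alpha_l})\preceq(\max_l\|\xi_l\|^2)\,P_\W U^{-1}P_NU^{-*}P_\W$ is a clean variant of the paper's direct expansion and yields the same $\sigma^2$. One cosmetic slip: the claim $\|T\|\le\tfrac18$ fails in the edge case $|\Delta|=2$, $\k_1=\k_2=1$ (where $1/(8\sqrt{\log 2})\approx 0.15$), but you only need $\|T\|\le\tfrac14$, which does hold for all $|\Delta|\ge 2$, so the argument is unaffected.
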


\begin{proof}
We consider $\xi_l= P_{\W} U^{*}e_l$, $\alpha_l=P_{\W} U^{-1}e_l$, and arguing  as in Proposition~\ref{Proposition9.2}, we arrive to
\begin{align*}
& \P \left( \left\Vert \bigl(\theta^{-1} P_{\W} U^{-1}P_\Omega U P_{\W}- P_{\W} \bigr) \right\Vert >  \frac 1 2 \right) \\
& \le \P \left(\left\Vert \theta^{-1}P_{\W} U^{-1}P_\Omega U P_{\W} - P_{\W} \right\Vert > \frac 1 4 +  \left\Vert P_{\W} U^{-1}P_N^\perp U P_{\W} \right\Vert  \right)\\
& \leq \P\left(  \left\Vert \sum_{l=1}^N (\theta^{-1}\delta_l -1)(\alpha_l \otimes \overline \xi_l)  \right\Vert > \frac 1 4 \right),
\end{align*}
where the last probability of the above inequality will be estimated by using Lemma~\ref{lem:matrixbern}.

Let us define now
\[
X_l = (\theta^{-1}\delta_l -1)(\alpha_l \otimes \overline \xi_l)\colon \W\to\W.
\]
Since $\W$ is finite dimensional, $X_l$ may be identified with an element in $\mathbb{C}^{d\times d}$, where $d=\dim\W\le {2} |\Delta|$.
We have $\E (X_l)=0$. Further, since $1\leq \k_1$ and by \eqref{xk} and \eqref{ak}, it follows that
\[
\Vert X_l \Vert  \leq \max\{\theta^{-1}-1, 1\} \lVert \alpha_l \rVert_\H \lVert \xk \rVert_\H \leq 2 \theta^{-1}\mu^2\eta_\Delta^2 |\Delta|\k_1 = : B.
\]
We next study $\E(X_l^{*}X_l)$ and $\E(X_l X_l^{*})$. Since $X_l^{*}= (\theta^{-1}\delta_l -1)\xk \otimes \overline \alpha_l$, we have
\begin{align*}
 X_l^{*} X_l & = (\theta^{-1}\delta_l -1)^2 (\xk \otimes \overline \alpha_l) (\alpha_l \otimes \overline \xi_l) = (\theta^{-1}\delta_l -1)^2 \Vert \alpha_l \Vert_\H^2\, \xk \otimes \overline \xi_l,\\
 X_l X_l^{*} & = (\theta^{-1}\delta_l -1)^2 (\alpha_l \otimes \overline \xi_l) (\xk \otimes \overline \alpha_l) = (\theta^{-1}\delta_l -1)^2 \Vert \xk \Vert_\H^2\, \alpha_l  \otimes \overline \alpha_l.
 \end{align*}
As a consequence, since $\xi_l = P_{\W} U^{*}e_l$, for every $h\in\H$ with $\norm{h}_\H=1$ we have
 \[
 \sum_{l=1}^N \E(X^*_l X_l) h 
 = (\theta^{-1}-1)P_{\W} U^{*}\sum_{l=1}^N \Vert \alpha_l \Vert_\H^2 \langle  h,P_{\W} U^{*} e_l \rangle_\H e_l,
 \]
 and, by \eqref{eq:boundkU} and \eqref{ak}, we readily deduce
 \[
 \norm{\sum_{l=1}^N \E(X^*_l X_l) h }_\H\le 2\theta^{-1} \sqrt{\k_1} \eta_\Delta^2 \mu^2|\Delta|\left( \sum_{l=1}^N |\langle  U P_{\W} h,e_l \rangle_\H|^2\right)^{\frac12}
 \le  2\theta^{-1} \k_1 \eta_\Delta^2 \mu^2|\Delta|.
 \]
 Arguing in the same way, we obtain
 \[
 \norm{\sum_{l=1}^N \E(X_l X_l^*) }
 \le  2\theta^{-1} \k_1 \eta_\Delta^2 \mu^2|\Delta|.
 \]

Hence we can choose
\[
\sigma^2 := 2\theta^{-1}\mu^2 \eta_\Delta^2|\Delta|\k_1,
\]
and applying the Bernstein inequality \eqref{matrix} for $t = \frac 1 4$,  we deduce the result.
\end{proof}

In the next result we deal with a matrix operator containing $U^{-1}P_\Omega U^{-*}$.

\begin{proposition}\label{prop:new}
We have
\begin{equation*}
 \P \left( \left\Vert\theta^{-1} P_{\W} U^{-1}P_\Omega U^{-*} P_{\W} \right\Vert_{\H\to \H} >  2\k_1 \right)  \leq {4}|\Delta| \exp \left(\frac{-3\k_1\theta }{16|\Delta|\mu^2 \eta_\Delta^2}  \right).
\end{equation*}
\end{proposition}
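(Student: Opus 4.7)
The plan is to mimic the argument used for Proposition~\ref{Theorem9.3}, but with $\xi_l$ replaced by $\alpha_l = P_{\W} U^{-1} e_l$, and with the ``deterministic'' part now absorbed into the target bound $2\k_1$ rather than cancelled by a balancing property.

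First I would write
\[
\theta^{-1} P_{\W} U^{-1} P_\Omega U^{-*} P_{\W} = \sum_{l=1}^N (\theta^{-1}\delta_l - 1)\,\alpha_l \otimes \overline \alpha_l \;+\; P_{\W} U^{-1} P_N U^{-*} P_{\W}.
\]
The second term on the right is bounded in operator norm by $\|U^{-1}\|^2\le\k_1$ using \eqref{eq:boundkU}. Hence the event $\{\|\theta^{-1} P_{\W} U^{-1} P_\Omega U^{-*} P_{\W}\| > 2\k_1\}$ is contained in $\{\|\sum_l X_l\| > \k_1\}$, where $X_l := (\theta^{-1}\delta_l - 1)\,\alpha_l \otimes \overline \alpha_l$ is a self-adjoint random operator on $\W \cong \C^d$ with $d = \dim \W \le 2|\Delta|$.

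Then I would apply the Matrix Bernstein inequality (Lemma~\ref{lem:matrixbern}). The mean is clearly zero. For the almost sure bound, using \eqref{ak},
\[
\|X_l\| \le \max\{\theta^{-1}-1,1\}\,\|\alpha_l\|_\H^2 \le 2\theta^{-1}\mu^2\eta_\Delta^2|\Delta| =: B.
\]
For the variance proxy, note $X_l^* X_l = X_l X_l^* = (\theta^{-1}\delta_l-1)^2 \|\alpha_l\|_\H^2 \,\alpha_l \otimes \overline \alpha_l$. The key observation is that
\[
\sum_{l=1}^N \|\alpha_l\|_\H^2\, \alpha_l \otimes \overline \alpha_l \;\preceq\; \bigl(\max_l\|\alpha_l\|_\H^2\bigr) \sum_{l=1}^N \alpha_l \otimes \overline \alpha_l = \bigl(\max_l\|\alpha_l\|_\H^2\bigr)\, P_{\W} U^{-1} P_N U^{-*} P_{\W},
\]
since all operators involved are positive semidefinite and the scalars $\|\alpha_l\|_\H^2$ are nonnegative. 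Combining $\max_l\|\alpha_l\|_\H^2 \le 2\mu^2\eta_\Delta^2|\Delta|$ with $\|P_{\W} U^{-1} P_N U^{-*} P_{\W}\|\le \k_1$ and $\E(\theta^{-1}\delta_l-1)^2 = \theta^{-1}-1 \le \theta^{-1}$, we obtain
\[
\Bigl\|\sum_{l=1}^N \E(X_l^*X_l)\Bigr\| = \Bigl\|\sum_{l=1}^N \E(X_lX_l^*)\Bigr\| \le 2\theta^{-1}\mu^2\eta_\Delta^2|\Delta|\k_1 =: \sigma^2.
\]

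Finally I would plug into \eqref{matrix} with $t=\k_1$ and $d \le 2|\Delta|$: since $\sigma^2 + B\k_1/3 = \tfrac{8}{3}\theta^{-1}\mu^2\eta_\Delta^2|\Delta|\k_1$, the exponent becomes $-\k_1^2/(2(\sigma^2 + B\k_1/3)) = -3\k_1\theta/(16\mu^2\eta_\Delta^2|\Delta|)$, and the prefactor is $2d \le 4|\Delta|$, yielding exactly the stated bound. There is no serious obstacle here: the argument is a direct adaptation of the proof of Proposition~\ref{Theorem9.3}, the only subtle point being the positive semidefinite domination used to control the variance sum.
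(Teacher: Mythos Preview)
Your proof is correct and follows essentially the same approach as the paper: the same decomposition into $\sum_l X_l + P_{\W}U^{-1}P_N U^{-*}P_{\W}$, the same application of Matrix Bernstein with $B=2\theta^{-1}\mu^2\eta_\Delta^2|\Delta|$, $\sigma^2=2\theta^{-1}\mu^2\eta_\Delta^2|\Delta|\k_1$, $t=\k_1$, and $d\le 2|\Delta|$. The only cosmetic difference is that you bound the variance via positive semidefinite domination, whereas the paper writes out the action of $\sum_l \E(X_l^*X_l)$ on a unit vector $h$ and bounds the resulting norm directly; both routes give the identical $\sigma^2$.
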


\begin{proof}
The proof is very similar to that of Proposition~\ref{Theorem9.3}, and only a sketch will be provided. 

Set $\alpha_l=P_{\W} U^{-1}e_l$.  The bound $\left\Vert P_{\W} U^{-1}P_N U^{-*} P_{\W} \right\Vert\le\k_1$ yields 
\begin{equation*}
\begin{split}
 \P  \left( \left\Vert \theta^{-1} P_{\W} U^{-1}P_\Omega U^{-*} P_{\W}\right\Vert>  2\k_1 \right)  &\le \P \left(\left\Vert  P_{\W} U^{-1}(\theta^{-1}P_\Omega-P_N) U^{-*} P_{\W} \right\Vert > \k_1 \right)\\
& = \P\left(  \left\Vert \sum_{l=1}^N X_l  \right\Vert >\k_1\right),
\end{split}
\end{equation*}
where 
\[
X_l = (\theta^{-1}\delta_l -1)(\alpha_l \otimes \overline \alpha_l)\colon \W\to\W.
\]
Since $\W$ is finite dimensional, $X_l$ may be identified with an element in $\mathbb{C}^{d\times d}$, where $d=\dim\W\le {2} |\Delta|$.
We have $\E (X_l)=0$. Further, by  \eqref{ak}, it follows that
\[
\Vert X_l \Vert  \leq \max\{\theta^{-1}-1, 1\} \lVert \alpha_l \rVert_\H^2  \leq 2 \theta^{-1}\mu^2\eta_\Delta^2 |\Delta|= : B.
\]
Note that $X_l^*=X_l$. We next study $\E(X_l^{*}X_l)=\E(X_l X_l^{*})$. We have
$
 X_l^{*} X_l  = (\theta^{-1}\delta_l -1)^2 \Vert \alpha_l \Vert_\H^2\, \alpha_l \otimes \overline \alpha_l$. 
As a consequence,  for every $h\in\H$ with $\norm{h}_\H=1$ we have
 \[
 \sum_{l=1}^N \E(X^*_l X_l) h 
 = (\theta^{-1}-1)P_{\W} U^{-1}\sum_{l=1}^N \Vert \alpha_l \Vert_\H^2 \langle  h,P_{\W} U^{-1} e_l \rangle_\H e_l,
 \]
 and, by \eqref{eq:boundkU} and \eqref{ak}, we readily deduce
 \[
 \begin{split}
 \norm{\sum_{l=1}^N \E(X^*_l X_l) h }_\H&\le 2\theta^{-1} \sqrt{\k_1} \eta_\Delta^2 \mu^2|\Delta|\left( \sum_{l=1}^N |\langle  U^{-*} P_{\W} h,e_l \rangle_\H|^2\right)^{\frac12}
 \\&\le  2\theta^{-1} \k_1 \eta_\Delta^2 \mu^2|\Delta|.
 \end{split}
 \]
 
Hence we can choose
$
\sigma^2 := 2\theta^{-1}\mu^2 \eta_\Delta^2|\Delta|\k_1,
$
and applying the Bernstein inequality \eqref{matrix} for $t=\k_1$,  we deduce the result.
\end{proof}

We conclude this subsection with the following estimate.

\begin{proposition}\label{prop:unifoff}
 We have
\begin{multline*}
 \P\left(\sup_{j \in \Delta^c } \Vert \theta^{-1} P_{\{j\}} D^{-*} P_{\W}^\perp U^{*}P_{\Omega}U P_{\W}^\perp D^{-1} P_{\{j\}} \Vert_{\ell^2(J)\to\ell^2(J)} >  2\k_1\k_2  \right)
\\ \leq 2 \tilde M(\theta) \exp \left(-\frac{ 3\theta \k_1\k_2}{8 \mu^2 B_\Delta^2}  \right).
\end{multline*}
\end{proposition}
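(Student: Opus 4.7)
The plan is to turn the sup-of-operator-norms on the left-hand side into a sup of scalar random variables, apply the scalar Bernstein inequality (Lemma~\ref{lem:matrixbern} with $d=1$) to each, and take a union bound on the finitely many indices not controlled directly by the definition of $\tilde M(\theta)$.

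\textbf{Reduction to a scalar.} Because $P_{\{j\}}$ is a rank-one projection onto $\mathbb{C}e_j$, the operator $\theta^{-1}P_{\{j\}}D^{-*}P_{\W}^\perp U^* P_\Omega U P_{\W}^\perp D^{-1}P_{\{j\}}$ has operator norm equal to
\[
S_j := \theta^{-1}\,\bigl\langle P_\Omega U P_{\W}^\perp D^{-1}e_j,\,U P_{\W}^\perp D^{-1}e_j\bigr\rangle = \theta^{-1}\sum_{l=1}^N\delta_l\,a_l^j,
\]
where $a_l^j := |\langle U P_{\W}^\perp D^{-1}e_j,e_l\rangle|^2$. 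So the desired event is $\{\sup_{j\in\Delta^c}S_j>2\k_1\k_2\}$.

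\textbf{Deterministic bound for large $j$.} Writing $U P_\W^\perp D^{-1}e_j=UD^{-1}e_j-UP_\W D^{-1}e_j$ and using $\W\subseteq\widetilde\W$ together with $\|U\|\le\sqrt{\k_1}$, I get
\[
\sqrt{\k_1}\,\|P_N U P_\W^\perp D^{-1}e_j\|_2\;\le\;\sqrt{\k_1}\,\|P_N U D^{-1}e_j\|_2+\k_1\,\|P_{\widetilde\W}D^{-1}e_j\|_\H.
\]
By the definition \eqref{eq:tildeM} of $\tilde M(\theta)$, for $j>\tilde M(\theta)$ the right-hand side is $<\theta$. Hence $\|P_\Omega U P_\W^\perp D^{-1}e_j\|_2^2\le\|P_N\cdots\|_2^2<\theta^2/\k_1$, so $S_j<\theta/\k_1\le 1\le 2\k_1\k_2$ deterministically (recall $\k_1,\k_2\ge 1$). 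Thus the sup may be restricted to $j\in\Delta^c\cap\{1,\dots,\tilde M(\theta)\}$, a set of at most $\tilde M(\theta)$ indices.

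\textbf{Scalar Bernstein for a fixed $j$.} Set $X_l^j:=\theta^{-1}(\delta_l-\theta)\,a_l^j$, so that $\sum_{l=1}^N X_l^j=S_j-\sum_{l=1}^N a_l^j$ and $\E[X_l^j]=0$. Since $\sum_l a_l^j=\|P_NUP_\W^\perp D^{-1}e_j\|_2^2\le\|U\|^2\|D^{-1}e_j\|_\H^2\le\k_1\k_2$, it suffices to show $\P(|\sum_l X_l^j|>\k_1\k_2)$ is dominated by the claimed exponential. From \eqref{eq:withB2} one has the key pointwise bound $a_l^j\le B_\Delta^2\mu^2$, so $|X_l^j|\le\theta^{-1}B_\Delta^2\mu^2=:B$ (using $\theta\le 1$), while
\[
\sum_{l=1}^N\E\bigl[(X_l^j)^2\bigr]=\theta^{-1}(1-\theta)\sum_{l=1}^N(a_l^j)^2\le\theta^{-1}B_\Delta^2\mu^2\sum_{l=1}^N a_l^j\le\theta^{-1}B_\Delta^2\mu^2\k_1\k_2=:\sigma^2.
\]
Applying Lemma~\ref{lem:matrixbern} with $d=1$ and $t=\k_1\k_2$ gives
\[
\P\!\left(|\textstyle\sum_l X_l^j|\ge\k_1\k_2\right)\le 2\exp\!\left(-\frac{(\k_1\k_2)^2/2}{\sigma^2+B\k_1\k_2/3}\right)=2\exp\!\left(-\frac{3\theta\k_1\k_2}{8\mu^2 B_\Delta^2}\right),
\]
after simplifying $\sigma^2+B\k_1\k_2/3=\tfrac43\theta^{-1}B_\Delta^2\mu^2\k_1\k_2$.

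\textbf{Union bound.} Summing over the at most $\tilde M(\theta)$ relevant indices yields
\[
\P\!\left(\sup_{j\in\Delta^c}S_j>2\k_1\k_2\right)\le 2\tilde M(\theta)\exp\!\left(-\frac{3\theta\k_1\k_2}{8\mu^2 B_\Delta^2}\right),
\]
which is the claim. The only genuinely delicate point is the interplay between $\W$ and $\widetilde\W$ in the deterministic step, which is resolved by the inclusion $\W\subseteq\widetilde\W$; the rest is bookkeeping around the Bernstein inequality.
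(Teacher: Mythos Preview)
Your proof is correct and follows essentially the same approach as the paper's: reduce the rank-one operator norm to the scalar $S_j=\theta^{-1}\|P_\Omega U P_\W^\perp D^{-1}e_j\|_2^2$, center it using $\sum_l a_l^j\le\k_1\k_2$, bound the increments via \eqref{eq:withB2}, apply the scalar Bernstein inequality with $t=\k_1\k_2$, and handle the infinite index set by the deterministic control from the definition of $\tilde M(\theta)$ (using $\W\subseteq\widetilde\W$). The only cosmetic difference is that the paper packages the deterministic cut-off as a set $\Gamma$ with $|\Gamma^c|\le\tilde M(2\k_1\sqrt{\k_2}\,\theta)\le\tilde M(\theta)$, whereas you appeal directly to $j>\tilde M(\theta)$; the outcome is the same.
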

\begin{proof}
Fix $j \in \Delta^c$ and let $T = U P_\W^\perp D^{-1}$. We have 
\begin{align*}
\|\theta^{-1} P_{\{j\}} T^* P_{\Omega}T P_{\{j\}}\| \leq \left|\sum_{l=1}^N Y^j_l\right| +  \|P_{\{j\}} T^* P_{N}T P_{\{j\}}\|,
\end{align*}
where 
\begin{align*}
Y_l^j =(\theta^{-1}\delta_l - 1) \langle T^* (e_l \otimes \overline{e_l}) T e_j,e_j\rangle
= (\theta^{-1}\delta_l - 1) |\langle e_l,Te_j\rangle|^2.
\end{align*}
Note that $\E (Y_l)=0$. Since, from our main assumptions,
\begin{align*}
\|P_{\{j\}} T^* P_{N}T P_{\{j\}}\| \leq \k_1 \k_2,
\end{align*}
we obtain
\[
\P  \left(\Vert \theta^{-1}P_{\{j\}} T^* P_{\Omega}T P_{\{j\}} \Vert > 2 \k_1\k_2  \right)  
\leq \P\left(  \left| \sum_{l=1}^N Y^j_l \right| >  \k_1\k_2 \right).
\]
Next, by \eqref{eq:withB2} we have
\[
|Y^j_l| = | (\theta^{-1}\delta_l - 1)| |\langle e_l,Te_j\rangle|^2 
=| (\theta^{-1}\delta_l - 1)| |\langle e_l, U P_\W^\perp D^{-1} e_j\rangle|^2 \leq \theta^{-1}\mu^2 B_\Delta^2= : B.
\]
In addition, using again \eqref{eq:withB2} yields
\begin{align*}
\sum_{l=1}^{N} \E (|Y^j_l|^2) & = (\theta^{-1}-1) \sum_{l = 1}^N |\langle e_l,Te_j\rangle|^4  \\
&\leq (\theta^{-1}-1) \mu^2 B_\Delta^2 \sum_{l = 1}^N |\langle e_l,Te_j\rangle|^2\\
& \leq \theta^{-1} \mu^2 B_\Delta^2 \|Te_j\|_2^2 \\
& \leq \theta^{-1} \mu^2  B_\Delta^2 \k_1 \k_2 = : \sigma^2.
\end{align*}

Now, assume that there exists a non-empty set $\Gamma \subseteq {J}$ such that
\begin{equation*} \label{est:inf2}
\P \left( \sup_{j \in \Gamma}\Vert \theta^{-1} P_{\{j\}} T^{*}P_{\Omega}T P_{\{j\}} \Vert >  2\k_1\k_2 \right) = 0\qquad\text{and}\qquad |\Gamma^c| \leq \tilde M(\theta). 
\end{equation*}
By the matrix Bernstein inequality \eqref{matrix} for $d =1$ and $t =\k_1\k_2$  and the union bound, we obtain
\[
\begin{split}
\P &\left( \sup_{j \in \Delta^c} \Vert \theta^{-1} P_{\{j\}} T^{*}P_{\Omega}T P_{\{j\}} \Vert >  2\k_1\k_2 \right) \\ &= \P \left( \sup_{j \in \Delta^c \cap \Gamma^c}\Vert \theta^{-1} P_{\{j\}} T^{*}P_{\Omega}T P_{\{j\}} \Vert >  2\k_1\k_2 \right)\\
& \leq 2 \tilde M(\theta) \exp \left(-\frac{ 3\theta \k_1\k_2}{8 \mu^2 B_\Delta^2}  \right),
\end{split}
\]
which is our final estimate.

We only have to show the existence of $\Gamma$ and provide a bound on $|\Gamma^c|$.
Note that
\[
\begin{split}
\Vert \theta^{-1}P_{\{j\}} T^* P_{\Omega}T P_{\{j\}} \Vert &\leq \theta^{-1}\sqrt{\k_1 \k_2}\Vert P_{\Omega}T P_{\{j\}} \Vert\\
&\leq \theta^{-1}\sqrt{\k_1 \k_2}\Vert P_{N}T e_j \Vert_2\\
& = \theta^{-1}\sqrt{\k_1 \k_2}\Vert P_{N} U P_\W^\perp D^{-1}e_j \Vert_2\\
&\leq \theta^{-1}\sqrt{\k_1 \k_2} \left(\| P_N U D^{-1} e_j\|_2 + \sqrt{\k_1}\| P_{{\widetilde{\W}}} D^{-1} e_j\|_{\H}\right).
\end{split}
\]
We then define, similarly to Proposition~\ref{prop9.1},
\begin{equation*}
\Gamma = \left\{ j \in {J} : \theta^{-1}\sqrt{\k_1}\left(\| P_N U D^{-1} e_j\|_2 + \sqrt{\k_1}\| P_{{\widetilde{\W}}} D^{-1} e_j\|_{\H}\right) < 2 \k_1\sqrt{\k_2} \right\},
\end{equation*}
which satisfies $|\Gamma^c| \leq \tilde M(2 \k_1 \sqrt{\k_2}\theta) \leq \tilde M(\theta)$ by \eqref{eq:tildeM}. The proof follows.
\end{proof}

\subsection{The dual certificate}\label{sub:dual}

We now show how the existence of a \emph{dual certificate} $\rho$ satisfying certain properties guarantees exact recovery up to measurement noise.
The result is standard, and we follow closely \cite[Proposition~6.1]{Poon2015}. An analogous result holds true for more general operators \cite{grasmair2008,grasmair-etal-2011,grasmair-etal-2011b,haltmeier2013}, and this would allow considering the case where the inverse of $U$ is unbounded: this is left for future work.

\begin{proposition}\label{prop:dual}
Assume that Hypothesis~\ref{hp1} holds true, and let $U$ and $D$ denote the corresponding analysis operators, satisfying the bounds given in \eqref{eq:boundk}. Let $\Delta \subseteq J$ and $\Omega$ be a finite subset of $L$. Let $g_0 \in \H$ and $\eta \in\ell^2({L})$ be such that $\norm{\eta}_2\le\epsilon$ for some $\epsilon\ge 0$. Let $\zeta = P_{\Omega} U g_0 + \eta$ be the known noisy measurement. 
Assume that there exist $\rho = U^* P_\Omega \rho'$ for some $\rho' \in \ell^2({L})$, $\L>0$ and $0 <\theta \leq 1$ with the following properties:
\begin{enumerate}[(i)]

\item\label{eq:dual1} $\Vert \left( \theta^{-1}P_{\W}U^{-1}P_{\Omega}UP_{\W}\right)^{-1} \Vert_{\W\to\W} \leq 2$,
\item\label{eq:dual2}  $\|\theta^{-1}P_\W U^{-1} P_\Omega U^{-*} P_\W \|_{\W\to\W}\le 2\k_1$,
\item\label{eq:dual3} $\sup_{j \in \Delta^c} \Vert \theta^{-1}P_{\{j\}} D^{-*} P_\W^\perp U^{*}P_{\Omega}U P_{\W}^{\perp} D^{-1}P_{\{j\}} \Vert_{\ell^2(J)\to\ell^2(J)} \leq 2 \k_1 \k_2$,
\item\label{eq:dual4} $\Vert P_{\W}\rho - D^{\ast} \sgn(P_{\Delta} D g_0)\Vert_\H \leq \frac{1}{16\k_1\sqrt{\k_2}}$,
\item\label{eq:dual5} $\Vert P_{\Delta}^{\perp} D^{-*} P_{\W}^{\perp} \rho \Vert_{l^\infty{(J)}} \leq \frac{1}{4}$,
\item\label{eq:dual6} $\| \rho'\|_{{\ell^2(L)}} \leq \L \sqrt{\k_1 \k_2|\Delta|}$.

\end{enumerate}

Let $g \in \H$ be a minimizer of the problem
\begin{align*}
\inf_{ \substack{
 g \in \H \\  D  g \in \ell^1({J})
}  } \Vert D  g \Vert_1 \quad \text{subject to } \norm{P_\Omega U  g - \zeta}_2\le\epsilon.
\end{align*}
Then
\[
\Vert g - g_0 \Vert_\H \leq 20\k_1\sqrt{\k_2} \|P_{\Delta}^\perp D g_0\|_1+  \epsilon \sqrt{\k_1} \left(10\,\theta^{-1/2}+20Q\k_1\k_2\sqrt{|\Delta|} \right).
\]
\end{proposition}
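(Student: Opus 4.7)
Set $h = g - g_0$. Feasibility of $g$ and $g_0$ combined with $\|\eta\|_2\le\epsilon$ yields $\|P_\Omega U h\|_2 \le 2\epsilon$. Decompose $h = h_\W + h_{\W^\perp}$ with $h_\W = P_\W h$. Two structural identities drive the argument: since $\W\supseteq R(D^{\ast}P_\Delta)$, we have $P_\Delta D h_{\W^\perp}=0$; and since $\W\supseteq R(D^{-1}P_\Delta)$, we have $P_\W^\perp D^{-1}P_\Delta=0$, so from $h=D^{-1}Dh$ it follows that $h_{\W^\perp}=P_\W^\perp D^{-1}P_\Delta^\perp Dh$. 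This last identity immediately yields the a priori bound $\|h_{\W^\perp}\|_\H\le\sqrt{\k_2}\,\|P_\Delta^\perp Dh\|_1$.

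\emph{Step 1 (dual certificate bound for $\|P_\Delta^\perp Dh\|_1$).} Testing the $\ell^1$ minimality $\|Dg\|_1\le\|Dg_0\|_1$ against a pseudo-sign equal to $\sgn(P_\Delta Dg_0)$ on $\Delta$ and $\sgn(P_\Delta^\perp Dh)$ on $\Delta^c$ produces $\|P_\Delta^\perp Dh\|_1\le 2\|P_\Delta^\perp Dg_0\|_1-\Re\langle h,D^{\ast}\sgn(P_\Delta Dg_0)\rangle$. Using (iv), decompose $\langle h,D^{\ast}\sgn(P_\Delta Dg_0)\rangle=\langle h,\rho\rangle-\langle h_{\W^\perp},P_\W^\perp\rho\rangle+\langle h,r\rangle$ with $\|r\|_\H\le 1/(16\k_1\sqrt{\k_2})$. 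Then $\langle h,\rho\rangle=\langle P_\Omega Uh,\rho'\rangle$ is bounded by $2\epsilon\,\L\sqrt{\k_1\k_2|\Delta|}$ via (vi); and using the formula for $h_{\W^\perp}$, the term $\langle h_{\W^\perp},P_\W^\perp\rho\rangle$ equals $\langle P_\Delta^\perp Dh,P_\Delta^\perp D^{-\ast}P_\W^\perp\rho\rangle$ and is bounded by $\tfrac14\|P_\Delta^\perp Dh\|_1$ via (v). Rearranging yields $\|P_\Delta^\perp Dh\|_1\le \tfrac83\|P_\Delta^\perp Dg_0\|_1+\tfrac{1}{12\k_1\sqrt{\k_2}}\|h\|_\H+\tfrac83\L\sqrt{\k_1\k_2|\Delta|}\,\epsilon$.

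\emph{Step 2 (bound on $\|h_\W\|$).} Condition (i) gives $\|h_\W\|_\H\le 2\theta^{-1}\|P_\W U^{-1}P_\Omega Uh_\W\|_\H$. The duality identity $\|P_\W U^{-1}P_\Omega\|^2_{\ell^2(L)\to\H}=\|P_\W U^{-1}P_\Omega U^{-\ast}P_\W\|_{\W\to\W}$, combined with (ii), yields the crucial bound $\|P_\W U^{-1}P_\Omega\|\le\sqrt{2\theta\k_1}$ (this is the source of the $\sqrt{\theta}$ gain). Splitting $P_\Omega Uh_\W=P_\Omega Uh-P_\Omega Uh_{\W^\perp}$: the first part contributes $\sqrt{2\theta\k_1}\cdot 2\epsilon$; for the second, expand $P_\Omega Uh_{\W^\perp}=\sum_{j\in\Delta^c}(Dh)_j\,P_\Omega UP_\W^\perp D^{-1}e_j$ using the identity for $h_{\W^\perp}$, and invoke (iii) (which is equivalent to $\|P_\Omega UP_\W^\perp D^{-1}e_j\|_2\le\sqrt{2\theta\k_1\k_2}$ for $j\in\Delta^c$) to obtain $\|P_\Omega Uh_{\W^\perp}\|_2\le\sqrt{2\theta\k_1\k_2}\,\|P_\Delta^\perp Dh\|_1$. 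Putting it together, $\|h_\W\|_\H\le 4\sqrt{2\k_1/\theta}\,\epsilon+4\k_1\sqrt{\k_2}\,\|P_\Delta^\perp Dh\|_1$.

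\emph{Step 3 (closing the loop).} Since $\|h\|_\H\le\|h_\W\|_\H+\|h_{\W^\perp}\|_\H$, combining Step 2 with the a priori bound $\|h_{\W^\perp}\|\le\sqrt{\k_2}\|P_\Delta^\perp Dh\|_1$ gives $\|h\|_\H\le 4\sqrt{2\k_1/\theta}\,\epsilon+(4\k_1+1)\sqrt{\k_2}\,\|P_\Delta^\perp Dh\|_1$. Substituting the Step 1 estimate produces a linear inequality in $\|h\|_\H$ whose self-coefficient is $(4\k_1+1)/(12\k_1)\le 5/12<1$; absorbing this term yields the claimed stability bound (with careful tracking of the universal constants). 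The main obstacle is the cross term in Step 2: condition (iii) is precisely calibrated so that the $\sqrt{\theta}$-scaling of $\|P_\Omega UP_\W^\perp D^{-1}e_j\|_2$, after multiplication by the $\theta^{-1}$ factor from (i), produces the $\theta^{-1/2}$ appearing in the noise term of the final bound; extracting this scaling requires the joint use of (i)-(iii) together with the two structural identities for $h_{\W^\perp}$.
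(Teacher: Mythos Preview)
Your argument is correct and follows essentially the same route as the paper's own proof: the same structural identities for $P_\W^\perp h$, the same use of (i)--(ii) via $\|T\|^2=\|TT^*\|$ to extract the $\sqrt\theta$ gain, the same use of (iii) to control $\|P_\Omega U P_\W^\perp D^{-1}e_j\|_2$, and the same dual-certificate decomposition with (iv)--(vi). The only substantive difference is in Step~1. You bound $|\langle h,r\rangle|\le \|h\|_\H/(16\k_1\sqrt{\k_2})$ with $r=D^{\ast}\sgn(P_\Delta Dg_0)-P_\W\rho$, which produces a self-referential inequality in $\|h\|_\H$ that you then absorb. The paper instead observes that $r\in\W$ (because $D^{\ast}\sgn(P_\Delta Dg_0)\in R(D^{\ast}P_\Delta)\subseteq\W$ and $P_\W\rho\in\W$), so $|\langle h,r\rangle|=|\langle P_\W h,r\rangle|\le \|P_\W h\|_\H/(16\k_1\sqrt{\k_2})$, and then inserts the already-obtained Step~2 bound for $\|P_\W h\|_\H$ directly, avoiding the absorption. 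This makes the paper's bookkeeping cleaner and yields exactly the stated constants $20$ and $10$; your absorption route gives a coefficient $\tfrac{32\k_1(4\k_1+1)}{8\k_1-1}\sqrt{\k_2}$ in front of $\|P_\Delta^\perp Dg_0\|_1$, which at $\k_1=1$ is $160/7\approx 22.9$ and so slightly exceeds $20$. If you want the constants exactly as stated, use the observation $r\in\W$.
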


\begin{proof}
We start from the following identity, for any $\tilde g \in \H$:
\begin{equation} \label{eq:idpw}
P_{\W}^\perp \tilde g =P_{\W}^\perp D^{-1} D \tilde g =P_{\W}^\perp D^{-1} P_{\Delta}^\perp D \tilde g.
\end{equation}
This follows from the fact that $D^{-1} D$ is the identity and that $P_{\W}$ is the orthogonal projection onto $R(D^{-1} P_{\Delta}) + R(D^{*} P_{\Delta})$. From estimates \eqref{eq:boundkD} we obtain, for any $\tilde g \in \H$,
\begin{equation}\label{eq:decomposition}
\| \tilde g \|_\H \leq   \| P_{\W} \tilde g\|_\H + \| P_{\W}^\perp \tilde g \|_\H\leq \| P_{\W} \tilde g \|_\H +\sqrt{\k_2} \| P_{\Delta}^\perp D \tilde g\|_1.
\end{equation}
From the last inequality applied to $h = g-g_0$, we see that it is enough to bound $\| P_{\W} h \|_\H$ and $\| P_{\Delta}^\perp D h\|_1$ in order to finish the proof.
Let us start from $\| P_{\W} h \|_\H$.

First note that since $\norm{P_\Omega U  g - \zeta}_2\le\epsilon$  we have 
\begin{equation}\label{eq:2eps}
\|P_{\Omega} U h \|_2 \leq 2 \epsilon.
\end{equation}
Set $T=\theta^{-1/2}P_\W U^{-1}P_\Omega $. By \ref{eq:dual2} we have $\norm{T}=\sqrt{\norm{TT^* }}\le \sqrt{2\k_1}$. Thus, using \eqref{eq:2eps}
and \ref{eq:dual1} we find
\begin{align*}
\| P_{\W} h \|_\H &= \| (P_{\W}U^{-1}P_{\Omega} U P_{\W})^{-1}P_{\W}U^{-1}P_{\Omega} U P_{\W} h\|_\H \\
&\leq 2 \theta^{-1/2} \| \theta^{-1/2}P_{\W}U^{-1}P_{\Omega} U (h- P_{\W}^\perp h)\|_\H \\
&\leq 2\theta^{-1/2}\sqrt{2\k_1}( 2 \epsilon  + \|P_{\Omega} U P_{\W}^\perp h\|_2).
\end{align*}
We bound the last term as follows. Set $T_j=\theta^{-1/2} P_{\Omega} U P_{\W}^\perp D^{-1}P_{\{j\}}$. By \ref{eq:dual3} we have $\norm{T_j}=\sqrt{\norm{T_j^* T_j}}\le \sqrt{2\k_1\k_2}$. Then
\[
\begin{split}
\|\theta^{-1/2}P_{\Omega} U P_{\W}^\perp h\|_2 &= \|\theta^{-1/2} P_{\Omega} U P_{\W}^\perp D^{-1} P_{\Delta}^\perp D h\|_2\\
& \leq \sup_{j \in \Delta^c} \| \theta^{-1/2} P_{\Omega} U P_{\W}^\perp D^{-1}e_j \|_2 \| P_{\Delta}^\perp D h\|_1\\
& = \sup_{j \in \Delta^c} \| T_j \| \| P_{\Delta}^\perp D h\|_1\\
& \leq \sqrt{2\k_1\k_2}\| P_{\Delta}^\perp D h\|_1,
\end{split}
\]
where we used identity \eqref{eq:idpw}. Thus we have found
\begin{equation} \label{eq:estpwh}
\| P_{\W} h \|_\H \leq 2\sqrt{2\k_1}( 2 \epsilon  \theta^{-1/2}+ \sqrt{2\k_1\k_2}\| P_{\Delta}^\perp D h\|_1).
\end{equation}

We now pass to the estimate of $\| P_{\Delta}^\perp D h\|_1$. Note that
\begin{align*}
\| D g\|_1 &= \|P_{\Delta}^\perp D (g_0+h)\|_1+\|P_{\Delta} D (g_0+h)\|_1 \\
&\geq \|P_{\Delta}^\perp D h \|_1-\|P_{\Delta}^\perp D g_0\|_{1}+\|P_{\Delta} D g_0\|_1 +\Ree \langle P_{\Delta}D h, \sgn(P_{\Delta} D g_0)\rangle\\
&\ge  \|P_{\Delta}^\perp D h \|_1-2 \|P_{\Delta}^\perp D g_0\|_1+\| D g_0\|_1 -| \langle P_{\Delta}D h, \sgn(P_{\Delta} D g_0)\rangle|.
\end{align*}
Singe $g$ is a minimizer, we find that
\begin{equation*}
\| P_{\Delta}^\perp D h\|_1 \leq 2 \|P_{\Delta}^\perp D g_0\|_1 + |\langle P_{\Delta}D h, \sgn(P_{\Delta} D g_0)\rangle|.
\end{equation*}
Now, since $\rho = U^{\ast} P_{\Omega} \rho'$ and by \eqref{eq:idpw}, \eqref{eq:2eps}, \eqref{eq:estpwh}, \ref{eq:dual4}, \ref{eq:dual5} and \ref{eq:dual6} we have
\begin{align*}
&|\langle P_{\Delta}D h, \sgn(P_{\Delta} D g_0)\rangle| = |\langle h, D^{\ast}\sgn(P_{\Delta} D g_0)\rangle|\\
&\qquad \leq |\langle h , D^{\ast} \sgn(P_{\Delta} D g_0)- P_{\W} \rho\rangle|+|\langle h, \rho \rangle| + |\langle h ,P_{\W}^\perp \rho\rangle|\\
&\qquad \leq \| P_{\W} h\|_\H \|D^{\ast}\sgn(P_{\Delta} D g_0)-P_{\W} \rho\|_\H + \|P_{\Omega} U h\|_2 \| \rho'\|_2+ |\langle D^{-1} P_{\Delta}^\perp D h, P_{\W}^\perp \rho \rangle |\\
&\qquad \leq \frac{1}{16\k_1\sqrt{\k_2}}\| P_{\W} h\|_\H + 2\epsilon \L \sqrt{\k_1 \k_2 |\Delta|} + \frac 1 4 \| P_{\Delta}^\perp D h\|_1\\
&\qquad \le \frac{1}{8\k_1\sqrt{\k_2}} \sqrt{2\k_1}( 2 \epsilon  \theta^{-1/2}+ \sqrt{2\k_1\k_2}\| P_{\Delta}^\perp D h\|_1) + 2\epsilon \L \sqrt{\k_1 \k_2 |\Delta|} + \frac 1 4 \| P_{\Delta}^\perp D h\|_1 \\
&\qquad \leq \epsilon  \left(\frac{\theta^{-1/2}}{2\sqrt{2\k_1\k_2}} + 2\L\sqrt{\k_1\k_2|\Delta|} \right)+\frac 1 2 \| P^\perp_{\Delta} D h\|_1.
\end{align*}
Thus we have obtained
\begin{equation*}
\| P_{\Delta}^\perp D h\|_1 \leq 4\| P_{\Delta}^\perp D g_0\|_1 + \epsilon  \left(\frac{\theta^{-1/2}}{\sqrt{2\k_1\k_2}} + 4\L\sqrt{\k_1\k_2|\Delta|} \right),
\end{equation*}
which by \eqref{eq:decomposition} and \eqref{eq:estpwh} yields the final estimate
\[
\begin{split}
\|h\|_{\H} &\leq \| P_{\W} h\|_\H + \sqrt{\k_2}\| P_{\Delta}^\perp D h\|_1 \\
&\leq 4\sqrt{2\k_1} \epsilon  \theta^{-1/2}+ (4\k_1\sqrt{\k_2}+\sqrt{\k_2})\| P_{\Delta}^\perp D h\|_1\\
& \leq 4\sqrt{\k_2}(4\k_1+1) \|P_{\Delta}^\perp D g_0\|_1 \\ &\qquad\qquad+  \epsilon \sqrt{\k_1} \left(\theta^{-1/2}\left( 6\sqrt{2}+\frac{1}{\sqrt{2}\k_1}\right)+(4\k_1+1)4Q\k_2\sqrt{|\Delta|} \right).
\end{split}
\]
This concludes the proof.
\end{proof}

By using the results of $\S$\ref{sub:four},
we now show that the dual certificate $\rho$ can be constructed. The proof is based on a \emph{golfing scheme} \cite{G,2015-gross-krahmer-kueng}.

\begin{proposition}\label{prop:golfing}
Assume that Hypothesis~\ref{hp1} holds true, and let $U$ and $D$ denote the corresponding analysis operators, satisfying the bounds given in \eqref{eq:boundk}. Let $M\in {J}$, $\omega \geq 1$ and $\Delta \subseteq \{1, \ldots, M\}$ be such that $|\Delta|\ge 3$. Let $N$ satisfy the balancing property with respect to $U$, $D$,  $M$ and $|\Delta|$.  Let $\Omega \subseteq \{1,\dots,N\}$ be chosen uniformly at random with $|\Omega|=m$. Take $g_0 \in \H$. If
\[
m \geq C \mu^2 \eta_{\Delta}^2 |\Delta| \k_1\k_2 \omega^2 B_{\Delta}^2 N \log \left(\k_1\k_2 \tilde M\Bigl(\tfrac{C' m}{N\omega \sqrt{|\Delta|\k_2}}\Bigr)\right),
\] 
then, with probability exceeding $1-e^{-\omega}$, there exist $\rho = U^* P_\Omega \rho'$ for some $\rho' \in \ell^2({L})$ and $\L\le C'''\sqrt{\omega\frac{N}{m}}$
satisfying the properties \ref{eq:dual1}-\ref{eq:dual6} of Proposition~\ref{prop:dual}, with $\theta=m/N$, where $C,C',C'''>0$ are  universal constants. 
\end{proposition}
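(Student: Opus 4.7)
The plan is to apply the standard \emph{golfing scheme} of Gross, adapted to the frame setting along the lines of \cite{AH,Poon2015,KG}. First, I reduce the uniform sampling of size $m$ to an easier Bernoulli model: set $\theta := m/N$ and $L := \lceil C_1\log(\k_1\k_2|\Delta|)\rceil$, and write $\Omega$ (up to a standard comparison argument that loses a factor $2$ in probability, as in the last paragraph of the proof of Corollary~\ref{cor:main}) as the disjoint union of $L$ independent Bernoulli batches $\Omega_1,\dots,\Omega_L$ each with parameter $\theta_i := \theta/(2L)$. Conditions \ref{eq:dual1}--\ref{eq:dual3} of Proposition~\ref{prop:dual} are verified on the full sample $\Omega$: \ref{eq:dual1} follows from Proposition~\ref{Theorem9.3} via a Neumann series, \ref{eq:dual2} from Proposition~\ref{prop:new}, and \ref{eq:dual3} from Proposition~\ref{prop:unifoff}. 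The hypothesis on $m$ in the statement is precisely what makes each of the corresponding tail probabilities smaller than $e^{-\omega}/(6L)$.

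For \ref{eq:dual4}--\ref{eq:dual5} I construct $\rho$ iteratively. Set $h_0 := D^{*}\sgn(P_\Delta Dg_0)\in \W$, and for $i=1,\dots,L$ define
\[
\tau_i := \theta_i^{-1}U^{*}P_{\Omega_i}U^{-*}h_{i-1}, \qquad h_i := h_{i-1} - P_\W\tau_i = \bigl(P_\W - \theta_i^{-1}P_\W U^{*}P_{\Omega_i}U^{-*}P_\W\bigr)h_{i-1},
\]
where the last identity uses $h_{i-1}\in\W$ inductively. Setting $\rho := \sum_{i=1}^L\tau_i$ and $\rho' := \sum_{i=1}^L \theta_i^{-1}P_{\Omega_i}U^{-*}h_{i-1}$ (well-defined, and $\rho = U^{*}P_\Omega\rho'$ since the $\Omega_i$ are disjoint), Proposition~\ref{Proposition9.2} applied at $t=1/2$ gives $\|h_i\|_\H\le\tfrac12\|h_{i-1}\|_\H$ with high probability, and telescoping yields
\[
\|P_\W\rho - D^{*}\sgn(P_\Delta Dg_0)\|_\H = \|h_L\|_\H \le 2^{-L}\|h_0\|_\H \le 2^{-L}\sqrt{\k_2|\Delta|}\le \tfrac{1}{16\k_1\sqrt{\k_2}},
\]
for the chosen $L$, which is \ref{eq:dual4}. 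Condition \ref{eq:dual5} is established by applying Proposition~\ref{prop9.1} to each $\tau_i$ with parameter $t_i := 2^{-(i+1)}/\sqrt{\k_2|\Delta|}$; the triangle inequality then gives $\|P_\Delta^\perp D^{-*}P_\W^\perp\rho\|_\infty \le \sum_i t_i\|h_{i-1}\|_\H \le \sum_i 2^{-2i}\le \tfrac14$.

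Finally, for \ref{eq:dual6}, disjointness of the batches and $h_{i-1}\in\W$ give
\[
\|\rho'\|_2^2 = \sum_{i=1}^L\theta_i^{-2}\|P_{\Omega_i}U^{-*}h_{i-1}\|_2^2 = \sum_{i=1}^L\theta_i^{-1}\bigl\langle h_{i-1},\,\theta_i^{-1}P_\W U^{-1}P_{\Omega_i}U^{-*}P_\W h_{i-1}\bigr\rangle_\H,
\]
and Proposition~\ref{prop:new4} bounds each inner pairing by $2\k_1\|h_{i-1}\|_\H^2 \le 2\k_1\k_2|\Delta|\cdot 4^{-(i-1)}$, yielding $\|\rho'\|_2^2 \lesssim \k_1\k_2|\Delta|\cdot L/\theta \lesssim \omega\,\k_1\k_2|\Delta|\cdot N/m$, so $\L = O(\sqrt{\omega N/m})$ as required. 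The main technical obstacle is the careful bookkeeping: each of the $L$ golfing steps invokes Propositions~\ref{prop9.1}, \ref{Proposition9.2} and \ref{prop:new4} with delicately chosen parameters $t_i$ and $\theta_i$, and the hypothesis on $m$ must simultaneously control all resulting tail bounds and the six conditions so that the total failure probability does not exceed $e^{-\omega}$. In particular, the factor $\tilde M\bigl(C'm/(N\omega\sqrt{|\Delta|\k_2})\bigr)$ enters through Proposition~\ref{prop9.1}, and matching it against the choice $t_i\theta_i\sim m/(N\omega\sqrt{|\Delta|\k_2})$ in the golfing parameters is the most delicate step of the calculation.
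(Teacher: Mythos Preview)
Your overall architecture is right, but there is a genuine gap in the verification of condition \ref{eq:dual5}. You apply Proposition~\ref{prop9.1} with parameters $t_i = 2^{-(i+1)}/\sqrt{\k_2|\Delta|}$, but that proposition carries the explicit hypothesis $t \ge \tfrac{1}{7\sqrt{|\Delta|\k_2}}$; your $t_i$ violates this for every $i\ge 2$. The constraint is not cosmetic: it is exactly the balancing property~\eqref{bp2}, which bounds the deterministic bias term in~\eqref{eq:Yk} by $\tfrac{1}{14\sqrt{|\Delta|\k_2}}$, and this bias cannot be made smaller without enlarging $N$. With your uniform contraction $\alpha_i = 1/2$, no admissible choice of $t_i$ can make the sum $\sum_i t_i\|h_{i-1}\|_\H$ fall below $1/4$: even the constant choice $t_i = \tfrac{1}{7\sqrt{|\Delta|\k_2}}$ yields $\sum_i t_i\|h_{i-1}\|_\H \le \tfrac{2}{7}\cdot\sqrt{\k_2|\Delta|}/\sqrt{\k_2|\Delta|} = \tfrac{2}{7} > \tfrac14$.

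The paper resolves this with a \emph{two-phase} golfing scheme: the first two batches are large ($q_1=q_2=\theta/9$, not $\theta/L$) and use aggressive contraction $\alpha_1=\alpha_2=\bigl(4\sqrt{\sqrt{\k_2}\log(|\Delta|\k_1^2\k_2)}\bigr)^{-1}$, so that $\|Z_2\|_\H$ picks up an extra factor $\alpha_1\alpha_2 \sim (\log(|\Delta|\k_1^2\k_2))^{-1}$. This buys room in the later steps ($i\ge 3$) to take the much larger $\beta_i = \tfrac{4\log(|\Delta|\k_1^2\k_2)}{7\sqrt{|\Delta|}}$, which comfortably satisfies the lower bound in Proposition~\ref{prop9.1}, while the product $\beta_i\prod_{j<i}\alpha_j$ still decays geometrically. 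The same two-phase trick is what gives the sharp bound $\L \lesssim \sqrt{\omega N/m}$: with your equal batches $\theta_i=\theta/(2L)$ the estimate on $\|\rho'\|_2$ produces $\L \lesssim \sqrt{L\, N/m}$ with $L\sim\log(\k_1\k_2|\Delta|)$, which is not the stated conclusion. In the paper's scheme the dominant contribution to $\|\rho'\|_2$ comes from the first two (large) batches, giving $\L\lesssim q_1^{-1/2}\sim\theta^{-1/2}$, and the tail is damped by the extra $\alpha_1\alpha_2$ factor.
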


\begin{proof}
The proof is based on a recursive procedure to construct a sequence of vectors $\{Y_i\}$ converging to the dual certificate $\rho$  with high probability.

The set $\Omega \subseteq \{1, \ldots, N\}$ is chosen uniformly at random with $|\Omega| = m$. It is well known (see \cite[Section II.C]{CRT}) that we may, without loss of generality, replace this way of choosing $\Omega$ with the model $\{1,\ldots,N\} \supseteq \Omega \sim Ber(\theta)$ for $\theta = m/N$ ($\theta$ will have this value throughout the proof). This is equivalent to choosing $\Omega$ as 
$$
\Omega = \Omega_1 \cup \ldots \cup \Omega_{l'}$$
with $\Omega_{l'}$ following a Bernoulli distribution as explained below (see also \cite[Section 9.1]{AH}). The main difference with the golfing scheme in \cite{AH} is that the number $l'$ of sampled sets is greater than $l$, the number of iterations in our recursive scheme (both to be determined later). 
In fact, given $q_i$ for $i=1, \ldots, l$, we will sample $l' \geq l$ sets distributed as $Ber(q_i)$, 
and will keep only $l$ of them for the construction of the certificate.

To initialize the iterations, set 
$$
Y_0  = 0,
$$
and define
\begin{equation}\label{eq:Zi}
Z_i = D^{\ast}\sgn (P_{\Delta}D g_0) - P_{\W} Y_i, \qquad i=0,\dots,l.
\end{equation}
Let us define the sequence $\{Y_i\}_{i=1}^l$ iteratively as follows. Given $q_i$, for $j=1,2,\dots$ we choose $\Omega^j_i\subseteq \{1,\dots,N\}$ at random such that $\Omega^j_i\sim Ber(q_i)$. Let $E_{\Omega_i^j}= U^{*}P_{\Omega_i^j}\U$. We repeat the choice for $j=1,2,\dots$ until the conditions
 \begin{align}
 & \left\Vert \bigl( P_ {\W} - q_i^{-1} P_{\W} E_{\Omega^j_i} P_{\W}
  \bigr) Z_{i-1}\right\Vert_\H \leq \alpha_i \Vert Z_{i-1}\Vert_\H, \label{c1}\\
  & \left\Vert  q_{i}^{-1} P_\Delta^{\perp}D^{-*} P^\perp_{\W} E_{\Omega^j_i}  Z_{i-1}\right\Vert _\infty \leq \beta_i \Vert  Z_{i-1} \Vert_\H, \label{c2}\\
& \left\|q_i^{-1}P_\W U^{-1} P_{\Omega_i^j}U^{-*} P_\W Z_{i-1}\right\|_\H \leq 2\k_1 \|Z_{i-1}\|_\H, \label{c3} 
\end{align}
hold for some parameters $\alpha_i, \beta_i \in \R$ that will be chosen later. We set
\[
r_i = \min\{j=1,2,\ldots: \text{\eqref{c1}, \eqref{c2} and \eqref{c3} are satisfied}\},
\]
namely $r_i$ denotes the number of repetitions of the $i$-th step. We also set
\[
\Omega=\bigcup_{i=1}^l \bigcup_{j=1}^{r_i} \Omega_i^j, \qquad \Omega_i = \Omega_i^{r_i}, \qquad Y_i = \sum_{k=1}^i q_k^{-1} E_{\Omega_k}Z_{k-1},
\]
and
\begin{equation*}
\rho  = Y_l,\qquad \rho' = \sum_{i=1}^l q_{i}^{-1} P_{\Omega_i}U^{-*}Z_{i-1},
\end{equation*}
so that $\rho = U^{*}P_\Omega \rho'$.

The identities \eqref{eq:Zi} and $Y_i = Y_{i-1} + q_i^{-1}E_{\Omega_i}Z_{i-1}$ yield
\[
Z_i  
  = Z_{i-1} -  q_{i}^{-1}P_{\W} E_{\Omega_i}P_{\W} Z_{i-1}
  = (P_{\W}- q_{i}^{-1}P_{\W} E_{\Omega_i}P_{\W}) Z_{i-1}. 
\]
Thus by \eqref{eq:boundkD} and \eqref{c1} it follows that
\[
\Vert Z_i \Vert_\H \leq \alpha_i \Vert Z_{i-1} \Vert_\H \leq \prod_{j=1}^i \alpha_j \Vert Z_0 \Vert_\H \le \sqrt{\k_2|\Delta|} \prod_{j=1}^i \alpha_j,
\]
which together with $Z_l =D^\ast\sgn(P_\Delta D g_0)-P_{\W} \rho$  gives
\[
\Vert D^\ast\sgn(P_\Delta D g_0)-P_{\W} \rho \Vert_\H \leq \sqrt{ \k_2|\Delta|} \prod_{i=1}^l \alpha_i.
\]
Moreover by \eqref{c2} we have,
\[
\Vert  P_\Delta^{\perp} D^{-*} P_{\W}^\perp \rho \Vert_\infty  \leq \sum_{i=1}^l \left\Vert q_i ^{-1} P_\Delta^{\perp} D^{-*} P_{\W}^\perp E_{\Omega_i}Z_{i-1} \right\Vert_\infty
\leq \sqrt{|\Delta|\k_2} \sum_{i=1}^l \beta_i \prod_{j=1}^{i-1} \alpha_j.
\]
(For $i=1$ we set $\Pi_{j=1}^{i-1} \alpha_j=1$.) Now let $\rho'_i =  q_{i}^{-1} P_{\Omega_i} U^{-*}Z_{i-1}$ for $i =1,\dots,l$. By \eqref{c3} we have
\begin{align*}
\| \rho_i'\|^2_2 &\leq q_{i}^{-2} \langle P_{\Omega_i} U^{-*}Z_{i-1},P_{\Omega_i} U^{-*}Z_{i-1} \rangle\\
& = q_{i}^{-1} \langle q_{i}^{-1} U^{-1} P_{\Omega_i} U^{-*}Z_{i-1},Z_{i-1} \rangle\\
& =  q_{i}^{-1} \langle q_{i}^{-1} P_\W U^{-1} P_{\Omega_i} U^{-*}P_\W  Z_{i-1},Z_{i-1} \rangle\\
&\leq q_{i}^{-1} \|q_{i}^{-1} P_\W U^{-1} P_{\Omega_i} U^{-*}P_\W  Z_{i-1}\|_\H \|Z_{i-1}\|_\H \\
&\leq 2\k_1q_{i}^{-1}\|Z_{i-1}\|^2_\H\\
 &\leq 2 \k_1\k_2|\Delta|  q_{i}^{-1} \prod_{j=1}^{i-1} \alpha_j^2.
\end{align*}
 Since $\rho' = \sum_{i=1}^l \rho'_i$ we find
\[
\|\rho'\|_2 \leq \sqrt{2 \k_1 \k_2 |\Delta|} \sum_{i=1}^l q_i^{-1/2}  \prod_{j=1}^{i-1} \alpha_j.
\]
We next choose the parameters $l$, $\alpha_i$ and $\beta_i$ in a suitable way  to show that \ref{eq:dual4}, \ref{eq:dual5} and \ref{eq:dual6} in Proposition~\ref{prop:dual} are satisfied. Letting
\[
l = \left\lceil  \log_2 (\k_1\sqrt{|\Delta|\k_2})+2  \right\rceil,\quad \alpha_1 = \alpha_2 = \frac{1}{4\sqrt{\sqrt{\k_2}\log  (|\Delta|\k_1^2\k_2)}},\quad \beta_1=\beta_2 = \frac{1}{7\sqrt{|\Delta|\k_2}}
\]
and for $i\geq 3$
\begin{align*}
\alpha_i = \frac{1}{2}, \quad \quad \quad \quad \quad\beta_i = \frac{4\log ( |\Delta|\k_1^2\k_2)}{7\sqrt{|\Delta|}},
\end{align*}
from the above estimates we readily derive
\[
\Vert D^\ast\sgn(P_\Delta D g_0)-P_{\W} \rho \Vert_\H \leq \frac{1}{16\k_1\sqrt{\k_2}},\quad \Vert P_\Delta^{\perp} D^{-*} P_{\W}^\perp \rho \Vert_\infty \leq \frac{1}{4},\quad \|\rho' \|_2 \leq \sqrt{|\Delta| \k_1\k_2} \L,
\]
where $\L =\sqrt{2}\sum_{i=1}^l q_{i}^{-1/2} \prod_{j=1}^{i-1} \alpha_j$ will be estimated at the end of the proof. 

Next, we need to establish that the total number of sampled $\Omega_i^j$ remains small with high probability. More precisely, we will bound the probability
$$
p_4 = \P \left( (r_1 > 1) \quad \text{or} \quad (r_2 >1)\quad \text{or} \quad \sum_{i=1}^l r_i > l' \right)
$$
for some $l'$ to be chosen later. To that end, denote  the probability that \eqref{c1} fails in the i-th step by $p_1(i)$,  the probability of failure for \eqref{c2} by $p_2(i)$ and  the probability of failure for \eqref{c3} by $p_3(i)$. We want to use Propositions~\ref{Proposition9.2}, ~\ref{prop9.1} and~\ref{prop:new4} to bound these probabilities. 
Proposition~\ref{Proposition9.2} for $t = \alpha_i$ gives the estimate
\begin{align*}
p_1(i) \leq \exp\left( \frac{-\alpha_i^2 q_i}{64|\Delta|\mu^2 \eta_\Delta^2 \k_1} + \frac{1}{4} \right).
\end{align*}
Thus if
\begin{equation*}
q_i \geq \frac{64 \mu^2 |\Delta| \eta_\Delta^2 \k_1}{\alpha_i^2}(\omega + \log\gamma+\frac 1 4), 
\end{equation*}
then $p_1(i) \leq \frac{1}{\gamma}e^{-\omega}$, where $\gamma>0$ will be chosen later. Similarly,  Proposition~\ref{prop9.1} for $t = \beta_i$ yields 
\begin{align*}
p_2(i) \leq {2}\tilde M({q_i}\beta_i/2)\exp\left(\frac{-\beta_i^2 q_i}{8\k_1\mu^2 B_\Delta (B_\Delta+\eta_\Delta\sqrt{2|\Delta|}\beta_i/6)} \right). 
\end{align*}
Thus if
\begin{align*}
q_i \geq \frac{8 \mu^2\k_1 B_\Delta (B_\Delta+\eta_\Delta \sqrt{2|\Delta|}\beta_i/6)}{\beta_i^2}(\omega + \log({2}\tilde M({q_i}\beta_i/2) \gamma) ),
\end{align*}
then $p_2(i) \leq \frac{1}{\gamma}e^{-\omega}$. Finally,  Proposition~\ref{prop:new4}  yields 
\begin{align*}
p_3(i) \leq \exp\left( \frac{-\k_1 q_i}{16 \mu^2 \eta_\Delta^2 |\Delta|}+\frac 1 4\right). 
\end{align*}
Thus if
\begin{align*}
q_i \geq 16 \mu^2 |\Delta| \eta_\Delta^2(\omega + \log\gamma+\frac 1 4), 
\end{align*}
then $p_3(i) \leq \frac{1}{\gamma}e^{-\omega}$. Choose $\gamma=9$.
Assume $q_i$ are chosen as follows:
\begin{align}\label{cond:q12}
q_1=q_2 &\geq c\mu^2\k_1\k_2 \eta_{\Delta}^2 |\Delta| B_{\Delta}^2 \omega\log \left( |\Delta|\k_1^2\k_2\tilde M({q_1}\beta_1/2)\right),\\ 
\label{cond:q3}
q_i &\ge c\mu^2 \k_1\k_2 \eta_{\Delta}^2 |\Delta| B_{\Delta}^2 \omega\frac{\log \tilde M({q_i}\beta_i/2)}{\log  (|\Delta|\k_1^2\k_2)}, \; \quad i\ge 3,	
\end{align}
where $c\ge 100$ is an absolute constant sufficiently large so that $p_1(i),p_2(i), p_3(i) \leq \frac 1 9 e^{-\omega} \leq {\frac{1}{24}}$ for $i\geq 1$ (we use that $\k_2,B_\Delta,\eta_\Delta\ge 1$, $M\ge 3$ and $ |\Delta|\k_1^2\k_2\ge 3$). In particular, we obtain
\[
\P(\text{\eqref{c1}, \eqref{c2} and \eqref{c3} are satisfied})\ge 7/8, \qquad i\ge 1.
\]
As a consequence, since $\sum_{i=1}^l r_i > l'$ if and only if fewer than $l$ of the first $l'$ samplings satisfied \eqref{c1}, \eqref{c2} and \eqref{c3}, we have
\[
\P \left(\sum\limits_{i=1}^l r_i > l' \right) \le \P (X < l) = \P (X \le l-1), \qquad  X \sim Bin \left(l', {\frac78} \right),
\]
(see equation (45) in \cite{2015-gross-krahmer-kueng}). Thus we need to bound the probability of obtaining less than $l$ outcomes in a binomial process with $l'$ repetitions and individual success probability ${\frac78}$. Following \cite{G,KG} we bound this quantity using a standard concentration bound from \cite{MC}
$$
\P \left( Bin (n, p) - np  \leq -\tau \right) \leq e^{-2 \tau^2 /n} ,
$$
which implies
\begin{align*}
\P \left(\sum\limits_{i=1}^l r_i > l' \right) \leq \exp\left( \frac{-2\left({\frac78}l' - l +1\right)^2}{l'}\right).
\end{align*}
Therefore, choosing {$l' = \frac{16}{7}(l-1)+\frac{32}{49}(\omega+\log 9)$}, we get 
\begin{align*}
\P \left(\sum\limits_{i=1}^l r_i > l' \right) \leq \frac{1}{9}e^{-\omega},
\end{align*}
and, as a consequence, we obtain
\[
 p_4 \le {p_1(1)+p_2(1)+p_3(1)+p_1(2)+p_2(2)+p_3(2)}+\frac{1}{9}e^{-\omega} \le \frac{7}{9}e^{-\omega}.
\]

Let us now consider property \ref{eq:dual1} of Proposition~\ref{prop:dual}. Our aim is to show that
$$
p_{5} = \P \left(\left\Vert\theta^{-1}P_{\W}U^{-1}P_{\Omega}UP_{\W} - P_{\W} \right\Vert > \frac{1}{2} \right) \leq \frac{2}{27}e^{-\omega}.
$$
From Proposition~\ref{Theorem9.3} we immediately obtain that if $\theta$ satisfies 
\begin{equation}\label{cond:theta1}
\theta \geq 70\mu^2 \eta_\Delta^2 |\Delta| \k_1 (\omega+\log |\Delta|+ \log 54),
\end{equation}
then $p_{5} \leq \frac{2}{27} e^{-\omega}$.

Now, let $p_{6}$ be the probability that property \ref{eq:dual3} of Proposition~\ref{prop:dual} fails. We want to show that 
\[
p_{6}=\P\left(\sup_{j \in \Delta^c} \Vert \theta^{-1}P_{\{j\}} D^{-*} P_\W^\perp U^{*}P_{\Omega}U P_{\W}^{\perp} D^{-1}P_{\{j\}} \Vert > 2 \k_1 \k_2 \right) \leq \frac{2}{27}e^{-\omega}.
\]
By Proposition~\ref{prop:unifoff}, if $\theta$ satisfies 
\begin{equation}\label{cond:theta2}
\theta \geq 3 B_\Delta^2 \mu^2  (\omega+\log \tilde M(\theta)  +\log 27),   
\end{equation}
we have $p_{6} \leq \frac{2}{27}  e^{-\omega}$.

Now, let $p_7$ be the probability that property \ref{eq:dual2} of Proposition~\ref{prop:dual} fails, namely
\[
p_7=\P\left(\|\theta^{-1}P_\W U^{-1} P_\Omega U^{-*} P_\W \|>2\k_1\right).
\]
By Proposition~\ref{prop:new},  if $\theta$ satisfies  \eqref{cond:theta1}, then in particular  $p_7\leq \frac{2}{27}  e^{-\omega}$.

In order to finish the proof we need to give a bound on $m$ (or, equivalently, $\theta$) and construct $q_i$ such that conditions \eqref{cond:q12}, \eqref{cond:q3}, \eqref{cond:theta1} and \eqref{cond:theta2} are satisfied.

Let $\theta$ satisfy
\begin{equation}\label{eq:theta}
\theta \geq 9c \mu^2 \eta_\Delta^2|\Delta| \k_1\k_2 \omega^2 B_\Delta^2 \log\left( |\Delta|\k_1^2\k_2 \tilde M \left(  \frac{\beta_1 \theta}{18\omega}  \right)\right).
\end{equation}
Then conditions \eqref{cond:theta1} and \eqref{cond:theta2} are clearly satisfied (using $   \tilde M \bigl(  \frac{\beta_1 \theta}{18\omega}  \bigr) \ge \tilde M(\theta)$ and $9c\ge 900$). Now recall that at each iteration $i$ we sampled $r_i$ sets $\Omega_i^j \sim Ber(q_i)$ and we stopped after $\sum_{i =1}^l r_i \le l'$ sampling. Following the same arguments as in \cite[Section 9.1]{AH}, since
\[
 \Omega=\bigcup_{i=1}^l \bigcup_{j=1}^{r_i} \Omega_i^j,  \qquad \Omega \sim Ber(\theta), \qquad \Omega_i^j \sim Ber(q_i),
\]
we have the identity $\Pi_{i=1}^l (1-q_i)^{r_i} = 1-\theta$, which yields the constraint
\begin{equation}\label{eq:constraint}
\sum_{i=1}^l r_i q_i \geq \theta.
\end{equation}
Define
\[
q_1=q_2 = \frac{\theta}{9},\qquad q =   q_i = 1-\left(\frac{1-\theta}{(1-q_1)(1-q_2)}\right)^{\frac{1}{r_3+\cdots+r_l}},\;i\ge 3.
\]
By \eqref{eq:theta}, condition \eqref{cond:q12} is satisfied (using also $\tilde M \bigl(  \frac{\beta_1 \theta}{18\omega}  \bigr) =  \tilde M(\frac{\beta_1 q_1}{2\omega} ) \geq  \tilde M(\beta_1 q_1/2)$).  By \eqref{eq:constraint}, since $r_1 = r_2 =1$ and $l'\ge \sum_i r_i$, we have
\[
 (l'-2)q \ge\sum_{i=3}^l r_i q_i \ge 
  \theta-2q_1= 
 \frac{7}{9}\theta.
 \]
As a consequence, since  
\[
\begin{split}
l' -2&=  {\frac{16}{7}}\lceil \log_2 \bigl(\k_1\sqrt{|\Delta|\k_2}\bigr) +1 \rceil+{\frac{32}{49}}(\omega+\log 9)-2 \\
 & \le {\frac{16}{7}} \log_2\bigl(\k_1\sqrt{|\Delta|\k_2}\bigr)  +{\frac{32}{7}} +{\frac{32}{49}}(\omega+\log 9)-2 \\ 
  & = {\frac{8}{7}}\log_2 e \log  (|\Delta|\k_1^2\k_2) +{\frac{18}{7}} +{\frac{32}{49}}(\omega+\log 9)\\
  &\leq 7\log(|\Delta|\k_1^2\k_2)\omega,
\end{split}
\]
by using \eqref{eq:theta} it is straightforward to check that condition \eqref{cond:q3} is satisfied as well. Here we have also used the fact that $\beta_i q_i /2  \geq \frac{\beta_1 \theta}{18\omega}  $ for $i \geq 3$.

We can now estimate the constant $\L =\sqrt{2}\sum_{i=1}^l q_{i}^{-\frac12} \prod_{j=1}^{i-1} \alpha_j$. We have:
\begin{equation*}
\begin{split}
\frac{\L}{\sqrt{2}} &= q_1^{-\frac12} + q_2^{- \frac12  }\alpha_1+\sum_{i=3}^l q^{-\frac12}\alpha_1 \alpha_2 \prod_{j=3}^{i-1} \alpha_j\\
& = q_1^{-\frac12} \left(1+\frac{1}{\sqrt{16\sqrt{\k_2} \log(|\Delta|\k_1^2\k_2)}}\right)+\frac{q^{-\frac12}}{16\sqrt{\k_2} \log(|\Delta|\k_1^2\k_2)}\sum_{i=3}^l\frac{1}{2^{i-3}}\\
&\leq \frac54 q_1^{-\frac12}+\frac{q^{-\frac12}}{8 \log(|\Delta|\k_1^2\k_2)} \\
&\leq C_1 \theta^{-\frac12} + C_2\theta^{-\frac12}\frac{(l'-2)^{\frac12}}{ \log(|\Delta|\k_1^2\k_2)} \\
&  \leq C_3\theta^{-\frac12} \omega^{\frac12},
\end{split}
\end{equation*}
where we have used the fact that $|\Delta| \geq 3$, the definition of $q_1,q_2$ and the inequalities above involving $q$, $\theta$ and  $l'-2$ (here $C_1$, $C_2$ and $C_3$ are universal constants).

Finally, the union bound  gives $p_4 + p_5 +p_6+p_7 \leq e^{-\omega}$, which finishes the proof of the proposition (note that the factor $|\Delta|$ in the logarithm may be removed since $\tilde M\bigl( \frac{\beta_1 \theta}{18\omega}  \bigr)\ge M\ge|\Delta|$).
\end{proof}

\subsection{Proof of Theorem~\ref{thm:main}}
The proof is now immediate. By Proposition~\ref{prop:golfing}, under our assumptions with high probability there exists a dual certificate. Thus, by Proposition~\ref{prop:dual} we have
\[
\Vert g - g_0 \Vert_\H \leq 20\k_1\sqrt{\k_2} \|P_{\Delta}^\perp D g_0\|_1+  \epsilon \sqrt{\k_1} \sqrt{\frac{N}{m}}\left( 10+20C'''\k_1 \k_2\sqrt{\omega s} \right)
\]
for every $\Delta\subseteq\{1,\dots,M\}$ such that $|\Delta|=s\ge 3$. Observing that
\[
\begin{split}
\sigma_{s,M}(Dg_0)&=\inf\{\norm{x-Dg_0}_1: \supp(x)\subseteq\{1,\dots,M\},\, |\supp(x)|\le s\}\\
&=\inf\{\norm{x-Dg_0}_1: \supp(x)\subseteq\Delta\subseteq\{1,\dots,M\},\, |\Delta|= s\}\\
&=\inf\{\norm{x-P_\Delta Dg_0}_1+\norm{P_\Delta^\perp Dg_0}_1: \supp(x)\subseteq\Delta\subseteq\{1,\dots,M\},\, |\Delta|= s\}\\
&=\inf\{\norm{P_\Delta^\perp Dg_0}_1: \Delta\subseteq\{1,\dots,M\},\, |\Delta|= s\},
\end{split}
\]
and that
\[
 10+20C'''\k_1 \k_2\sqrt{\omega s} \le C'' \k_1\k_2\sqrt{\omega s} 
\]
for some absolute constant $C''>0$, gives the desired estimate.

\section*{Acknowledgments}
The authors express their sincere thanks to Miren Zubeldia for having largely contributed to this project with many helpful comments and useful insights. They also thank Maximilian M\"arz for suggesting an important simplification in the minimization problem.  M.S.\ would like to thank Eric Bonnetier for stimulating discussions about electrical impedance tomography and compressed sensing that have influenced the present work.  The first conversations about this project took place during the trimester \textit{Program on Inverse Problems}, held at the Institut Henri Poincar\'e of Paris in 2015: it is a pleasure to thank the institute and the organizers of the program for the wonderful opportunity offered. G.S.A.\ acknowledges support from the ETH Z\"urich Postdoctoral Fellowship Program as well as from the Marie Curie Actions for People COFUND Program.

\appendix

\section{On samplings with or without replacement}

Take $m,N,s_1,\dots,s_N\in \N$ with $m\le N$. We select $m$ elements from $\{1,\dots,N\}$ at random (with possible repetitions) following these two distributions.
\begin{enumerate}
\item For $\u\in\N$, let $\Omega'$ be chosen uniformly at random as a subset of cardinality $m$ (namely, without replacement) of
\[
\biggl\{ \underbrace{1, \ldots, 1}_{\u s_1 \text{ times}}, \ldots, \underbrace{N, \ldots, N}_{\u s_N \text{ times}}\biggr\}.
\]
Letting $n'_l$ denote the number of $l$'s in $\Omega'$ for $l=1,\dots,N$, we have
\begin{equation}\label{eq:hyper}
\P(n'_1=k_1,\dots,n'_N=k_N)=\frac{\binom{\u s_1}{k_1}\cdots\binom{\u s_N}{k_N}}{\binom{\u(s_1+\dots+s_N)}{m}},\qquad k_1+\dots+k_N=m,
\end{equation}
according to the multivariate hypergeometric distribution.
\item The set $\Omega$ is formed by selecting $m$ elements from $\{1,\dots,N\}$ with replacement following the variable density sampling
\[
p_l=\frac{s_l}{s_1+\dots+s_N},\qquad l=1,\dots,N.
\]
This corresponds to the multinomial distribution, and we have
\begin{equation}\label{eq:multi}
\P(n_1=k_1,\dots,n_N=k_N)=m!\prod_{l=1}^N \frac{p_l^{k_l}}{k_l!},  \qquad k_1+\dots+k_N=m,
\end{equation}
where $n_l$ denotes the number of $l$'s in $\Omega$ for $l=1,\dots,N$.
\end{enumerate}
A classical result in probability theory states that as the number of objects tends to infinity, the multivariate hypergeometric distribution converges to the multinomial distribution. We need the following quantitative version.
\begin{lemma}\label{lem:ernesto}
Assume that $\u\ge 2m^2$. For every $k_1,\dots,k_N\in\N\cup\{0\}$ such that $k_1+\dots+k_N=m$ we have
\[
\P(n'_1=k_1,\dots,n'_N=k_N)\ge \frac12\,\P(n_1=k_1,\dots,n_N=k_N).
\]
\end{lemma}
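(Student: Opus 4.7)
The plan is to show directly that the ratio $R := \P(n'_1=k_1,\dots,n'_N=k_N)/\P(n_1=k_1,\dots,n_N=k_N)$ is at least $1/2$ by rewriting both probabilities using the formulas \eqref{eq:hyper} and \eqref{eq:multi}, simplifying the ratio into a product of factors close to $1$, and then applying the elementary Weierstrass inequality $\prod_i(1-x_i)\ge 1-\sum_i x_i$ for $x_i\in[0,1]$. There is no probabilistic subtlety: everything reduces to one clean algebraic computation followed by a first-order Taylor-type estimate, controlled by the assumption $\u\ge 2m^2$.

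First I would write binomials as falling factorials, $\binom{n}{k}=\prod_{i=0}^{k-1}(n-i)/k!$, so that the common $\frac{m!}{\prod_l k_l!}$ factors from the hypergeometric and multinomial expressions cancel. Setting $S:=s_1+\dots+s_N$ and using $\sum_l k_l=m$, one finds
\[
R=\frac{\prod_l\prod_{i=0}^{k_l-1}\bigl(\u s_l-i\bigr)}{\prod_{i=0}^{m-1}(\u S-i)\cdot\prod_l p_l^{k_l}}
 =\frac{\prod_l\prod_{i=0}^{k_l-1}\bigl(1-\tfrac{i}{\u s_l}\bigr)}{\prod_{i=0}^{m-1}\bigl(1-\tfrac{i}{\u S}\bigr)},
\]
because the leading terms $\prod_l(\u s_l)^{k_l}=\u^m\prod_l s_l^{k_l}$ match $(\u S)^m\prod_l p_l^{k_l}=\u^m\prod_l s_l^{k_l}$. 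Note that whenever $P(\text{hyp})\ne 0$ one has $k_l\le m\le \u s_l$ (using $\u\ge 2m^2\ge m$ and $s_l\ge 1$), so all factors in the numerator are nonnegative.

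Next I would observe that the denominator is a product of terms in $[0,1]$, hence bounded by $1$, so $R$ is at least the numerator. Since each $i/(\u s_l)\le (m-1)/(\u\cdot 1)\le 1/(2m)\le 1/2$, the Weierstrass inequality gives
\[
\prod_l\prod_{i=0}^{k_l-1}\Bigl(1-\tfrac{i}{\u s_l}\Bigr)\ge 1-\sum_l\sum_{i=0}^{k_l-1}\tfrac{i}{\u s_l}
=1-\sum_l\frac{k_l(k_l-1)}{2\u s_l}\ge 1-\frac{1}{2\u}\sum_l k_l^2\ge 1-\frac{m^2}{2\u},
\]
using $s_l\ge 1$ and $\sum_l k_l^2\le(\sum_l k_l)^2=m^2$. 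For $\u\ge 2m^2$ this lower bound is at least $3/4$, so $R\ge 3/4\ge 1/2$, as required.

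There is essentially no hard step; the only thing to be careful about is verifying that the $\u^m$, $S^m$ and $\prod_l s_l^{k_l}$ factors cancel correctly when converting between hypergeometric and multinomial forms, and that the Weierstrass inequality applies (which requires the hypothesis $\u\ge 2m^2$ to force $i/(\u s_l)\le 1$).
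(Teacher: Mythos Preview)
Your proof is correct and follows essentially the same route as the paper: both compute the ratio $R$ exactly, bound the denominator by $1$, and then bound the numerator product from below. The only difference is the final elementary inequality---the paper replaces each factor by $(1-m/\u)$ and uses $(1-1/(2m))^m\ge 1/2$, whereas you use the Weierstrass product inequality to get $R\ge 1-m^2/(2\u)\ge 3/4$, which is in fact slightly sharper.
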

\begin{proof}
For $l=1,\dots,N$ we have
\begin{align*}
\binom{\u s_l}{k_l}&=\frac{\u^{k_l}s_l^{k_l}}{k_l!}\left(1-\frac{1}{\u s_l}\right)\cdots \left(1-\frac{k_l-1}{\u s_l}\right),\\
\binom{\u s}{m}&=\frac{\u^{m}s^{m}}{m!}\left(1-\frac{1}{\u s}\right)\cdots \left(1-\frac{m-1}{\u s}\right),
\end{align*}
where $s=s_1+\dots+s_N$.
In view of \eqref{eq:hyper} we obtain
\[
\P(n'_1=k_1,\dots,n'_N=k_N)=\frac{\displaystyle{m!\prod_{l=1}^N\frac{s_l^{k_l}}{k_l!}}\prod_{t=1}^{k_l-1}\left(1-\frac{t}{\u s_l}\right)}{\displaystyle{s^m\prod_{t=1}^{m-1}\left(1-\frac{t}{\u s}\right)}},
\]
where the product $\prod_{t=1}^{k_l-1}$ is equal to $1$ when $k_l=0$ or $k_l=1$.
Thus, identity \eqref{eq:multi} yields
\[
\frac{\P(n'_1=k_1,\dots,n'_N=k_N)}{\P(n_1=k_1,\dots,n_N=k_N)}=\frac{\displaystyle{\prod_{l=1}^N}\prod_{t=1}^{k_l-1}\left(1-\frac{t}{\u s_l}\right)}{\displaystyle{\prod_{t=1}^{m-1}\left(1-\frac{t}{\u s}\right)}}.
\]
Using that the denominator is smaller than $1$ and that
\[
 \prod_{l=1}^N\prod_{t=1}^{k_l-1}\left(1-\frac{t}{\u s_l}\right) \ge  \prod_{l=1}^N\prod_{t=1}^{k_l-1}\left(1-\frac{m}{\u }\right) 
 \ge  \prod_{l=1}^N \left(1-\frac{m}{\u }\right)^{k_l}=\left(1-\frac{m}{\u }\right)^{m},
\]
since $\u\ge 2m^2$ we obtain
\[
\frac{\P(n'_1=k_1,\dots,n'_N=k_N)}{\P(n_1=k_1,\dots,n_N=k_N)} \ge \left(1-\frac{1}{2m }\right)^{m}\ge \frac12,
\]
as desired.
\end{proof}

\bibliography{CS}
\bibliographystyle{plain}

\end{document}